\documentclass[journal]{IEEEtran}
\usepackage[latin1]{inputenc}
\usepackage{amsmath}
\usepackage{graphicx}
\usepackage{amssymb}
\usepackage{xcolor}
\makeatletter

\newtheorem{lemma}{Lemma}

\providecommand{\LyX}{L\kern-.1667em\lower.25em\hbox{Y}\kern-.125emX\@}

\def\BibTeX{{\rm B\kern-.05em{\sc i\kern-.025em b}\kern-.08em
    T\kern-.1667em\lower.7ex\hbox{E}\kern-.125emX}}

\makeatother
\begin{document}
\def\ZZ{{\mathbb Z}}
\def\RR{{\Bbb R}}
\def\NN{{\mathbb N}}
\def\CC{{\mathbb C}}

\title{Time-Delay Compensators for Linear Systems with Delayed Output Measurements}

\author{H. Trinh, P. T. Nam, T. N. Nguyen\thanks{{\em}}
\thanks{Hieu Trinh is with the School of Engineering, Deakin University, Geelong, 3217, Australia (e-mail:
hieu.trinh@deakin.edu.au).}
\thanks{Phan Thanh Nam and Tran Ngoc Nguyen are with the Department of Mathematics, Quy Nhon University, Vietnam (email: phanthanhnam@qnu.edu.vn, tranngocnguyen@qnu.edu.vn)}}

\maketitle \maketitle \maketitle \thispagestyle{plain}
\pagestyle{plain}

\begin{abstract}
This paper provides a comprehensive framework for designing functional observers for linear systems subject to delayed output measurements.  Moving beyond traditional methodologies, the proposed observer generates an estimate $\hat{z}(t)$ that predicts the current state functional $z(t)=Fx(t)$ using delayed data. By neutralizing sensing latency, the observer serves as a potent time-delay compensator, effectively expanding the practical utility of functional observer theory. The proposed observer architecture offers greater robustness and versatility than traditional Luenberger-type observers by leveraging multiple delayed components to preserve accuracy despite latency. A key contribution of this work is a novel method for extending the maximum allowable measurement delay while maintaining the asymptotic stability of the estimation-error system. Existence conditions are established together with constructive synthesis procedures. Extensive numerical examples are given to illustrate the proposed theory.

\end{abstract}

\begin{keywords}
Time-delay compensators, delayed measurements, functional observers, stability and stabilization of time-delay systems, Linear matrix inequalities.
\end{keywords}
\section{System Description and Problem Statement}
\label{ch1sec1}
We consider the following linear system
\begin{align}
	\label{c1.1}
	\dot{x}(t)=Ax(t)+Bu(t),
\end{align}
where $x(t)\in \mathbb{R}^n$ is the state vector, $u(t)\in \mathbb{R}^r$ is the control input vector. Matrices
$A\in\mathbb{R}^{n\times n}$ and $B\in\mathbb{R}^{n\times r}$ are constant. 

Due to sensing or communication constraints, the measured output vector is available with delay and is modeled as
\begin{align}
	\label{c1.2}
	y(t) = C_\tau x(t-\tau),
\end{align}
where  $\tau>0$ is the time delay, $y(t)\in\mathbb{R}^{p}$ with $0<p\le n$, and
$C_\tau\in\mathbb{R}^{p\times n}$ is a constant matrix of full row rank.

We are interested in estimating the following functional
\begin{equation}\label{c1.3}
	z(t)=Fx(t),
\end{equation}
where $z(t)\in\mathbb{R}^m$ is a linear function of the instantaneous state vector, $x(t)$, and $F\in\mathbb{R}^{m\times n}$ is full row rank. The full row rank assumption ensures that the rows of $F$ are linearly independent, so that the components of $z(t)$ represent independent linear functionals of the state vector. This entails no loss of generality, since any linearly dependent rows can be removed without altering the functional subspace spanned by $F$.

In the case $F=I_n$ and thus $z(t)=x(t)$, the estimation problem reduces to recovering the instantaneous state vector $x(t)$ from delayed measurements $y(t)=C_{\tau}x(t-\tau)$. This formulation generalizes the Luenberger state observer \cite{luen1} under the condition of delayed outputs.

To estimate functional (\ref{c1.3}), we first consider the following observer with an internal delay
\begin{align}
	\label{c1.4a}
	\hat{z}(t)&=w(t)+My(t),\\
	\label{c1.4b}
	\dot{w}(t)&=Nw(t)+N_{\tau}w(t-\tau)+Gy(t)+G_{\tau}y(t-\tau)\nonumber\\&+Ju(t)+J_{\tau}u(t-\tau),
\end{align} 
with $w(\theta)=\rho(\theta),\ \forall \theta \in[-\tau,0]$ is the initial condition,  $w(t) \in \mathbb{R}^m$, and $\hat{z}(t)$ is the estimate of $z(t)$. The matrices $M$, $N$, $N_{\tau}$, $G$, $G_{\tau}$, $J$, and $J_{\tau}$ are to be determined so that
$\hat{z}(t)\to z(t)$ asymptotically.

\textit{\textbf{Remark 1:}} The observer described in equations (\ref{c1.4a})-(\ref{c1.4b}) incorporates three distinct delayed components: An internal delay term $N_{\tau}w(t-\tau)$, delayed output measurements $G_{\tau}y(t-\tau)$ and delayed control inputs $J_{\tau}u(t-\tau)$. These terms are essential for effectively compensating for time delays within the output measurements. It is worth noting that this architecture is more robust and versatile than classical Luenberger state observers \cite{luen1} which assumes a delay-free output vector where $\tau=0$.

Figure (\ref{figc1.1}) shows a block diagram of the implementation of observer (\ref{c1.4a})-(\ref{c1.4b}) using delayed output vector $y(t)=C_{\tau}x(t-\tau)$. Observe that the output of the observer produces $\hat{z}(t)$ which predicts the current value of $Fx(t)$ using delayed measurements. Hence, the observer effectively compensates for the measurement delay
and may therefore be interpreted as a \textit{time-delay compensator}.
\begin{figure}[!h]
	\centering
	\includegraphics[width=\linewidth]{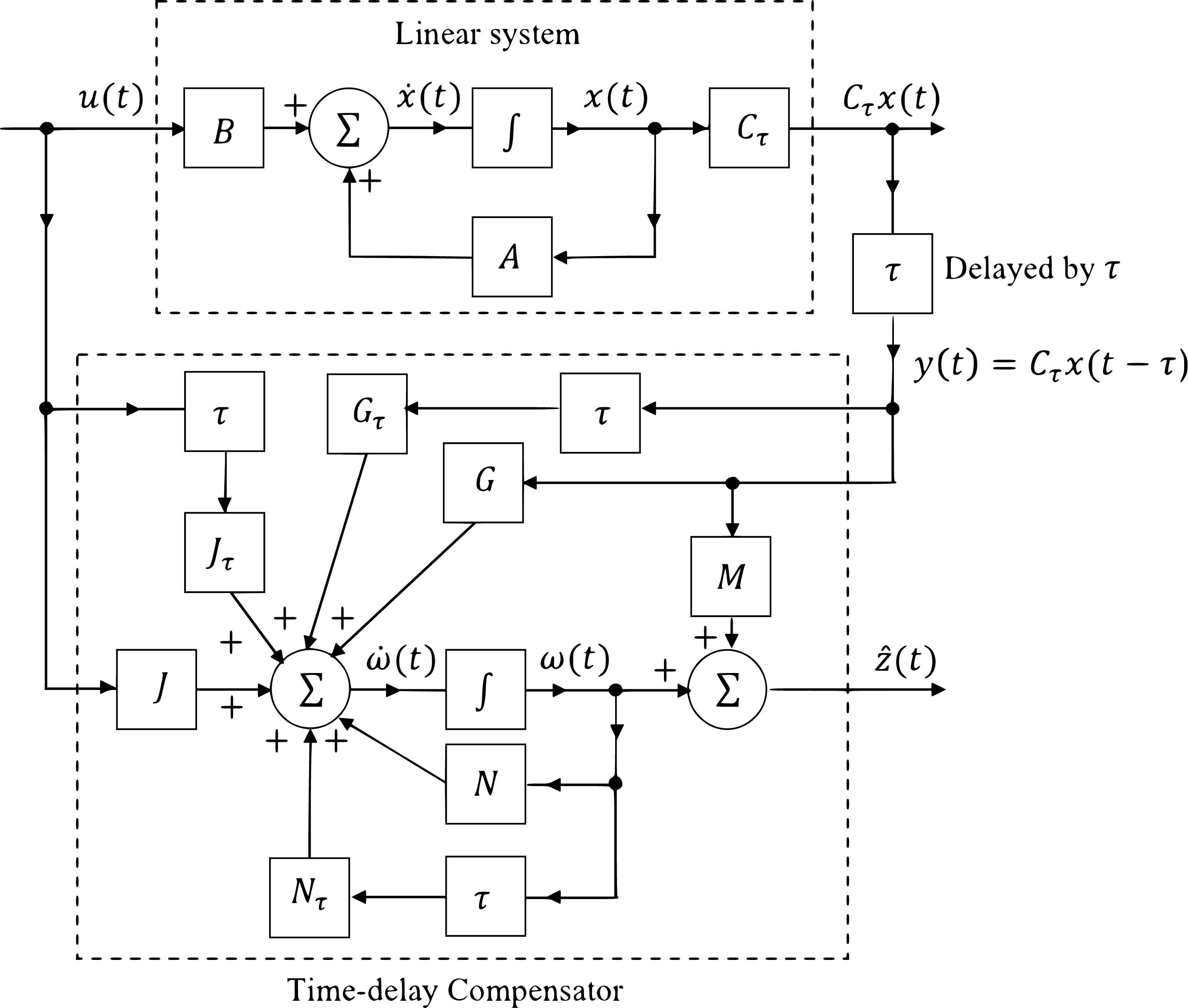}
	\caption{Schematic diagram of an observer acting as a time-delay compensator}
	\label{figc1.1}
\end{figure}

\textit{\textbf{Notation:}}
For a matrix $F$, $F^{\mathsf T}$ denotes its transpose, $F^{-}$ denotes a generalized inverse satisfying $FF^{-}F=F$, and $\mathrm{rank}(F)$ denotes the rank of $F$. The symbols $\mathbb R$ and $\mathbb C$ denote the sets of real and complex numbers.
For $\lambda\in\mathbb C$, $\Re(\lambda)$ denotes its real part. For brevity of notation, $\mathrm{sym}(F):=F+F^{\mathsf T}$.

A matrix $P \in\mathbb{R}^{n\times n}$ is called positive definite 
($P>0$) if $P=P^{\mathsf T}$ and $x^{\mathsf T}Px>0$ for all 
$x\neq \mathbf 0$. For a symmetric matrix $P=\begin{pmatrix} P_{11} &P_{12}\\ \star &P_{22}\end{pmatrix}$,
the symbol $\star$
denotes the transpose of the corresponding upper-right block.

For a square matrix $F$, $\sigma(F)$ denotes the set of all eigenvalues of $F$, and
\[
\sigma_{\min}(F)
=
\min\{\Re(\lambda):\lambda\in\sigma(F)\}.
\]

The identity matrix is denoted by $I$ when its dimension is clear 
from the context; otherwise it is specified using a subscript. 
The symbol $\mathbf 0$ denotes a zero matrix or zero vector; its 
dimension is inferred from the context and specified by a 
subscript only when necessary. The scalar zero is denoted by $0$.

For $G\in\mathbb{R}^{k\times n}$, we denote by 
$\operatorname{row}(G)\subseteq\mathbb{R}^n$ the subspace spanned 
by the rows of $G$.

\section{Observer Existence Conditions}
\label{ch1sec2}
We derive existence conditions of observer (\ref{c1.4a})-(\ref{c1.4b}). Defining the estimation error vector $e(t)=\hat z(t)-z(t)$, the error dynamics are given by
\begin{align}
	\label{c1.5}
	\dot{e}(t)&=\dot{w}(t)+M\dot{y}(t)-F\dot{x}(t)\nonumber\\ \quad &=Ne(t)+N_{\tau}e(t-\tau)+\mathcal{C}_{1}u(t)+\mathcal{C}_{2}u(t-\tau)\nonumber\\&+\mathcal{C}_{3}x(t)+\mathcal{C}_{4}x(t-\tau)+\mathcal{C}_{5}x(t-2\tau),
\end{align}
where\\ 
$\mathcal{C}_1 =J-FB,\quad \mathcal{C}_2 = J_{\tau}+MC_{\tau}B, \quad\mathcal{C}_3 =NF-FA$, \quad
$\mathcal{C}_4 = N_{\tau}F+\bar{G}C_{\tau}+MC_{\tau}A$, \quad $\mathcal{C}_5 = \bar{G}_{\tau}C_{\tau},$ \quad $\bar{G}:=G-NM$, \quad $\bar{G}_{\tau}:=G_{\tau}-N_{\tau}M$, \quad and \quad
$\mathcal C
:=
\begin{pmatrix}
	\mathcal C_1 &
	\mathcal C_2 &
	\mathcal C_3 &
	\mathcal C_4 &
	\mathcal C_5
\end{pmatrix}.
$

The following theorem provides conditions for the existence of observer (\ref{c1.4a})-(\ref{c1.4b}).

\textit{\textbf{Theorem 1:}}
	\textit{Observer (\ref{c1.4a})-(\ref{c1.4b}) provides asymptotic estimation of the functional $z(t)=Fx(t)$ and yields
	estimation error dynamics that are decoupled from the plant state $x(\cdot)$ and the input $u(\cdot)$
	if $\mathcal C=\bf 0$ and the following delay-dependent error dynamics
	\begin{align}
		\label{c1.6}&\dot{e}(t)=Ne(t)+N_{\tau}e(t-\tau)
	\end{align}
	is asymptotically stable. In this case, the estimation error satisfies
	\[
	e(t)=\hat z(t)-z(t)\to {\bf 0} \quad \text{as} \quad t\to\infty
	\]
	for all admissible initial conditions and inputs $u(\cdot)$}.

\textit{\textbf{Proof:}} If $\mathcal C=\bf 0$, then \eqref{c1.5} reduces to \eqref{c1.6}, so the error dynamics are
	decoupled from $x(\cdot)$ and $u(\cdot)$. If, in addition, \eqref{c1.6} is asymptotically stable, then $e(t)\to \bf 0$ as $t\to\infty$ for all admissible initial conditions and inputs $u(\cdot)$. This completes the proof.

\section{Determination of the Observer Parameters}
\label{ch1sec3}
In this section, we derive observer parameters $M$, $N$, $N_{\tau}$, $G$, $G_{\tau}$, $J$ and $J_{\tau}$ to satisfy the conditions presented in Theorem 1.

Let us first consider $\mathcal C=\bf 0$. For this, $\mathcal C_1=\bf 0$ and $\mathcal C_2=\bf 0$ are satisfied by the following direct
choices of $J$ and $J_{\tau}$, respectively,
\begin{align}
	\label{c1.7}
	&J = FB, \quad J_{\tau}=-MC_{\tau}B.
\end{align}

Since $C_{\tau}$ is full row rank, $\mathcal C_5=\bf 0$ is satisfied iff $\bar{G}_{\tau}=\bf 0$, thus $G_{\tau}$ is obtained as follows
\begin{align}
	\label{c1.8}
	&G_{\tau}=N_{\tau}M.
\end{align}

$\mathcal C_3=\bf 0$ iff
\begin{equation}
	\label{c1.9}
	\mathrm{rank}\begin{pmatrix}FA\\ F\end{pmatrix}=\mathrm{rank}(F).
\end{equation}

Subject to the satisfaction of condition (\ref{c1.9}), we obtain
\begin{equation}
	\label{c1.10}
	N=FAF^-,
\end{equation}
where $F^{-}$ denotes a generalized inverse
satisfying $FF^{-}F=F$.

Therefore, the set of the eigenvalues of $N$ is fixed and belongs to a subset of the eigenvalues of $A$. For this reason, when $N$ is an unstable matrix, we need to employ an observer with an internal delay, $N_{\tau}w(t-\tau)$, to provide asymptotic stability of the error time-delay system (\ref{c1.6}).

Let $X\in \mathbb{R}^{m\times (m+2p)}$ be defined as follows
\begin{align}
	\label{c1.11}
	X:=\begin{pmatrix}N_{\tau} &\bar{G} &M\end{pmatrix}.
\end{align}
Then $\mathcal C_4=\bf 0$ can be expressed as follows
\begin{align}
	\label{c1.12}
	X\Theta=\mathbf{0}_{m\times n},
\end{align}
where
\begin{align}
	\label{c1.13}
	\Theta=\begin{pmatrix}F\\C_{\tau}\\C_{\tau}A\end{pmatrix}\in\mathbb{R}^{(m+2p)\times n}.
\end{align}

According to Lemma 2 (see the Appendix), a non-trivial solution $X\neq \mathbf 0$ exists if and only if there exists an orthogonal basis of the left null-space of $\Theta$.  The general solution to (\ref{c1.12}) is given by
\begin{equation}
	\label{c1.14}
	X=Z(I_{m+2p}-\Theta\Theta^{-}), \quad \Theta\Theta^{-}\neq I_{m+2p},
\end{equation}
where 
$Z\in\mathbb{R}^{m\times (m+2p)}$ is arbitrary.

Given that $(I_{m+2p}-\Theta\Theta^{+})\neq \mathbf 0$, there always exists $Z\neq \mathbf 0$ such that $X\neq \mathbf0$. The free matrix $Z$ reflects the inherent design freedom available in the observer construction. Therefore, we shall exploit this feature to solve the stabilization problem of finding $Z$ to ensure the asymptotic stability of the error time-delay system (\ref{c1.6}). 

Let $\mathcal{I}_1$ denote the first $m$ columns of $I_{m+2p}$. Right-multiplying (\ref{c1.14}) by $\mathcal{I}_1$, we obtain $N_{\tau}$ as follows
\begin{align}
	\label{c1.15a}
	N_{\tau}&=ZN_{\tau_2}, \\
	\label{c1.15b} N_{\tau_2}&=(I_{m+2p}-\Theta\Theta^{-})\mathcal{I}_1.
\end{align}

Therefore, the error time-delay system (\ref{c1.6}) is now expressed as follows
\begin{equation}
	\label{c1.16}
	\dot{e}(t)=Ne(t)+ZN_{\tau_2}e(t-\tau),
\end{equation}
where $N=FAF^-$.

Finally, we solve the stabilization problem by finding $Z$ to ensure the asymptotic stability of (\ref{c1.16}). For further context, the Appendix at the end of this paper provides enhanced stabilizability conditions for time-delay systems formulated as Linear Matrix Inequalities (LMIs). A key challenge arises when $N$ is an unstable matrix; this makes the stabilization problem more stringent. The LMI-based approach presented in the Appendix is specifically designed to accommodate such restrictive conditions. Detailed derivations and the resulting LMIs capable of handling these unstable matrices are provided in the Appendix.

Once asymptotic stability of (\ref{c1.16}) is ensured, $N_{\tau}$, $\bar G$ and $M$
are then obtained directly from the corresponding blocks of $X$ (see, \eqref{c1.11}), and the observer gains $G$, $G_{\tau}$ and $J_{\tau}$ are recovered from
\[ G=\bar{G}+NM, \quad G_{\tau}=N_{\tau}M, \quad J_{\tau}=-MC_{\tau}B. \]

Let us now look at some special cases arose from our solution to the functional observer design problem.

\subsection{Case 1:  $\mathrm{rank}\begin{pmatrix}C_{\tau}\\ C_{\tau}A\end{pmatrix}=n$, $N$ is not Hurwitz.}
\label{ch1subsec2}
This case provides further insights into the stabilization problem of the error time-delay system \eqref{c1.6} subject to the satisfaction of the condition $\mathrm{rank}\begin{pmatrix}C_{\tau}\\ C_{\tau}A\end{pmatrix}=n$. This condition requires that the observability index of the pair $(A,C_{\tau})$ is 2, which implies that the number of the delayed output measurements needs to be sufficiently large enough. 

Let us now express $\mathcal C_4=\bf 0$ as follows
\begin{align}
	\label{c1.19}
	\bar{X}\bar{\Theta}=-N_{\tau}F,
\end{align}
where $$\bar{X}:=\begin{pmatrix}\bar{G}  &M\end{pmatrix}, \quad \bar{\Theta}:= \begin{pmatrix}C_{\tau}\\C_{\tau}A\end{pmatrix}.$$
Since $\mathrm{rank}\begin{pmatrix}C_{\tau}\\ C_{\tau}A\end{pmatrix}=n$, by Lemma 1 (see the Appendix), a solution
$\bar{X}$ always exists for any given $N_{\tau}\neq \bf 0$, and is given by
\begin{align}
	\label{c1.20}
	\bar{X}=-N_{\tau}F\bar{\Theta}^-.
\end{align}

The above development shows that subject to $\mathrm{rank}\begin{pmatrix}C_{\tau}\\ C_{\tau}A\end{pmatrix}=n$, we can independently carry out the stabilization problem of finding $N_{\tau}$ such that the following time-delay error system
\[\dot{e}(t)=Ne(t)+N_{\tau}e(t-\tau),\]
$N=FAF^-$, is asymptotically stable. To solve this problem, we can use Lemma \ref{lem:stabilization-constant-delay} 
to find $N_{\tau}$ so that the above time-delay system is asymptotically stable for a given $\tau$. 
\subsection{Case 2:  $\mathrm{rank}\begin{pmatrix}C_{\tau}\\ C_{\tau}A\end{pmatrix}<n$, $N$ is not Hurwitz.}
\label{ch1subseccase2}
For the case where  condition $\mathrm{rank}\begin{pmatrix}C_{\tau}\\ C_{\tau}A\end{pmatrix}=n$ is not satisfied, i.e., $\mathrm{rank}\begin{pmatrix}C_{\tau}\\ C_{\tau}A\end{pmatrix} < n$. This situation arises, for example, where lesser number of delayed output measurements is available. In such a situation, we do not solve the stabilization problem of finding $N_{\tau}$ such that the error time-delay system (\ref{c1.6}) is asymptotically stable as discussed in Lemma  \ref{lem:stabilization-constant-delay}. Instead, we turn to Lemma \ref{lem:output-stabilization-constant-delay} to solve a modified stabilization problem of finding $\bar{Z}\in \mathbb{R}^{m\times v}$ such that the following error time-delay system
\begin{align}
	\label{c1.21}
	\dot{e}(t)=Ne(t)+\bar{Z}\bar{N}e(t-\tau),
\end{align} 
where $N=FAF^-$, $\bar{N}\in \mathbb{R}^{v\times m}$ is a known full-row rank matrix with $v<m$. As demonstrated in the Appendix, LMI of Lemma \ref{lem:output-stabilization-constant-delay} is more conservative than LMI of Lemma \ref{lem:stabilization-constant-delay}. This is as expected since this practical situation is due to having lesser number of available delayed output measurements.

\subsection{Case 3:  $\mathrm{rank}\begin{pmatrix}FA\\ F\end{pmatrix}\neq \mathrm{rank}(F)$.}
\label{ch1subseccase3}
When condition (\ref{c1.9}) is not satisfied, i.e., $\mathrm{rank}\begin{pmatrix}FA\\ F\end{pmatrix}> \mathrm{rank}(F)$, an observer (\ref{c1.4a})-(\ref{c1.4b}) of order $m$ does not exist. In this case, a higher-order functional observer (i.e., of order greater
than $m$) may still exist. Based on the augmentation approach \cite{fern1}, we increase the order of the designed observer by augmenting the orginal functional $z(t)$ with an extra functional, namely, $z_a(t)=Rx(t)$. Since the order of the observer equals the number of functionals being estimated, increasing the observer
order corresponds to enlarging the functional subspace.
Let
\[
z_a(t)=Rx(t), \ R\in\mathbb{R}^{(q-m)\times n}, \ q>m,
\]
be an additional functional such that
\[
\begin{pmatrix}
	F\\
	R
\end{pmatrix}
\]
has full row rank.
The augmented functional is then defined as
\[
z_{\mathrm{aug}}(t)
= \begin{pmatrix}z(t)\\z_a(t) \end{pmatrix}=
\begin{pmatrix}
	F\\
	R
\end{pmatrix}x(t)=\bar{F}x(t),
\]
where $\bar F:=\begin{pmatrix}F\\ R\end{pmatrix}\in\mathbb{R}^{q\times n}$.

Construction of matrix $R \in \mathbb{R}^{q-m}, \ q \geq m$ to satisfy 
\[\mathrm{rank}\begin{pmatrix}\bar{F}A\\ \bar{F}\end{pmatrix}= \mathrm{rank}(\bar{F})\]
can be systematically carried out. The reader can refer to \cite{trinhnam1} for a procedure to construct $R$. Briefly, let us define
\[
\bar F_0:=
\begin{pmatrix}
	F\\ FA\\ \vdots\\ FA^{n-1}
\end{pmatrix}
\]
and let
\[
q := \mathrm{rank} \left(\bar F_0 \right).
\]

Given that $\mathrm{rank} \left(\bar F_0 \right)=q$, we can select $q$ linearly independent rows of $\bar F_0$ that contains the rows of $F$. Then let us denote the resulting full--row--rank matrix as
\[
\bar F=
\begin{pmatrix}
	F\\ R
\end{pmatrix}\in\mathbb{R}^{q\times n},
\qquad q\ge m.
\]
Note that $\mathrm{row}(\bar F)=\mathrm{row}(\bar F_0)$. By the Cayley--Hamilton theorem, $FA^n$ is a linear combination of $F, FA, \dots, FA^{n-1}$.
Hence
$\mathrm{row}(\bar F A)\subseteq \mathrm{row}(\bar F)$ and therefore
\begin{align}
	\label{c1.rank}
	\mathrm{rank}\begin{pmatrix}\bar{F}A\\ \bar{F}\end{pmatrix}= \mathrm{rank}(\bar{F}).
\end{align}
Thus, when condition (\ref{c1.9}) is not satisfied, we can find $R$ based on the above, and upon replacing 
$F$ by $\bar F$, condition (\ref{c1.rank}) is satisfied.

We can then carry out the design of a $q$-order functional observer to estimate the functional $z_{\mathrm{aug}}(t)$. As a result, the estimation of $z(t)$ can be obtained as follows
\begin{align}
	\label{c1.extract}
	\hat{z}(t)=\begin{pmatrix}I_m &\mathbf{0}_{m\times (q-m)}\end{pmatrix}\hat{z}_{\mathrm{aug}}(t)=K\hat{z}_{\mathrm{aug}}(t),
\end{align}
where $K:=\begin{pmatrix}I_m &\mathbf{0}_{m\times (q-m)}\end{pmatrix}$.

\section{Numerical Examples}
\label{ch1sec4}
We now present some numerical examples to illustrate the proposed existence conditions and observer synthesis procedures described in Section \ref{ch1sec3}.

\textit{Numerical Example 1:} Let $A$, $B$ and $C_{\tau}$ be given as follows
$$A=\begin{pmatrix} 0.1 &1\\1 &-2 \end{pmatrix}, \ B=\begin{pmatrix}1\\2 \end{pmatrix}, \ C_{\tau}=I_2.$$
Matrix $A$ is unstable with $\sigma(A)=\{0.5, -2.4\}$. The delayed output measurement vector consists of the entire state vector being delayed by $\tau$, i.e.,
$$y(t)=x(t-\tau).$$
Our objective is to design an observer (or a time-delay compensator) to estimate instantaneously the state vector $x(t)$ using the delayed output vector $y(t)=x(t-\tau)$. Thus, we have $F=I_2$ and $z(t)=x(t)$.

With $C_{\tau}=I_2$ and $F=I_2$, condition \eqref{c1.9} is satisfied, but
\[
N=FAF^{-}=A
\]
is not Hurwitz. Therefore, we need to employ an observer with an internal delay, $N_{\tau}w(t-\tau)$, $N_{\tau}\neq \mathbf 0$, to provide asymptotic stability of the error time-delay system (\ref{c1.6}). 

For this example, we have $\mathrm{rank}\begin{pmatrix}C_{\tau}\\ C_{\tau}A\end{pmatrix}=n$, $N$ is not Hurwitz. Therefore, as discussed in Case 1, Section \ref{ch1subsec2}, we can first carry out the stabilization problem of finding $N_{\tau}$ so that the error time-delay system (\ref{c1.6}) is asymptotically stable. Note that the LMI of Lemma \ref{lem:Stabilization_interval_delay} is feasible for any delay $\tau \in [0.001,1.2]\cup[0.8,1.99]$. This result implies that we can compensate for the delayed measurements up to 1.99 seconds, and that we can asymptotically estimate the instantaneous state vector $z(t)=x(t)$ using delayed measurements $y(t)=x(t-\tau)$ for any $0.001 \leq \tau \leq 1.99\mathrm{s}$.

For illustrative purpose, let $\tau$ be given as $\tau=1\mathrm{s}$. The LMI of Lemma \ref{lem:stabilization-constant-delay} produces
\[N_{\tau}=\begin{pmatrix}  -0.5445 &-0.2188\\-0.2188 &-0.0850\end{pmatrix}.\]

With $N$ and $N_{\tau}$ as above, we can compute those stable eigenvalues \cite{wu} which are closest to the imaginary axis, which are at $ \{-0.4725\pm j0.2865\}$.

Substituting $N_{\tau}$ into (\ref{c1.20}), we obtain 
\[\bar{X}=\begin{pmatrix}
	\bar{G} &M
\end{pmatrix}=\begin{pmatrix}0.4359 &0.1745 &0.2181 &0.0869\\0.1745 &0.0694 &0.0869 &0.0357\end{pmatrix},\] and hence the rest of the observer parameters. Accordingly, the following observer is obtained to asymptotically estimate the instantaneous state vector $x(t)$
\begin{align}
	&\hat{z}(t)=\hat{x}(t)=w(t)+\begin{pmatrix} 0.2181 &0.0869\\0.0869 &0.0357\end{pmatrix}y(t),\nonumber\\
	&\dot{w}(t)=\begin{pmatrix}
		0.1 &1\\1 &-2 \end{pmatrix}w(t)+\begin{pmatrix}  -0.5445 &-0.2188\\-0.2188 &-0.0850\end{pmatrix}w(t-1)\nonumber\\&+\begin{pmatrix} 0.5445 &0.2188\\0.2188 &0.0850 \end{pmatrix}y(t)+\begin{pmatrix}-0.1378 &-0.0551\\-0.0551 &-0.0220\end{pmatrix}y(t-1)\nonumber\\
	&+\begin{pmatrix}
		1\\2\end{pmatrix}u(t)+\begin{pmatrix}
		-0.3918\\-0.1582\end{pmatrix}u(t-1).\nonumber
\end{align}
Figure \ref{fig1ch2} shows trajectories of $x_1(t)$, $\hat{x}_1(t)$ and $y_1(t)=x_1(t-1)$. It is clear
that asymptotic estimation has been achieved.
\begin{figure}[!h]
	\centering
	\includegraphics[width=0.9\linewidth]{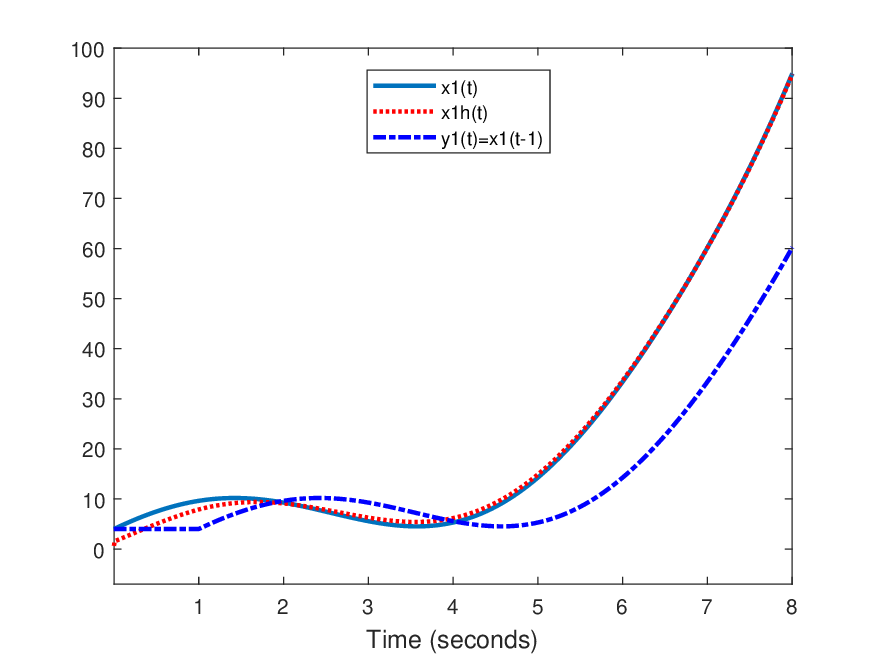}
	\caption{Trajectories of $x_1(t)$, $\hat{x}_1(t)$ and $y_1(t)=x_1(t-1)$}
	\label{fig1ch2}
\end{figure}
Figure \ref{fig2ch2} shows trajectories of $x_2(t)$, $\hat{x}_2(t)$ and $y_2(t)=x_2(t-1)$. It is clear
that asymptotic estimation has been achieved.
\begin{figure}[!h]
	\centering
	\includegraphics[width=0.9\linewidth]{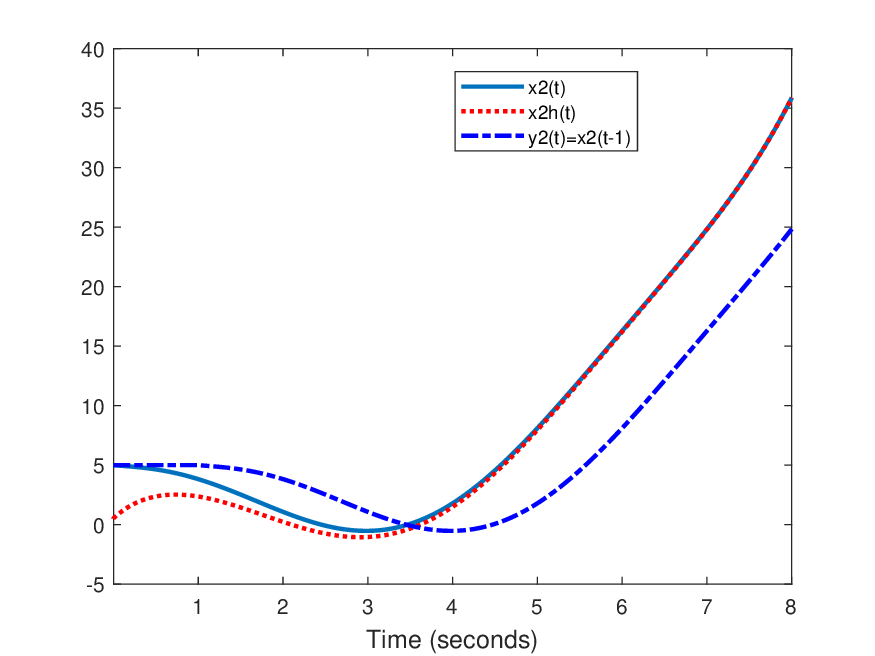}
	\caption{Trajectories of $x_2(t)$, $\hat{x}_2(t)$ and $y_2(t)=x_2(t-1)$}
	\label{fig2ch2}
\end{figure}

\textit{Numerical Example 2:} Let us relook at Numerical Example 1 but now we consider a more stringent situation where the output measurement contains only one delayed state variable $x_1(t-\tau)$, i.e.,
\[y(t)= \begin{pmatrix} 1 &0\end{pmatrix}x(t-\tau).\]

Again, our objective is to design an observer to estimate instantaneously the entire state vector $x(t)$ using the above delayed output measurement. With $C_{\tau}=\begin{pmatrix} 1 &0\end{pmatrix}$, we have
\[\begin{pmatrix} C_{\tau}\\C_{\tau}A \end{pmatrix}=\begin{pmatrix} 1 &0\\0.1 &1 \end{pmatrix}\]
which is a full-rank matrix. Hence, as discussed in Case 1, Section \ref{ch1subsec2}, we can first find $N_{\tau}$ so that the error time-delay system (\ref{c1.6}) is asymptotically stable. This was done in Example 1. Let $\tau$ be given as  $\tau=1\mathrm{s}$, and thus with $N_{\tau}$ as given in Example 1, we obtain the following observer 
\begin{align}
	\hat{z}(t)&=\hat{x}(t)=w(t)+\begin{pmatrix}0.2188\\0.0850\end{pmatrix}y(t),\nonumber\\
	\dot{w}(t)&=\begin{pmatrix}
		0.1 &1\\1 &-2 \end{pmatrix}w(t)+\begin{pmatrix}  -0.5445 &-0.2188\\-0.2188 &-0.0850\end{pmatrix}w(t-1)\nonumber\\&+\begin{pmatrix}0.6295\\0.2593 \end{pmatrix}y(t)+\begin{pmatrix}-0.1378\\-0.0551 \end{pmatrix}y(t-1)+\begin{pmatrix}
		1\\2\end{pmatrix}u(t)\nonumber\\&+\begin{pmatrix}-0.2188\\-0.0850\end{pmatrix}u(t-1).\nonumber
\end{align}

\textit{Numerical Example 3:} Let $A$, $B$ and $C_{\tau}$ be given as follows
\[A=\begin{pmatrix} 0.1 &1\\0 &0.5 \end{pmatrix}, \ B=\begin{pmatrix} 1\\2 \end{pmatrix}, \ C_{\tau}=\begin{pmatrix} 1 &0\end{pmatrix}.\]
Matrix $A$ is unstable with $\sigma(A)=\{0.1, 0.5\}$. In this example, the output measurement is delayed by a time delay, $\tau$, where
\[y(t)=\begin{pmatrix} 1 &0\end{pmatrix}x(t-\tau)=x_1(t-\tau).\]
Our objective is to design an observer to estimate instantaneously state variable $x_2(t)$ using the delayed output measurement $x_1(t-\tau)$. In this case, we have $F=\begin{pmatrix} 0 &1\end{pmatrix}$.

With the given $F$, condition (\ref{c1.9}) is satisfied, and we have
\[N=FAF^{-}=0.5\]
which is not Hurwitz. Therefore, we need to employ an observer with an internal delay, $N_{\tau}w(t-\tau)$, $N_{\tau}\neq 0$, to provide asymptotic stability of the error time-delay system (\ref{c1.6}).

Since $\begin{pmatrix} C_{\tau}\\C_{\tau}A \end{pmatrix}=\begin{pmatrix} 1 &0\\0.1 &1 \end{pmatrix}$ is a full-rank matrix, we can therefore first find $N_{\tau}$ so that the following error time-delay system 
\[\dot{e}(t)=0.5e(t)+N_{\tau}e(t-\tau)\]
is asymptotically stable. The above is a scalar time-delay system, and according to \cite{mori}, the exact stability conditions for this
system are
\[
0.5+N_{\tau}<0, \qquad N_{\tau}\geq -\frac{1}{\tau}.
\]
These conditions imply that the delay $\tau$ must satisfy
\[
\tau<2\,\mathrm{s}.
\]
Therefore, we can asymptotically estimate the instantaneous state $x_2(t)$ using delayed measurements $x_1(t-\tau)$ for any $0<\tau < 2\mathrm{s}$. For illustrative purpose, we pick $\tau=1\mathrm{s}$ and $N_{\tau}=-0.7$, and hence $\dot{e}(t)=0.5e(t)-0.7e(t-1)$ is asymptotically stable. Furthermore, based on \cite{wu}, its dominant eigenvalues are at $\{-0.4041\pm j0.5311\}$.

Now, with $N_{\tau}=-0.7$ and $F=\begin{pmatrix} 0 &1\end{pmatrix}$, from (\ref{c1.20}), we obtain
\[\begin{pmatrix}\bar{G}  &M\end{pmatrix}=\begin{pmatrix}-0.07 &0.7\end{pmatrix}.\]
Accordingly, the following first-order observer is obtained
\begin{align}
	\hat{z}(t)&=\hat{x}_2(t)=w(t)+0.7y(t),\nonumber\\
	\dot{w}(t)&=0.5w(t)-0.7w(t-1)+0.28y(t)-0.49y(t-1)\nonumber\\&+2u(t)-0.7u(t-1).\nonumber
\end{align}

\textit{Numerical Example 4:} This example is given to illustrate Case 2 as discussed in Section \ref{ch1subseccase2} where we can employ Lemma \ref{lem:output-stabilization-constant-delay} to solve a stabilization problem of finding $\bar{Z}\in \mathbb{R}^{m\times v}$ to stabilize the error time-delay system (\ref{c1.21}).

Let $A$, $B$ and $C_{\tau}$ be given as follows
\[A=\begin{pmatrix} 0.1 &1 &1\\0 &0.2 &1\\0 &-1 &-0.1 \end{pmatrix}, B=\begin{pmatrix} 1\\2\\3 \end{pmatrix}, \ C_{\tau}=\begin{pmatrix} 1 &0 &0\end{pmatrix}.\]

Here, again note that $A$ is an unstable matrix. In this example, there is only one output measurement, and it is delayed by a time delay, $\tau$, i.e.,
$$y(t)=\begin{pmatrix} 1 &0 &0\end{pmatrix}x(t)=x_1(t-\tau).$$
Our objective is to design an observer to estimate instantaneously state variables $x_2(t)$ and $x_3(t)$ using the delayed output measurement $y(t)=x_1(t-\tau)$. In this case, we have $z(t)=Fx(t)$, where $F=\begin{pmatrix} 0 &1 &0\\0 &0 &1\end{pmatrix}$.

In this example, with $F$ as given, condition (\ref{c1.9}) is satisfied and we have
\[N=FAF^{-}=\begin{pmatrix} 0.2 &1\\-1 &-0.1\end{pmatrix}.\]
Note that $N$ is an unstable matrix. On the other hand, due to having only one delayed output measurement, we now have  $\mathrm{rank}\begin{pmatrix}C_{\tau}\\ C_{\tau}A\end{pmatrix}=2$, which is $\mathrm{rank}\begin{pmatrix}C_{\tau}\\ C_{\tau}A\end{pmatrix}<n$. This situation is as discussed in Case 2, Section \ref{ch1subseccase2} where we need to find $Z$ to stabilize the error time-delay system (\ref{c1.16}).  

From (\ref{c1.15a})-(\ref{c1.15b}), we obtain $N_{\tau_2}$ as follows
\[N_{\tau_2}=\begin{pmatrix}0.3322 &0.3322\\0.3322 &0.3322\\0.0332 &0.0332\\-0.3322 &-0.3322\end{pmatrix},\] 
and $Z\in\mathbb{R}^{2\times 4}$ is arbitrary, to be determined such that the error time-delay system (\ref{c1.16}) is asymptotically stable.

It is clear from the above that $\mathrm{rank}(N_{\tau_2})=1$. Thus, by letting $Z=\begin{pmatrix}\bar{Z} &\mathbf 0_{2\times 3}\end{pmatrix}$, $\bar{Z}\in \mathbb{R}^{2\times 1}$, and $\bar{N}=\begin{pmatrix} 0.3322 &0.3322\end{pmatrix}$, we can consider the stabilization problem of finding $\bar{Z}$ so that the error time-delay system (\ref{c1.21}) is asymptotically stable. To solve this problem, for a given $\tau$, we use LMI of Lemma \ref{lem:output-stabilization-constant-delay}. On the other hand, with $N$ and $\bar{N}$ as given above, we found that LMI of Lemma \ref{lem:output-Stabilization_interval_delay} is feasible for any delay $\tau \in [0.001,1.2]\cup[0.8,2.25]$.

For illustrative purpose, let $\tau$ be given as $\tau=1\mathrm{s}$. We then use LMI of Lemma \ref{lem:output-stabilization-constant-delay} to obtain
\[\bar{Z}=\begin{pmatrix} -1.1384\\0.2522\end{pmatrix}, \ N_{\tau}=\bar{Z}\bar{N}=\begin{pmatrix}-0.3782 &-0.3782\\0.0838 &0.0838\end{pmatrix}.\]

Finally, we obtain the following second-order observer to estimate instantaneously both state variables $x_2(t)$ and $x_3(t)$
\begin{align}
	&\hat{z}(t)=\begin{pmatrix}\hat{x}_2(t)\\\hat{x}_3(t)\end{pmatrix}=w(t)+\begin{pmatrix} 0.3782\\-0.0838\end{pmatrix}y(t),\nonumber\\
	&\dot{w}(t)=\begin{pmatrix} 0.2 &1\\-1 &-0.1\end{pmatrix}w(t)+\begin{pmatrix}-0.3782 &-0.3782\\0.0838 &0.0838\end{pmatrix}w(t-1)\nonumber\\&+\begin{pmatrix}-0.0460\\-0.3615 \end{pmatrix}y(t)+\begin{pmatrix}-0.1114\\0.0247\end{pmatrix}y(t-1)+\begin{pmatrix}
		2\\3\end{pmatrix}u(t)\nonumber\\&+\begin{pmatrix}-0.3782\\0.0838\end{pmatrix}u(t-1).\nonumber
\end{align}

\section{Enlargement of $\tau$ using Delayed Output Measurement Vector $y(t)=C_{\tau}x(t-\tau)+C_hx(t-h)$}
\label{ch1sec5}
In Numerical Example 3, we showed that by using observer (\ref{c1.4a})-(\ref{c1.4b}), we can asymptotically estimate the instantaneous state $x_2(t)$ using delayed measurements $y(t)=x_1(t-\tau)$ for any $0<\tau < 2\mathrm{s}$. However, when $\tau\geq 2\mathrm{s}$, the estimation-error system $\dot{e}(t)=0.5e(t)+N_{\tau}e(t-\tau)$
cannot be guaranteed to be asymptotically stable using observer (\ref{c1.4a})-(\ref{c1.4b}). Note that increasing the time delay $\tau$ is of great practical significance, as it facilitates a broader range of applications. 

In this section, inspired by the recent result in \cite{trinhnam1}, we propose a method for enlarging $\tau$ by introducing the following delayed output measurement vector
\begin{align}
	\label{c1.22}
	y_e(t)=y(t-\alpha)=C_{\tau}x(t-h),
\end{align} 
where $\alpha=h-\tau$, $h>\tau$. Thus, $y_e(t)$ is obtained by delaying $y(t)$ by $\alpha$. We then consider the following new delayed measurement vector
\begin{align}
	\label{c1.23}
	y_a(t)=\begin{pmatrix}
		y(t)\\y_e(t)
	\end{pmatrix}=\bar{C}_{\tau}x(t-\tau)+\bar{C}_hx(t-h),\end{align} 
where
\[\bar{C}_{\tau}:=\begin{pmatrix}
	C_{\tau}\\\mathbf 0\end{pmatrix}, \ \bar{C}_h:=\begin{pmatrix}
	\mathbf 0\\ C_{\tau}\end{pmatrix}.\]
Before we jump into increasing $\tau$ by designing an observer around our new delayed vector $y_a(t)$, it is worth pausing for a key observation. This point, detailed below, is the primary motivation behind incorporating $y_a(t)$ into our design.

Let us consider the following time-delay systems
\begin{align}
	\label{c1.24a}\dot{e}(t)&=0.5e(t)-0.5e(t-2),\\
	\label{c1.24b} \dot{e}(t)&=0.5e(t)-0.8566e(t-2.3)+0.3509e(t-3).\end{align} 

\textit{Stability Analysis Comparison}: Based on the exact stability conditions defined by Mori \cite{mori}, the error equation (\ref{c1.24a}) is unstable at $\tau=2\text{s}$. In contrast, applying Lemma \ref{lem:output-Stabilization-two-delay} demonstrates that the error equation (\ref{c1.24b}) remains asymptotically stable at $\tau=2.3\text{s}$ and $h=3\text{s}$.

The primary takeaway is that with the same $0.5e(t)$ term in both equations, (\ref{c1.24b}) maintains stability under a larger delay $\tau$. By strategically leveraging both the $N_{\tau}e(t-\tau)$ and $N_he(t-h)$ terms (where $h > \tau$), we can achieve asymptotic stability for equation (\ref{c1.24b}) even as the delay $\tau$ increases.

This key observation motivates the development of a more general observer structure. By utilizing the new delayed measurement vector $y_a(t)$ and incorporating additional internal delay channels, this approach significantly enhances design flexibility.

In the following, let us design a general observer to estimate the functional $z(t)=Fx(t)$ using the following delayed measurement vector
\begin{align}
	\label{c1.25}
	&y(t)=C_{\tau}x(t-\tau)+C_hx(t-h),
\end{align}
where $C_{\tau},C_h\in \mathbb{R}^{p\times n}$ need not be of full row rank, and $h>\tau>0$. 

Note that (\ref{c1.25}) covers more practical scenarios and include (\ref{c1.23}) as a special case. In practice, delayed output measurements may involve multiple time delays. For example, let $n=2$ and $y(t)= \begin{pmatrix} x_1(t-\tau)\\x_2(t-h)\end{pmatrix}$. Here, there are two delayed output measurements: One is from state variable $x_1(t)$ being delayed by $\tau$; and the other measurement is from state variable $x_2(t)$ being delayed by $h$. In such case, we can express the output measurement vector $y(t)$ in the form (\ref{c1.25}), where
\[C_{\tau}=\begin{pmatrix} 1 &0\\0 &0\end{pmatrix}, \ C_h=\begin{pmatrix}0 &0\\0 &1\end{pmatrix}.\]

On the other hand, as for (\ref{c1.23}), let  $n=2$ and $y(t)= \begin{pmatrix} x_1(t-\tau)\\x_1(t-h)\end{pmatrix}$. Then we can express the output measurement vector $y(t)$ in the form (\ref{c1.25}), where
\[C_{\tau}=\begin{pmatrix} 1 &0\\0 &0\end{pmatrix}, \ C_h=\begin{pmatrix}0 &0\\1 &0\end{pmatrix}.\]

To estimate the functional (\ref{c1.3}) for linear system (\ref{c1.1}) using the delayed output vector (\ref{c1.25}), we consider the following observer with multiple delayed terms
\begin{align}
	\hat{z}(t)&=w(t)+My(t),\label{c1.26a}\\
	\dot{w}(t)&=Nw(t)+N_{\tau}w(t-\tau)+N_{h}w(t-h)+Gy(t)\nonumber\\&+G_{\tau}y(t-\tau)+G_hy(t-h)+Ju(t)\nonumber\\&+J_{\tau}u(t-\tau)+J_hu(t-h),\label{c1.26b}
\end{align}
where $w(\theta)=\rho(\theta),\ \forall \theta \in[-h,0]$,  $w(t) \in \mathbb{R}^m$, and $\hat{z}(t)$ is the estimate of $z(t)$. Matrices $M$, $N$, $N_{\tau}$, $N_h$, $G$, $G_{\tau}$, $G_h$, $J$, $J_{\tau}$ and $J_h$ are to be determined such that $\hat{z}(t)$ converges asymptotically to $z(t)$. 
\subsection{Observer Existence Conditions and Derivation of Observer Parameters}
\label{ch1subsec3}
Defining the estimation error vector $e(t)=\hat z(t)-z(t)$, the error dynamics are given by
\begin{align}
	\label{c1.27}
	\dot{e}(t)&=\dot{w}(t)+M\dot{y}(t)-F\dot{x}(t)\nonumber\\ \quad &=Ne(t)+N_{\tau}e(t-\tau)+N_he(t-h)+\mathcal{\bar C}_{1}u(t)\nonumber\\&+\mathcal{\bar C}_{2}u(t-\tau)+\mathcal{\bar C}_{3}u(t-h)+\mathcal{\bar C}_{4}x(t)+\mathcal{\bar C}_{5}x(t-\tau)\nonumber\\&+\mathcal{\bar C}_{6}x(t-h)+\mathcal{\bar C}_{7}x(t-2\tau)+\mathcal{\bar C}_{8}x(t-\tau-h)\nonumber\\&+\mathcal{\bar C}_{9}x(t-2h),
\end{align}
where\\ 
$\mathcal{\bar C}_1 =J-FB,\quad \mathcal{\bar C}_2 = J_{\tau}+MC_{\tau}B, \quad \mathcal{\bar C}_3=J_h+MC_hB, \quad \mathcal{\bar C}_4 =NF-FA$, \quad
$\mathcal{\bar C}_5 = N_{\tau}F+\bar{G}C_{\tau}+MC_{\tau}A$, \quad $\mathcal{\bar C}_6 = N_hF+\bar{G}C_h+MC_hA$, \quad $\mathcal{\bar C}_7 = \bar{G}_{\tau}C_{\tau},$ \quad $\mathcal{\bar C}_8 = \bar{G}_hC_{\tau}+\bar{G}_{\tau}C_h,$ \quad $\mathcal{\bar C}_9 = \bar{G}_hC_h,$ \quad $\bar{G}:=G-NM$, \quad $\bar{G}_{\tau}:=G_{\tau}-N_{\tau}M$, \quad $\bar{G}_h:=G_h-N_hM$,\\and\\
$\mathcal {\bar C}
:=
\begin{pmatrix}
	\mathcal {\bar C}_1 &
	\mathcal {\bar C}_2 &
	\mathcal {\bar C}_3 &
	\mathcal {\bar C}_4 &
	\mathcal {\bar C}_5 &\mathcal {\bar C}_6 &\mathcal {\bar C}_7 &\mathcal {\bar C}_8 & \mathcal {\bar C}_9
\end{pmatrix}.
$

The following theorem provides conditions for the existence of observer (\ref{c1.26a})-(\ref{c1.26b}).

\textit{\textbf{Theorem 2:}}
	\textit{Observer (\ref{c1.26a})-(\ref{c1.26b}) provides asymptotic estimation of the functional $z(t)=Fx(t)$ and yields
	estimation error dynamics that are decoupled from the plant state $x(\cdot)$ and the input $u(\cdot)$
	if $\mathcal {\bar C}=\bf 0$ and the following delay-dependent error dynamics
	\begin{align}
		\label{c1.28}&\dot{e}(t)=Ne(t)+N_{\tau}e(t-\tau)+N_he(t-h)
	\end{align}
	is asymptotically stable. In this case, the estimation error satisfies
	\[
	e(t)=\hat z(t)-z(t)\to {\bf 0} \quad \text{as} \quad t\to\infty
	\]
	for all admissible initial conditions and inputs $u(\cdot)$}.

\textit{\textbf{Proof:}} If $\mathcal {\bar C}=\bf 0$, then \eqref{c1.27} reduces to \eqref{c1.28}, so the error dynamics are
	decoupled from $x(\cdot)$ and $u(\cdot)$. If, in addition, \eqref{c1.28} is asymptotically stable, then $e(t)\to \bf 0$ as $t\to\infty$ for all admissible initial conditions and inputs $u(\cdot)$. This completes the proof.

We next derive observer parameters to satisfy conditions given in Theorem 2.

With regard to $\mathcal {\bar C}=\bf 0$, i.e., \[\begin{pmatrix}
	\mathcal {\bar C}_1 &
	\mathcal {\bar C}_2 &
	\mathcal {\bar C}_3 &
	\mathcal {\bar C}_4 &
	\mathcal {\bar C}_5 &\mathcal {\bar C}_6 &\mathcal {\bar C}_7 &\mathcal {\bar C}_8 & \mathcal {\bar C}_9
\end{pmatrix}=\bf 0.\]

Firstly, $\mathcal {\bar C}_1=\mathcal {\bar C}_2=\mathcal {\bar C}_3=\bf 0$ is satisfied by the following direct
choices of $J$, $J_{\tau}$ and $J_h$, respectively,
\begin{align}
	\label{c1.29}
	&J = FB, \quad J_{\tau}=-MC_{\tau}B, \quad J_h=-MC_hB.
\end{align}

Secondly, let us consider the requirement $\mathcal {\bar C}_7=\mathcal {\bar C}_8=\mathcal {\bar C}_9=\bf 0$. It can be expressed as follows
\begin{align}
	\label{c1.30}
	\begin{pmatrix}\bar{G}_{\tau} &\bar{G}_h\end{pmatrix} \begin{pmatrix}C_{\tau} &C_h &\bf 0\\\bf 0 &C_{\tau} &C_h\end{pmatrix}=\bf 0.
\end{align}
Based on the relationship in equation (\ref{c1.30}), we can simplify the requirements by setting both $\bar{G}_{\tau} = \mathbf{0}$ and $\bar{G}_h = \mathbf{0}$. This choice acts as a sufficient condition to satisfy (\ref{c1.30}), which in turn allows us to directly derive the values for $G_{\tau}$ and $G_h$ as follows
\begin{align}
	\label{c1.31}
	&G_{\tau}=N_{\tau}M, \quad G_h=N_hM.
\end{align}

As for $\mathcal {\bar C}_4=\bf 0$, as discussed in Section \ref{ch1sec3}, subject to the satisfaction of condition (\ref{c1.9}), we obtain $N=FAF^-$.

About the requirements $\mathcal {\bar C}_5=\mathcal {\bar C}_6=\bf 0$, let us first consider the case where $\mathrm{rank}\begin{pmatrix}C_{\tau} &C_h\\ C_{\tau}A &C_hA\end{pmatrix}=2n$.
Let us now express $\mathcal {\bar C}_5=\mathcal {\bar C}_6=\bf 0$ as follows
\begin{align}
	\label{c1.32}
	\bar{X}\bar{\Theta}_e= \Upsilon,
\end{align}
where
\[\bar{X}:=\begin{pmatrix}\bar{G}  &M\end{pmatrix},  \bar{\Theta}_e:= \begin{pmatrix}C_{\tau} &C_h\\C_{\tau}A &C_hA\end{pmatrix}, \Upsilon=-\begin{pmatrix}N_{\tau}F   &N_hF\end{pmatrix}.\]
Since $\mathrm{rank}\begin{pmatrix}C_{\tau} &C_h\\ C_{\tau}A &C_hA\end{pmatrix}=2n$, by Lemma 1, a solution
$\bar{X}$ always exists for any $N_{\tau}\neq \bf 0$ and $N_h\neq \bf 0$, and is given by
\begin{align}
	\label{c1.33}
	\bar{X}=\Upsilon\bar{\Theta}_e^-.
\end{align}

As in Case 1, Section \ref{ch1subsec2}, the above development shows that subject to $\mathrm{rank}\begin{pmatrix}C_{\tau} &C_h\\ C_{\tau}A &C_hA\end{pmatrix}=2n$, we can independently carry out the stabilization problem of finding $N_{\tau}$ and $N_h$ such that the error  time-delay system (\ref{c1.28}) with $N=FAF^-$ is asymptotically stable. 

As demonstrated in our stability analysis comparison between both time-delay systems (\ref{c1.24a})-(\ref{c1.24b}), we can use both the $N_{\tau}e(t-\tau)$ and $N_he(t-h)$ terms (where $h > \tau$) to achieve asymptotic stability for equation (\ref{c1.28}) with a larger $\tau$ than the case $N_h=\mathbf 0$. A sufficient condition for this stability is the LMI in Lemma \ref{lem:output-Stabilization-two-delay} feasible.  

Once $N_{\tau}$ and $N_h$ are found, from (\ref{c1.33}), we obtain $\bar{X}$. The remaining observer matrices $\bar{G}$ and $M$ are then recovered from the
corresponding blocks of $\bar{X}$. The rest of the observer gains are obtained as follows
\[
G=\bar G+NM, \ G_{\tau}=N_{\tau}M, \ G_h=N_hM, \ J = FB, \]

$J_{\tau}=-MC_{\tau}B$, $J_h=-MC_hB.$

\textit{\textbf{Remark 2:}} Consider $y_a(t)$ as defined by (\ref{c1.23}), we now show that $\mathrm{rank}\begin{pmatrix} C_{\tau}\\ C_{\tau}A\end{pmatrix}=n$ implies $\mathrm{rank}\begin{pmatrix}C_{\tau} &C_h\\ C_{\tau}A &C_hA\end{pmatrix}=2n$. With $\bar{C}_{\tau}:=\begin{pmatrix}
	C_{\tau}\\\mathbf 0\end{pmatrix}$ and $\bar{C}_h:=\begin{pmatrix}
	\mathbf 0\\ C_{\tau}\end{pmatrix}$, we have
\begin{align}
\mathrm{rank}\begin{pmatrix}\bar{C}_{\tau} &\bar{C}_h\\ \bar{C}_{\tau}A &\bar{C}_hA\end{pmatrix}&=\mathrm{rank}\begin{pmatrix}C_{\tau} &\mathbf 0\\ \mathbf 0 &C_{\tau}\\ C_{\tau}A &\mathbf 0\\\mathbf 0 &C_{\tau}A\end{pmatrix}\nonumber\\&=\mathrm{rank}\begin{pmatrix}C_{\tau} &\mathbf 0\\ C_{\tau}A &\mathbf 0\\\mathbf 0 &C_{\tau}\\ \mathbf 0 &C_{\tau}A\end{pmatrix}=2n. \nonumber
\end{align}

Thus, we can enlarge $\tau$ by utilizing the expanded delayed measurement vector $y_a(t)$ and using observer (\ref{c1.26a})-(\ref{c1.26b}) where $y(t)$ is replaced with $y_a(t)$.

\textit{\textbf{Remark 3:}} Similar to Case 2 as discussed in Section \ref{ch1subseccase2}, we can also consider the case where  $\mathrm{rank}\begin{pmatrix}C_{\tau} &C_h\\ C_{\tau}A &C_hA\end{pmatrix}<2n$. The detailed analysis is left to the reader to further investigate.

\textit{\textbf{Remark 4:}} Assume that there exist $l$ rows of $C_{\tau}$ such that 
$\mathrm{rank}\begin{pmatrix}C_l \\ C_lA\end{pmatrix}=n$, where $C_l$ is obtained by stacking those $l$ rows of $C_{\tau}$. Let us also rearrange and express $y(t)$  as follows
\[y(t)=\begin{pmatrix}y_l(t)\\y_r(t)\end{pmatrix}=\begin{pmatrix}C_lx(t)\\C_rx(t)\end{pmatrix},\]
where $y_l(t)\in \mathbb{R}^l$ and $y_r(t)\in \mathbb{R}^{p-l}$.

Subject to $\mathrm{rank}\begin{pmatrix}C_l \\ C_lA\end{pmatrix}=n$, and as discussed in Remark 2, to design an observer (\ref{c1.26a})-(\ref{c1.26b}) with the aim of enlarging $\tau$, it is sufficient for us to use the following reduced set of delayed measurement vector
\[y_a(t)=\begin{pmatrix}
	y_l(t)\\y_l(t-\alpha)
\end{pmatrix}=\bar{C}_{\tau}x(t-\tau)+\bar{C}_hx(t-h),\]
where $\alpha=h-\tau$, $h>\tau$, $\bar{C}_{\tau}:=\begin{pmatrix}
	C_l\\\mathbf 0\end{pmatrix}$ and $\bar{C}_h:=\begin{pmatrix}
	\mathbf 0\\ C_l\end{pmatrix}$. 

\subsection{Numerical Examples}
\textit{Numerical Example 5:} Let us now reconsider Numerical Example 3. We showed that by using observer (\ref{c1.4a})-(\ref{c1.4b}), we can asymptotically estimate the instantaneous state variable $x_2(t)$ using delayed measurements $y(t)=x_1(t-\tau)$ for any $0<\tau < 2\mathrm{s}$. However, when $\tau\geq 2\mathrm{s}$, the estimation-error system $\dot{e}(t)=0.5e(t)+N_{\tau}e(t-\tau)$
cannot be guaranteed to be asymptotically stable using observer (\ref{c1.4a})-(\ref{c1.4b}).

Now, let us introduce the following delayed output measurement
\[y(t-\alpha)=x_1(t-h),\]
where $\alpha=h-\tau$. Thus, $y(t-\alpha)$ is obtained by delaying $y(t)$ by $\alpha$. We then consider the following new delayed measurement vector
\[
y_a(t)=\begin{pmatrix}
	y(t)\\y(t-\alpha)
\end{pmatrix}=\bar{C}_{\tau}x(t-\tau)+\bar{C}_hx(t-h),\]
where
\[\bar{C}_{\tau}=\begin{pmatrix}
	1 &0\\0 &0\end{pmatrix}, \quad \bar{C}_h=\begin{pmatrix}
	0 &0\\1 &0\end{pmatrix}.\]

Now consider using the above expanded output vector $y_a(t)$ with observer   (\ref{c1.26a})-(\ref{c1.26b}) to estimate
$z(t)=\begin{pmatrix} 0 &1\end{pmatrix}x(t)=x_2(t)$. 

With $\bar{C}_{\tau}$ and $\bar{C}_h$ as given above, we can verify that $\mathrm{rank}\begin{pmatrix}\bar{C}_{\tau} &\bar{C}_h\\ \bar{C}_{\tau}A &\bar{C}_hA\end{pmatrix}=4$. Moreover, the LMI in Lemma \ref{lem:output-Stabilization-two-delay}
is feasible for the delays $\tau=2.3\mathrm{s}$
and $h=3\mathrm{s}$. The resulting observer parameters are
\[
N=0.5, \quad N_{\tau}=-0.8566, \quad N_h=0.3509,
\]
and the corresponding estimation-error time-delay system
\[
\dot{e}(t)=0.5e(t)-0.8566e(t-2.3)+0.3509e(t-3)
\]
is asymptotically stable.

From (\ref{c1.33}), we obtain
\[ \bar{X}=\begin{pmatrix}\bar{G}  &M\end{pmatrix}=\begin{pmatrix}-0.0857 &0.0351 &0.8566 &-0.3509 \end{pmatrix}.\]
Accordingly, the following first-order observer is obtained
\begin{align}
	&\hat{z}(t)=\hat{x}_2(t)=w(t)+\begin{pmatrix}0.8566 &-0.3509 \end{pmatrix}y_a(t),\nonumber\\
	&\dot{w}(t)=0.5w(t)-0.8566w(t-2.3)+0.3509w(t-3)\nonumber\\
	&+\begin{pmatrix}0.3427 &-0.1403 \end{pmatrix}y_a(t)+\begin{pmatrix}-0.7338 &0.3006 \end{pmatrix}y_a(t-2.3)\nonumber\\
	&+\begin{pmatrix}0.3006 &-0.1231 \end{pmatrix}y_a(t-3)+2u(t)-0.8566u(t-2.3)\nonumber\\
	&+0.3509u(t-3).\nonumber
\end{align}
Consequently, by employing the expanded output vector $y_a(t)$ in conjunction with the observer defined in (\ref{c1.26a})-(\ref{c1.26b}), we can asymptotically estimate the instantaneous functional$$z(t)=\begin{pmatrix} 0 & 1 \end{pmatrix}x(t)=x_2(t).$$This approach accommodates an output measurement delay of $\tau=2.3\text{s}$, surpassing the $2\text{s}$ limit achieved by the observer in (\ref{c1.4a})-(\ref{c1.4b}). This improvement serves as a clear demonstration of the innovation inherent in our proposed design.

\textit{Example 6:} This example illustrates the estimation of the desired functional
$z(t)$ using delayed measurements of the form (\ref{c1.25}), i.e.,
\[
y(t)=C_{\tau}x(t-\tau)+C_hx(t-h).
\]
Let $A$ and $B$ be given as follows
\[A=\begin{pmatrix} 0.5 &1\\-2 &2 \end{pmatrix}, \ B=\begin{pmatrix} 1\\1 \end{pmatrix}.\]
Let the output measurement vector be given as follows
\[y(t)=\begin{pmatrix} y_1(t)\\y_2(t)\end{pmatrix}=\begin{pmatrix} x_1(t-0.65)\\x_2(t-1.65)\end{pmatrix},\]
which takes the form
\[ y(t)=C_{\tau}x(t-0.65)+C_hx(t-1.65),\]where $C_{\tau}=\begin{pmatrix} 1 &0\\0 &0\end{pmatrix}$ and $C_h=\begin{pmatrix} 0 &0\\0 &1\end{pmatrix}$. Thus, there are two output measurements, $y_1(t)$ is from state variable $x_1(t)$ being delayed by $\tau=0.65\text{s}$; and the other measurement, $y_2(t)$, is from state variable $x_2(t)$ being delayed by $h=1.65\text{s}$. 

Our objective is to design an observer to estimate instantaneously the state vector $x(t)$, i.e., $z(t)=Fx(t)=x(t)$ where $F=I_2$. Here, with $F=I_2$, condition (\ref{c1.9}) is satisfied and $N=A$, which is, however, unstable.

Let us first highlight the significance of having both delayed measurements, $y_1(t)$ and $y_2(t)$, available for the design of an observer to estimate $z(t)$ using observer (\ref{c1.26a})-(\ref{c1.26b}). If only one of the output measurements is used, let say $y_1(t)$, then by using observer (\ref{c1.4a})-(\ref{c1.4b}), it is found that the LMI in Lemma \ref{lem:stabilization-constant-delay} is not feasible for $\tau=0.65\text{s}$ (the same problem with $h=1.65s$ for the case where only output measurement $y_2(t)$ is used).

Let us now  design a second-order observer of the form (\ref{c1.26a})-(\ref{c1.26b}) to estimate $z(t)$ using both output measurements $y_1(t)$ and $y_2(t)$.
With $C_{\tau}$ and $C_h$ as given above, we can verify that $\mathrm{rank}\begin{pmatrix}C_{\tau} &C_h\\ C_{\tau}A &C_hA\end{pmatrix}=4$. 
Therefore, as discussed in Section \ref{ch1subsec3}, we can independently carry out the stabilization problem of finding $N_{\tau}$ and $N_h$ such that the error  time-delay system (\ref{c1.28}) with $N=A$ is asymptotically stable. 

The LMI in Lemma \ref{lem:output-Stabilization-two-delay} is feasible for the delays $\tau=0.65\mathrm{s}$
and $h=1.65\mathrm{s}$. The resulting observer parameters are

$N=A$, $N_{\tau}=\begin{pmatrix} -0.3621 &  -1.0222\\
	2.0430 &  -1.8944 \end{pmatrix}$,

$N_h=\begin{pmatrix}  -0.2399 &   0.1801\\	-0.3589  &  0.0293\end{pmatrix}$,
\\
and the corresponding estimation-error time-delay system
\begin{align}
	\dot{e}(t)&=\begin{pmatrix} 0.5 &1\\-2 &2 \end{pmatrix}e(t)+\begin{pmatrix} -0.3621 &  -1.0222\\
		2.0430 &  -1.8944\end{pmatrix}e(t-0.65)\nonumber\\&+\begin{pmatrix}  -0.2399  &  0.1801\\ 	-0.3589  &  0.0293\end{pmatrix}e(t-1.65)
\end{align}
is asymptotically stable.

From (\ref{c1.33}), we obtain
\[ \bar{X}=\begin{pmatrix}\bar{G}  &M\end{pmatrix}=\begin{pmatrix}-0.1490 &0.0598 &1.0222 &-0.1200\\-2.9902 &0.3296 &1.8944 &-0.1795 \end{pmatrix}.\]
Accordingly, the following second-order observer is obtained
\begin{align}
	\hat{z}(t)&=\hat{x}(t)=w(t)+\begin{pmatrix}1.0222 &-0.1200\\1.8944 &-0.1795 \end{pmatrix}y(t),\nonumber\\
	\dot{w}(t)&=\begin{pmatrix} 0.5 &1\\-2 &2 \end{pmatrix}w(t)+\begin{pmatrix} -0.3621 &  -1.0222\\
		2.0430 &  -1.8944\end{pmatrix}w(t-0.65)\nonumber\\&+\begin{pmatrix}  -0.2399  &  0.1801\\ 	-0.3589  &  0.0293\end{pmatrix}w(t-1.65)\nonumber\\
	&+\begin{pmatrix}2.2564 &-0.1797\\-1.2458 &0.2106
	\end{pmatrix}y(t)\nonumber\\
	&+\begin{pmatrix}-2.3065 &0.2269\\-1.5003 &0.0949 \end{pmatrix}y(t-0.65)\nonumber\\
	&+\begin{pmatrix}0.0960 &-0.0035\\-0.3114 &0.0378 \end{pmatrix}y(t-1.65)\nonumber\\
	&+\begin{pmatrix}1\\1 \end{pmatrix}u(t)+\begin{pmatrix}-1.0222\\-1.8944 \end{pmatrix}u(t-0.65)\nonumber\\
	&+\begin{pmatrix}0.1200\\0.1795\end{pmatrix}u(t-1.65).\nonumber
\end{align}

\section{Implementing Time-Delay Compensators in Observer-Based Control Systems}
\label{ch1sec6}
Consider the linear system (\ref{c1.1}) where the measured output vector (\ref{c1.2}) is subject to time delay $\tau$. It is well-known that stabilizing controllers for linear systems
are frequently formulated in a state-feedback form, where
the control input depends linearly on the system state, $u(t)=Fx(t)$,
or,
\[
u(t)=z(t), \qquad z(t)=Fx(t).
\]
However, in many practical applications the full system state is
rarely directly measurable.
Limitations in sensing, communication constraints, distributed
architectures, and measurement delays often prevent direct access to
$x(t)$. Consequently, the implementation of stabilizing state-feedback
controllers requires the estimation of the control signal $z(t)$. Functional observers provide a direct means of estimating the
desired functional $z(t)$ without reconstructing the entire
state vector.
This contrasts with state-observer-based schemes, where the full state vector
$x(t)$ must first be estimated before the functional $z(t)$ can be
computed.

When the output vector (\ref{c1.2}) is subject to time delay $\tau$, our designed functional observer, which acts as a time-delay compensator, can effectively estimate the instantaneous functional $z(t)$. Hence, it can now be used to implement the stabilizing state-feedback control law $u(t)$.

In the following, we discuss the case where we use an observer (\ref{c1.4a})-(\ref{c1.4b}) of order $q>m$ to realize the state-feedback control law $u(t)=z(t)=Fx(t)$. This case is more general than the case where $q=m$. In this case, the estimation of the control signal is realized from (\ref{c1.extract}), ie.,
\[\hat{z}(t)=K\hat{z}_{\mathrm{aug}}(t), \quad K=\begin{pmatrix}I_m &\mathbf{0}_{m\times (q-m)}\end{pmatrix}. \]
Firstly, note that our designed observer ensures 	\[
e_{\mathrm{aug}}(t)=\hat z_{\mathrm{aug}}(t)-z_{\mathrm{aug}}(t)\to {\bf 0} \quad \text{as} \quad t\to\infty.
\]
By substituting \[u(t)=\hat{z}(t)=Ke_{\mathrm{aug}}(t)+K\bar{F}x(t)=Ke_{\mathrm{aug}}(t)+Fx(t)\] into (\ref{c1.1}), we obtain
\begin{align}
	\label{c1.34}
	\dot{x}(t)=Ax(t)+Bu(t)=(A+BF)x(t)+BKe_{\mathrm{aug}}(t).
\end{align}
Note that $F$ is a stabilizing state-feedback gain, designed so that $A+BF$ is Hurwitz. On the other hand, the delay-dependent error dynamics (\ref{c1.6}) is asymptotically stable. 

Let us now augment (\ref{c1.34}) and (\ref{c1.6}), where $e(t)$ is replaced with $e_{\mathrm{aug}}(t)$, to form the following \textit{state-error} system 
\begin{align}
	\label{c1.35}
	\begin{pmatrix}\dot{x}(t)\\\dot{e}_{\mathrm{aug}}(t)\end{pmatrix}&=\begin{pmatrix}A+BF &BK\\ \bf 0 &N\end{pmatrix}\begin{pmatrix}x(t)\\e_{\mathrm{aug}}(t)\end{pmatrix}\nonumber\\
	&+\begin{pmatrix}\bf 0 &\bf 0\\ \bf 0 &N_{\tau}\end{pmatrix} \begin{pmatrix}x(t-\tau)\\e_{\mathrm{aug}}(t-\tau)\end{pmatrix}.
\end{align}
Since $A+BF$ is Hurwitz and (\ref{c1.6}) is asymptotically stable, the state-error system (\ref{c1.35}) is asymptotically stable. Moreover, the characteristic roots of (\ref{c1.35}) are the union of the eigenvalues of $A+BF$ and the characteristic roots of (\ref{c1.6}).

On the other hand, by substituting \[u(t)=\hat{z}(t)=K\hat{z}_{\mathrm{aug}}(t)=Kw(t)+KMy(t)\] into (\ref{c1.1}) and (\ref{c1.4b}), we obtain the following closed-loop \textit{state-observer} system

\begin{align}
	\label{c1.36}
	\begin{pmatrix}\dot{x}(t)\\\dot{w}(t)\end{pmatrix}&=\begin{pmatrix}A &BK\\ \bf 0  &\tilde{N}\end{pmatrix}\begin{pmatrix}x(t)\\w(t)\end{pmatrix}\nonumber\\
	&+\begin{pmatrix}BKMC_{\tau} &\bf 0\\ \tilde{G}C_{\tau} &\tilde{N}_{\tau}\end{pmatrix} \begin{pmatrix}x(t-\tau)\\w(t-\tau)\end{pmatrix}\nonumber\\&+\begin{pmatrix}\bf 0 &\bf 0\\ \tilde{G}_{\tau}C_{\tau} &\bf 0\end{pmatrix} \begin{pmatrix}x(t-2\tau)\\w(t-2\tau)\end{pmatrix}, 
\end{align}
where

$\tilde{N}:=N+JK$,  $\tilde{N}_{\tau}:=N_{\tau}+J_{\tau}K$,  $\tilde{G}:=G+JKM$,

$\tilde{G}_{\tau}:=G_{\tau}+J_{\tau}KM.$ 

Figure (\ref{figc1.4}) shows a schematic diagram of the implementation of the closed-loop observer-based feedback control using delayed output vector $y(t)=C_{\tau}x(t-\tau)$.
\begin{figure}[!h]
	\centering
	\includegraphics[width=\linewidth]{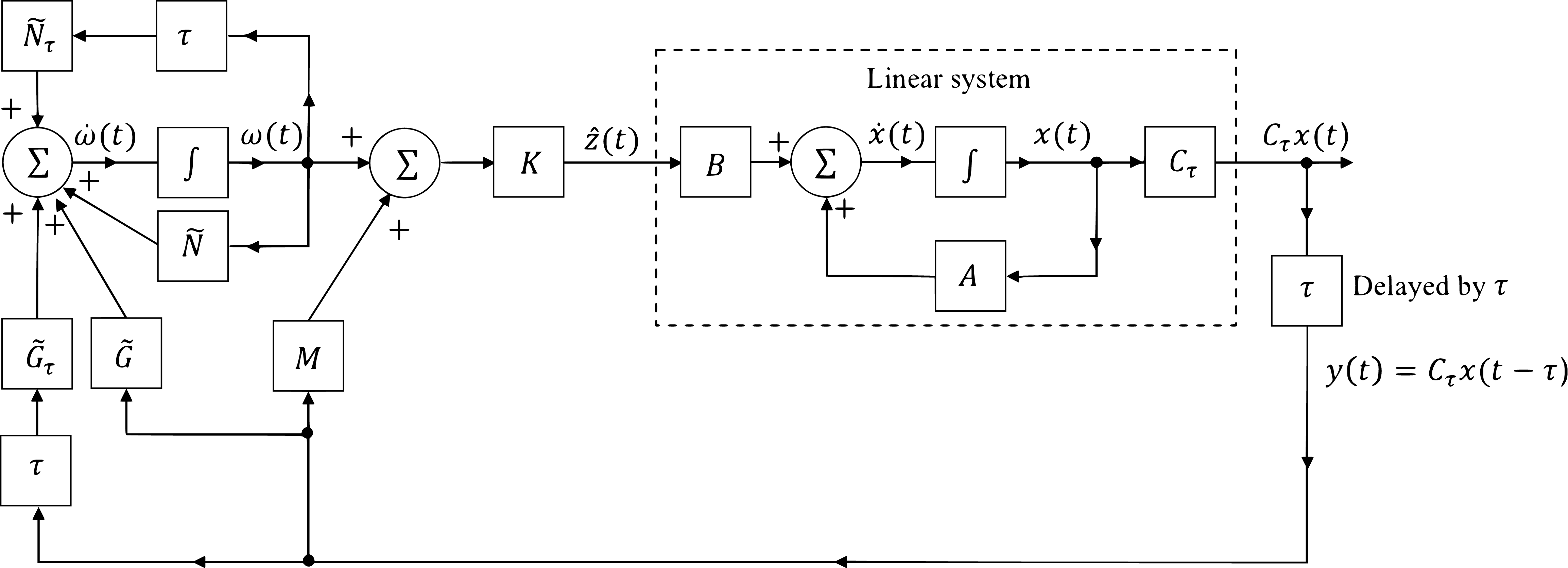}
	\caption{Schematic diagram of an observer-based feedback control scheme}
	\label{figc1.4}
\end{figure}

\textit{Numerical Example 7:} In this example, we consider linear system (\ref{c1.1}) with delayed output vector (\ref{c1.2}). We first design a state-feedback controller, $u(t)=z(t)=Fx(t)$, to stabilize the system. Then, we design an observer (\ref{c1.4a})-(\ref{c1.4b}) to estimate the control signal $\hat{z}(t)$. Finally, we use (\ref{c1.36}) to realize the closed-loop observer-based control system.

Let $A$, $B$ and $C_{\tau}$ be given as follows
\[A=\begin{pmatrix} 0 &1\\0.1 &0.2 \end{pmatrix}, \quad B=\begin{pmatrix}0\\1 \end{pmatrix}, \quad C_{\tau}=\begin{pmatrix}1 &0 \end{pmatrix}.\]

Matrix $A$ is unstable and thus let us design a stabilizing state-feedback controller, $u(t)=z(t)=Fx(t)$, to place the eigenvalues of $A+BF$ at $\{-0.5, -1\}$. This gives $F=\begin{pmatrix}-0.6 &-1.7 \end{pmatrix}$.

Let us now design an observer (\ref{c1.4a})-(\ref{c1.4b}) to estimate directly the functional $z(t)$. With $F$ and $A$ as above, however, condition (\ref{c1.9}) is not satisfied.
In this case, as discussed in Case 3, Section \ref{ch1subseccase3}, we increase the order of the designed observer by augmenting the orginal functional $z(t)$ with an extra functional, namely, $z_a(t)=Rx(t)$. Since the order of the observer equals the number of functionals being estimated, increasing the observer
order corresponds to enlarging the functional subspace.

For this example, let
\[
z_a(t)=\begin{pmatrix}0 &1 \end{pmatrix}x(t), \quad R=\begin{pmatrix}0 &1 \end{pmatrix}.
\]
The augmented functional is then defined as
\[
z_{\mathrm{aug}}(t)
= \begin{pmatrix}z(t)\\z_a(t) \end{pmatrix}=
\begin{pmatrix}
	F\\
	R
\end{pmatrix}x(t)=\bar{F}x(t),
\]
where $\bar F:=\begin{pmatrix}F\\ R\end{pmatrix}=\begin{pmatrix}-0.6 &-1.7\\0 &1\end{pmatrix}$ is full rank. Thus, condition (\ref{c1.rank}) is now satisfied and we also obtain
\[N=\bar{F}A\bar{F}^-=\begin{pmatrix}0.2833 &-0.4583\\-0.1667 &-0.0833\end{pmatrix}.\]

Note that $N$ is not Hurwitz. For this example, we have $\begin{pmatrix}C_{\tau}\\ C_{\tau}A\end{pmatrix}=I_2$. Therefore, as discussed in Case 1, Section \ref{ch1subsec2}, we can first carry out the stabilization problem of finding $N_{\tau}$ so that the error time-delay system (\ref{c1.6}) is asymptotically stable.  Note that the LMI of Lemma \ref{lem:Stabilization_interval_delay} is feasible for any delay $\tau \in [0.001,2.2]\cup[2.0,2.3]$.

For illustrative purpose, let $\tau$ be given as $\tau=1\mathrm{s}$. The LMI of Lemma \ref{lem:stabilization-constant-delay}  produces
\[N_{\tau}=\begin{pmatrix}-0.5079 &0.1035\\0.1217 &-0.1952\end{pmatrix}.\]

Finally, we obtain the following second-order observer to estimate the functional $z(t)$
\begin{align}
	\hat{z}(t)&=w_1(t)-0.9670y(t),\nonumber\\
	\dot{w}(t)&=\begin{pmatrix}0.2833 &-0.4583\\-0.1667 &-0.0833\end{pmatrix}w(t)\nonumber\\
	&+\begin{pmatrix}-0.5079 &0.1035\\0.1217 &-0.1952\end{pmatrix}w(t-1)\nonumber\\&+\begin{pmatrix}-0.7630\\0.2007 \end{pmatrix}y(t)+\begin{pmatrix}0.5327\\-0.1962\end{pmatrix}y(t-1)\nonumber\\
	&+\begin{pmatrix}
		-1.7\\1\end{pmatrix}u(t).\nonumber
\end{align}
The trajectories of $x_1(t)$ using functional observer-based control and state-feedback control schemes are shown in Figure \ref{fig3ch2}, along with the step-like reference signal $r(t)$.
It is important to note that observer-based control initially degrades the system's transient response due to non-zero reconstruction errors. However, this performance penalty decays exponentially as the estimation error $e(t)$ vanishes, ultimately validating the effectiveness of the functional observer.

\begin{figure}[!h]
	\centering
	\includegraphics[width=0.9\linewidth]{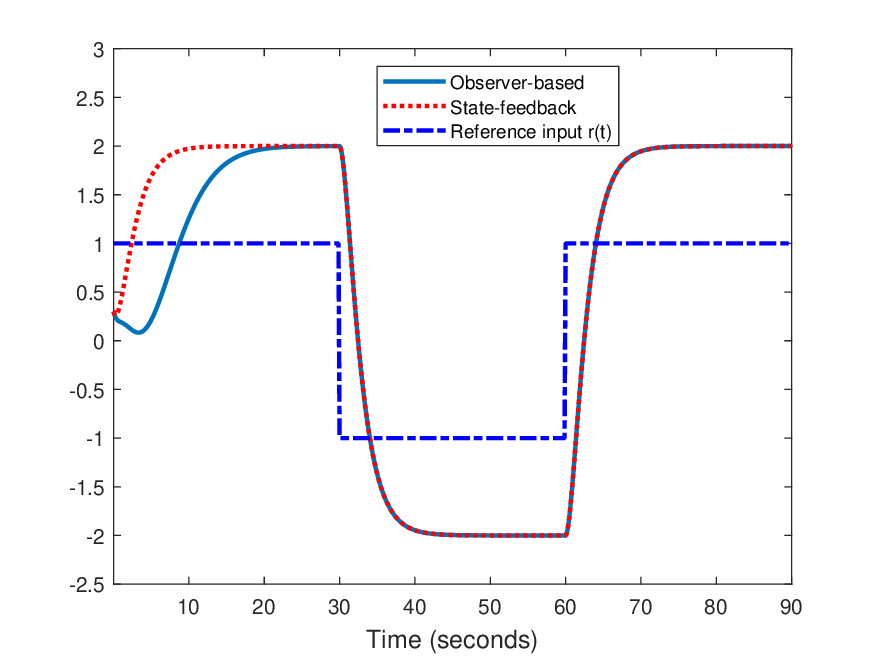}
	\caption{Trajectories of $x_1(t)$ using functional observer-based control and state-feedback control, and reference signal $r(t)$}
	\label{fig3ch2}
\end{figure}
\section{Conclusion}
\label{ch1sec7}
This paper has established a comprehensive framework for designing functional observers for linear systems subject to output measurement delays.  Moving beyond traditional methodologies, the proposed observer generates an estimate $\hat{z}(t)$ that predicts the current state functional $Fx(t)$ using delayed data. By neutralizing sensing latency, the observer serves as a potent time-delay compensator, effectively expanding the practical utility of functional observer theory.

The proposed observer architecture utilizes multiple delayed components to maintain estimation accuracy in the presence of latency. A pivotal contribution of this paper is the method for increasing the permissible time delay, $\tau$, in output measurements without sacrificing the asymptotic stability of the estimation-error system. Stability is achieved by integrating both $N_{\tau}e(t-\tau)$ and $N_he(t-h)$ terms ($h > \tau$) into the estimation-error dynamic equation. This configuration ensures asymptotic stability even as the primary delay $\tau$ increases.
These findings enable a more versatile observer structure and design. The design offers superior flexibility over standard models by employing a generalized observer structure with internal delay channels. By utilizing an augmented measurement vector,$$y_a(t)=\begin{pmatrix} y(t) \\ y(t-\alpha) \end{pmatrix}$$where $y(t-\alpha)$ is generated by further delaying the original vector, the approach significantly broadens the available design space.

\section{Appendix}
The following lemmas will be used in this paper.

\begin{lemma}\label{lemp1} \cite{rao}
	Let $\Theta\in\mathbb{R}^{k\times n}$ and $\Upsilon\in\mathbb{R}^{m\times n}$
	be given matrices, and consider the matrix equation
	\begin{equation}
		\label{p1}
		X\Theta = \Upsilon,
	\end{equation}
	where $X\in\mathbb{R}^{m\times k}$ is unknown.
	
	A solution exists if and only if
	\begin{equation}
		\label{p2}
		\mathrm{rank}\begin{pmatrix}\Theta\\\Upsilon \end{pmatrix}=\mathrm{rank}\begin{pmatrix}\Theta\end{pmatrix}.
	\end{equation}
	In this case, the general solution is given by
	\begin{equation}
		\label{p3}
		X = \Upsilon \Theta^{-}
		+ Z\big(I_k - \Theta \Theta^{-}\big),
	\end{equation}
	where
	$Z\in\mathbb{R}^{m\times k}$ is arbitrary.
\end{lemma}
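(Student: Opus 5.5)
The argument has two parts: solvability, and the form of the general solution. Throughout, $\Theta^{-}$ denotes any generalized inverse, so $\Theta\Theta^{-}\Theta=\Theta$.

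\emph{Necessity of the rank condition.} Suppose $X\Theta=\Upsilon$ has a solution. Then every row of $\Upsilon$ is a linear combination of the rows of $\Theta$, so $\operatorname{row}(\Upsilon)\subseteq\operatorname{row}(\Theta)$. Stacking $\Upsilon$ under $\Theta$ therefore adds no new directions to the row space, which gives the rank equality (\ref{p2}).

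\emph{Sufficiency.} Conversely, assume (\ref{p2}) holds. Since $\operatorname{row}(\Theta)\subseteq\operatorname{row}\begin{pmatrix}\Theta\\ \Upsilon\end{pmatrix}$ and the two spaces have equal dimension, they coincide. Hence $\operatorname{row}(\Upsilon)\subseteq\operatorname{row}(\Theta)$, and so $\Upsilon=W\Theta$ for some matrix $W\in\mathbb{R}^{m\times k}$. I then verify directly that $X_0=\Upsilon\Theta^{-}$ is a solution:
\[
\Upsilon\Theta^{-}\Theta=W\Theta\Theta^{-}\Theta=W\Theta=\Upsilon .
\]
This identity is the one key trick in the proof: $\Upsilon\Theta^{-}\Theta=\Upsilon$ holds precisely because $\Upsilon$ factors through $\Theta$.

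\emph{Every matrix of the form (\ref{p3}) is a solution.} For arbitrary $Z$, the definition of $\Theta^{-}$ gives
\[
(I_k-\Theta\Theta^{-})\Theta=\Theta-\Theta\Theta^{-}\Theta=\mathbf 0 .
\]
So $Z(I_k-\Theta\Theta^{-})\Theta=\mathbf 0$, and $X=\Upsilon\Theta^{-}+Z(I_k-\Theta\Theta^{-})$ satisfies $X\Theta=\Upsilon$.

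\emph{Every solution has the form (\ref{p3}).} This is the step that needs an explicit choice of $Z$. Let $X$ be any solution. Since $X\Theta=\Upsilon$,
\[
X=X\Theta\Theta^{-}+X(I_k-\Theta\Theta^{-})=\Upsilon\Theta^{-}+X(I_k-\Theta\Theta^{-}).
\]
This is exactly (\ref{p3}) with $Z=X$. The general-solution formula therefore captures all solutions, and the proof is complete.
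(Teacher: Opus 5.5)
Your proof is correct and complete for an arbitrary generalized inverse $\Theta^{-}$. It covers both directions of the rank criterion, shows that every matrix of the form (\ref{p3}) solves (\ref{p1}), and uses the choice $Z=X$ to show that every solution has that form. The paper does not prove this lemma itself but cites it from \cite{rao}; your argument is the standard one, with consistency via $\Upsilon\Theta^{-}\Theta=\Upsilon$ and completeness via $X=X\Theta\Theta^{-}+X(I_k-\Theta\Theta^{-})$.
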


\begin{lemma}\label{lemp2}
	Let $\Theta\in\mathbb{R}^{k\times n}$
	be a given matrix, and consider the matrix equation
	\begin{equation}
		\label{p4}
		X\Theta = \mathbf 0_{m\times n},
	\end{equation}
	where $X\in\mathbb{R}^{m\times k}$ is unknown.
	
	A non-trivial solution $X\neq \mathbf 0$ exists if and only if there exists an orthogonal basis of the left null-space of $\Theta$.  The general solution to (\ref{p4}) is given by
	\begin{equation}
		\label{p5}
		X=Z(I_k-\Theta\Theta^{-}),
	\end{equation}
	where
	$Z\in\mathbb{R}^{m\times k}$ is arbitrary.
\end{lemma}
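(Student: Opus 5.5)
The plan is to prove the general-solution formula \eqref{p5} directly from the defining property $\Theta\Theta^{-}\Theta=\Theta$ of a generalized inverse. The existence statement then follows by identifying the rows of $X$ with vectors of the left null-space of $\Theta$. This is the special case $\Upsilon=\mathbf 0$ of Lemma \ref{lemp1}. There the rank condition \eqref{p2} holds trivially and \eqref{p3} reduces to \eqref{p5}. Even so, I will check both inclusions by hand so the argument is self-contained.

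\emph{Step 1: every $X$ of the form \eqref{p5} solves \eqref{p4}.} For any $Z$,
\[
Z(I_k-\Theta\Theta^{-})\Theta=Z(\Theta-\Theta\Theta^{-}\Theta)=Z(\Theta-\Theta)=\mathbf 0 .
\]

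\emph{Step 2: every solution has the form \eqref{p5}.} Suppose $X\Theta=\mathbf 0$. Then $X\Theta\Theta^{-}=\mathbf 0$, hence $X=X-X\Theta\Theta^{-}=X(I_k-\Theta\Theta^{-})$. Taking $Z=X$ shows that $X$ lies in the family \eqref{p5}, so \eqref{p5} is exactly the solution set.

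\emph{Step 3: existence of a non-trivial solution.} A matrix $X$ satisfies $X\Theta=\mathbf 0$ if and only if each of its rows $x_i^{\mathsf T}$ satisfies $x_i^{\mathsf T}\Theta=\mathbf 0$, i.e. lies in the left null-space $\mathcal N:=\{v\in\mathbb R^{k}: v^{\mathsf T}\Theta=\mathbf 0\}$. Hence a solution $X\neq\mathbf 0$ exists if and only if $\mathcal N\neq\{\mathbf 0\}$. In one direction, take a nonzero row of $X$. In the other, place a nonzero $v\in\mathcal N$ as the first row of $X$ and set the remaining rows to zero. Next, $\mathcal N\neq\{\mathbf 0\}$ holds if and only if $\mathcal N$ has a (nonempty) orthogonal basis, by Gram--Schmidt applied to any basis of $\mathcal N$. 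Equivalently, this holds if and only if $\mathrm{rank}(\Theta)<k$, since $\dim\mathcal N=k-\mathrm{rank}(\Theta)$.

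To connect this with \eqref{p5} and with the condition $\Theta\Theta^{-}\neq I$ used in \eqref{c1.14}, I will also note the following. The matrix $\Theta\Theta^{-}$ is idempotent, and its rank is
\[
\mathrm{rank}(\Theta\Theta^{-})=\mathrm{rank}(\Theta),
\]
because $\mathrm{rank}\,\Theta\ge\mathrm{rank}\,\Theta\Theta^{-}\ge\mathrm{rank}\,\Theta\Theta^{-}\Theta=\mathrm{rank}\,\Theta$. Therefore $I_k-\Theta\Theta^{-}\neq\mathbf 0$ exactly when $\mathrm{rank}(\Theta)<k$, which matches Step 3.

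The algebra above is routine. The main obstacle is interpretive: the phrase ``there exists an orthogonal basis of the left null-space'' must be read as ``the left null-space is nontrivial''. Every subspace, including $\{\mathbf 0\}$ with the empty basis, has an orthogonal basis, so the lemma is only meaningful under this reading. I will state this interpretation explicitly and prove the equivalence with $\mathrm{rank}(\Theta)<k$, and hence with $\Theta\Theta^{-}\neq I_k$.
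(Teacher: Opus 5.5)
Your proof is correct and matches the paper's route. The paper states this lemma without proof, implicitly as the $\Upsilon=\mathbf 0$ case of Lemma~\ref{lemp1} plus the one-line remark that the orthogonal-basis condition is equivalent to $\Theta\Theta^{-}\neq I_k$, and your two-inclusion check together with the idempotence/rank argument for $\Theta\Theta^{-}$ fills in exactly that, with the ``orthogonal basis'' condition correctly read as ``the left null-space is nontrivial'' (otherwise it would be vacuous).
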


The existence of an orthogonal basis of the left null-space of $\Theta$ is equivalent to $\Theta\Theta^{-}\neq I_k$.

\subsection{Stability criteria for time-delay systems}
\label{sec1.2.1}
\begin{lemma}[The  reciprocally convex combination inequality, \cite{Park:11}]\label{lem_Park} For given
	positive integers  $n,m$, a scalar $\alpha \in (0,1)$, a $n\times n$-matrix $R>0$,
	two $n\times m$-matrices $W_1, W_2$. Define, for all vector $\xi\in \mathbb{R}^m$, the
	function $\Theta(\alpha,R)$ given by:
	$$\Theta(\alpha,R)=\frac{1}{\alpha}\xi^{\mathsf T}W_1^{\mathsf T}RW_1\xi+\frac{1}{1-\alpha}\xi^{\mathsf T}W_2^{\mathsf T}RW_2\xi.$$
	If there is a matrix $S\in \mathbb{R}^{n\times n}$ such that
	$$M=\begin{pmatrix} R &S\\ \star &R\end{pmatrix}>0,$$
	then  the following inequality holds
	$$\min_{\alpha\in (0,1)}\Theta(\alpha,R) \geq
	\begin{pmatrix} W_1\xi\\ W_2\xi\end{pmatrix}^{\mathsf T}
	\begin{pmatrix} R &S\\ \star &R\end{pmatrix}
	\begin{pmatrix} W_1\xi\\ W_2\xi\end{pmatrix}.$$
\end{lemma}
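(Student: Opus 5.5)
The aim is to show, for every $\xi\in\mathbb{R}^m$ and every $\alpha\in(0,1)$,
\[
\frac{1}{\alpha}\,a^{\mathsf T}Ra+\frac{1}{1-\alpha}\,b^{\mathsf T}Rb \;\geq\; \begin{pmatrix} a\\ b\end{pmatrix}^{\mathsf T}\begin{pmatrix} R &S\\ \star &R\end{pmatrix}\begin{pmatrix} a\\ b\end{pmatrix},
\]
where $a:=W_1\xi$ and $b:=W_2\xi$. This pointwise bound implies the stated inequality for the minimum. If the minimum is not attained, the bound still holds for the infimum.

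First I rewrite the left-hand side. Since
\[
\frac{1}{\alpha}=1+\frac{1-\alpha}{\alpha},\qquad \frac{1}{1-\alpha}=1+\frac{\alpha}{1-\alpha},
\]
the left-hand side equals
\[
a^{\mathsf T}Ra+b^{\mathsf T}Rb+\frac{1-\alpha}{\alpha}\,a^{\mathsf T}Ra+\frac{\alpha}{1-\alpha}\,b^{\mathsf T}Rb .
\]
The right-hand side equals $a^{\mathsf T}Ra+b^{\mathsf T}Rb+2a^{\mathsf T}Sb$. So it suffices to prove
\[
\frac{1-\alpha}{\alpha}\,a^{\mathsf T}Ra+\frac{\alpha}{1-\alpha}\,b^{\mathsf T}Rb-2a^{\mathsf T}Sb\;\geq\; 0 .
\]

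To get this, I set $\gamma:=\sqrt{(1-\alpha)/\alpha}>0$. Then the quantity above is exactly the quadratic form
\[
\begin{pmatrix} \gamma a\\ -\gamma^{-1} b\end{pmatrix}^{\mathsf T}\begin{pmatrix} R &S\\ \star &R\end{pmatrix}\begin{pmatrix} \gamma a\\ -\gamma^{-1} b\end{pmatrix}.
\]
To confirm this, expand the form. The diagonal blocks give $\gamma^2 a^{\mathsf T}Ra+\gamma^{-2}b^{\mathsf T}Rb$. The off-diagonal blocks give $-2a^{\mathsf T}Sb$, because $\gamma\cdot\gamma^{-1}=1$. Since the block matrix $M$ is positive definite by hypothesis, the form is nonnegative, and the required inequality follows.

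The only delicate step is choosing the scaling so that the cross term carries exactly the coefficient $-2$. The choice $\gamma=\sqrt{(1-\alpha)/\alpha}$ does this. Everything else is direct algebra.
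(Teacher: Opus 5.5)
Your proof is correct. The paper states this lemma without proof, importing it from \cite{Park:11}. Your argument (writing $1/\alpha=1+(1-\alpha)/\alpha$ and $1/(1-\alpha)=1+\alpha/(1-\alpha)$, then evaluating the block form at $(\gamma a,\,-\gamma^{-1}b)$ with $\gamma=\sqrt{(1-\alpha)/\alpha}$) is the standard proof from that reference, specialized to two terms.

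Your remark that the bound should be read as an infimum is a small but genuine sharpening of the statement. For $W_1\xi=\mathbf 0\neq W_2\xi$, the map $\alpha\mapsto\Theta(\alpha,R)$ does not attain its infimum on $(0,1)$. Note also that only $M\geq 0$ is actually needed.
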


\begin{lemma}[The Wirtinger-based integral inequality, \cite{Seuret:13}]\label{lem_Sueret}For a given $n\times n$-matrix $R>0$,
	any differentiable function $\varphi: [a,b] \rightarrow \mathbb{R}^n$, then  the following inequality holds
\begin{align}
	\int_{a}^b\dot{\varphi}^{\mathsf T}(u)R\dot{\varphi}(u)du & \geq \frac{1}{b-a}(\varphi(b)-\varphi(a))^{\mathsf T}R(\varphi(b)-\varphi(a))
	\nonumber\\
	&+\frac{12}{b-a}\Omega^{\mathsf T}R\Omega,\nonumber
\end{align}
	where $\Omega=\frac{\varphi(b)+\varphi(a)}{2}-\frac{1}{b-a}\int_a^b\varphi(u)du.$
\end{lemma}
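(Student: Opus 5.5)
The plan is to derive the inequality from the Cauchy--Schwarz (Jensen) bound applied to a two-term orthogonal expansion of $\dot{\varphi}$ in $L^2([a,b];\mathbb{R}^n)$, weighted by $R$.

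First, normalize the interval. Set $\ell=b-a$ and define the shifted Legendre polynomials on $[a,b]$:
\[
p_0(u)=1,\qquad p_1(u)=\frac{2u-a-b}{\ell}.
\]
These satisfy $\int_a^b p_0p_1\,du=0$, $\int_a^b p_0^2\,du=\ell$ and $\int_a^b p_1^2\,du=\ell/3$.

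Next, compute the two projection coefficients of $\dot\varphi$.
\begin{itemize}
\item For $p_0$ we get $\int_a^b \dot\varphi\,du=\varphi(b)-\varphi(a)$.
\item For $p_1$, integrate by parts:
\[
\int_a^b p_1(u)\dot\varphi(u)\,du=\varphi(b)+\varphi(a)-\frac{2}{\ell}\int_a^b\varphi(u)\,du=2\Omega .
\]
\end{itemize}

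Now define the residual
\[
r(u)=\dot\varphi(u)-\frac{1}{\ell}\bigl(\varphi(b)-\varphi(a)\bigr)p_0(u)-\frac{3}{\ell}\,2\Omega\, p_1(u).
\]
By construction $r$ is orthogonal to $p_0$ and $p_1$ componentwise. Hence
\[
\int_a^b r^{\mathsf T}R\,p_i\,c\,du=0
\]
for any constant vector $c$, since $R$ is constant.

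Expanding $\int_a^b\dot\varphi^{\mathsf T}R\dot\varphi\,du$ therefore gives three pieces: $\int_a^b r^{\mathsf T}Rr\,du\ge 0$, plus the $p_0$-term, plus the $p_1$-term. The cross terms vanish by orthogonality and by $\int_a^b p_0p_1\,du=0$.

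It remains to evaluate the two surviving terms.
\begin{itemize}
\item The $p_0$-term is
\[
\frac{1}{\ell^2}\,\ell\,(\varphi(b)-\varphi(a))^{\mathsf T}R(\varphi(b)-\varphi(a))=\frac{1}{\ell}(\varphi(b)-\varphi(a))^{\mathsf T}R(\varphi(b)-\varphi(a)).
\]
\item The $p_1$-term is
\[
\frac{36}{\ell^2}\cdot\frac{\ell}{3}\,\Omega^{\mathsf T}R\Omega=\frac{12}{\ell}\Omega^{\mathsf T}R\Omega.
\]
\end{itemize}
Dropping the nonnegative residual term yields exactly the claimed inequality.

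The only delicate step is the integration by parts and the normalization constants, since the factor $12$ depends on them. I would double-check the $p_1$ coefficient carefully. One technical point is that $\varphi$ must be such that $\dot\varphi$ is square integrable; I would assume this, since otherwise the left-hand side is infinite and the inequality is trivial. Positive definiteness of $R$ is used only to guarantee $\int_a^b r^{\mathsf T}Rr\,du\ge0$.
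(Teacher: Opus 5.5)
Your proof is correct and complete. The constants check out: $p_1(b)=1$, $p_1(a)=-1$ and $\int_a^b p_1^2\,du=\ell/3$. So the coefficient of $p_1$ is $6\Omega/\ell$, and its contribution is $\frac{\ell}{3}\cdot\frac{36}{\ell^2}\,\Omega^{\mathsf T}R\Omega=\frac{12}{\ell}\,\Omega^{\mathsf T}R\Omega$. The cross terms vanish because $R$ is constant.

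The paper gives no proof of this lemma; it imports it from Seuret and Gouaisbaut, whose original derivation is framed around Wirtinger-type inequalities (hence the name). Your argument is a plain Bessel-inequality argument: you project $\dot\varphi$ onto the first two shifted Legendre polynomials and discard the $R$-weighted energy of the residual. The same authors later generalized this to the Bessel--Legendre inequality, of which this lemma is the first-order case. Your route has three advantages:
\begin{itemize}
\item it is more elementary, since no auxiliary Wirtinger inequality is needed;
\item it extends verbatim to higher-degree projections;
\item it exposes the equality case $r\equiv 0$, i.e. $\varphi$ a vector polynomial of degree at most two. Taking $\varphi(u)=v(u-a)(u-b)$ with $v\neq 0$ gives equality with $\Omega\neq 0$, so the constant $12$ cannot be increased.
\end{itemize}

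On regularity, your dichotomy is sound, but the finite case needs one more line. If $\varphi$ is differentiable everywhere on $[a,b]$ and $\dot\varphi\in L^1$, then $\varphi(b)-\varphi(a)=\int_a^b\dot\varphi\,du$ still holds; this is a classical theorem, see e.g. Rudin, \emph{Real and Complex Analysis}. Hence $\varphi$ is absolutely continuous and your integration by parts is justified. In fact your proof only uses that $\varphi$ is absolutely continuous with $\dot\varphi\in L^2$. That is exactly the regularity available where the paper applies the lemma, since in its Lyapunov--Krasovskii arguments $\dot e$ generally jumps at $t=0$.
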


In the following development, we present a comprehensive treatment of stability analysis and stabilization for systems involving multiple delays. By leveraging powerful tools--specifically the reciprocally convex combination inequality and the Wirtinger-based integral inequality, alongside free-weighting matrix and delay-partitioning techniques--we obtain enhanced stability and stabilizability results for complex time-delay configurations. Numerical examples will demonstrate that incorporating additional delayed feedback terms effectively increases the maximum admissible time delay. This key finding has served as the basis for the development of novel observer schemes in this paper, which offer relaxed existence conditions by expanding the permissible delay margin in the state evolution.

Consider the following linear system with a constant time delay
\begin{align}
	\label{p6a}
	&\dot{e}(t)=Ne(t)+N_{\tau}e(t-\tau),\\
	\label{p6b}
	&e(t)=\phi(t),\ \forall t \in [-\tau, 0],
\end{align}
where $e(t)\in \mathbb{R}^n$ is the state vector, $\phi(t)$ is the initial function, $N, N_{\tau}\in \mathbb{R}^{n\times n}$ are constant matrices.

The time delay $\tau>0$ is considered for the following two cases:
\begin{itemize}
	\item{\it Case 1}: $\tau>0$ is a known constant delay; and	
	\item{\it Case 2}: $\tau>0$ is unknown but belongs to a known interval, i.e,  $\tau \in [\underline{\tau}, \overline{\tau}]$, where $0<\underline{\tau}< \overline{\tau}$ are two given scalars.
\end{itemize}
In the following, we present some stability conditions, in the form of LMIs, for system (\ref{p6a}).

First, the following lemma presents a stability condition for system (\ref{p6a}) under Case 1. Let us denote
\begin{align}
	&v_i:=\big(\mathbf 0_{n\times (i-1)n}\ \ I_n\ \ \mathbf 0_{n\times (4-i)n}\big)^{\mathsf T},
	\ i=1,\dots,4,\
	\nonumber\\
	&\Pi_1:= \big(v_1\ \ \tau v_3\big),  \quad \Pi_2:= \big(v_4\ \ v_1-v_2\big),\ \nonumber\\
	&\Gamma:=\big(v_1-v_2\quad \sqrt{3}(v_1+v_2-2v_3)\big),
	\ \ \tilde{R}:=\mathrm{diag}(R,R),\nonumber\\
	&{\mathcal N}^{\mathsf T}
	:=
	\big(N\quad N_{\tau}\quad \mathbf 0_{n\times n}\quad -I_n\big)
	\in\mathbb R^{n\times 4n}, \nonumber\\
	&\zeta_0^{\mathsf T}(t):=
	\left(
	e^{\mathsf T}(t)\quad
	\int_{t-\tau}^t e^{\mathsf T}(s)ds
	\right),  \nonumber\\
	&\xi^{\mathsf T}(t)	:=
	\left(
	e^{\mathsf T}(t)\quad
	e^{\mathsf T}(t-\tau)\quad
	\frac{1}{\tau}\int_{t-\tau}^te^{\mathsf T}(s)ds\quad
	\dot e^{\mathsf T}(t)
	\right).\nonumber	
\end{align}	
\begin{lemma}\label{lem:Sta_constant_delay}		
	Suppose that there exist a $2n\times 2n$ matrix $P>0$,
	two $n\times n$ matrices $Q>0$, $R>0$,
	two $n\times n$ matrices $X,Y$,
	such that the following LMI holds:
	\begin{align}\label{eq:LMI}
		\Theta + \mathrm{sym}\big((v_1X+v_4Y)\mathcal{N}^{\mathsf T}\big)<0,					
	\end{align}
	where 	
	\begin{align}\label{eq:theta}
		\Theta:=&\mathrm{sym}\big(\Pi_1P\Pi_2^{\mathsf T}\big)
		+v_1Qv_1^{\mathsf T}
		-v_2Qv_2^{\mathsf T}
		+\tau v_4 Rv_4^{\mathsf T} -\frac{1}{\tau}\Gamma \tilde{R} \Gamma^{\mathsf T}.				 
	\end{align}
	Then, system \eqref{p6a} is asymptotically stable.	
\end{lemma}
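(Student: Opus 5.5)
We prove Lemma \ref{lem:Sta_constant_delay} with a Lyapunov--Krasovskii functional matched to the blocks of $\Theta$. Take
\begin{align*}
V(e_t)&=\zeta_0^{\mathsf T}(t)P\zeta_0(t)+\int_{t-\tau}^{t}e^{\mathsf T}(s)Qe(s)\,ds\\
&\quad+\int_{-\tau}^{0}\int_{t+\theta}^{t}\dot e^{\mathsf T}(s)R\dot e(s)\,ds\,d\theta .
\end{align*}
Because $P>0$, $Q>0$ and $R>0$, we have $V(e_t)\ge \epsilon\|e(t)\|^2$ for some $\epsilon>0$. Also $V(e_t)$ is bounded above by a multiple of the squared $C^1$-norm of the segment $e_t$. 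The plan is to show that $\dot V\le -\delta\|\xi(t)\|^2$ along solutions of \eqref{p6a}, which gives asymptotic stability by the Lyapunov--Krasovskii theorem.

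\textbf{Step 1: express everything through $\xi(t)$.} By construction $\zeta_0(t)=\Pi_1^{\mathsf T}\xi(t)$, since $\int_{t-\tau}^t e=\tau\cdot\frac1\tau\int e$. Its derivative is $\dot\zeta_0(t)=(\dot e(t),\,e(t)-e(t-\tau))=\Pi_2^{\mathsf T}\xi(t)$. Hence the first term of $V$ contributes $\xi^{\mathsf T}\mathrm{sym}(\Pi_1P\Pi_2^{\mathsf T})\xi$. The $Q$-term contributes $\xi^{\mathsf T}(v_1Qv_1^{\mathsf T}-v_2Qv_2^{\mathsf T})\xi$. The double integral contributes
\[
\tau\dot e^{\mathsf T}(t)R\dot e(t)-\int_{t-\tau}^t\dot e^{\mathsf T}(s)R\dot e(s)\,ds .
\]

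\textbf{Step 2: bound the integral term (the only nontrivial estimate).} Apply Lemma \ref{lem_Sueret} with $\varphi=e$ on $[t-\tau,t]$. Then $\varphi(b)-\varphi(a)=(v_1-v_2)^{\mathsf T}\xi$ and $\Omega=\frac12(v_1+v_2-2v_3)^{\mathsf T}\xi$, so
\[
\frac{12}{\tau}\Omega^{\mathsf T}R\Omega=\frac1\tau\Big(\sqrt3(v_1+v_2-2v_3)^{\mathsf T}\xi\Big)^{\mathsf T}R\Big(\sqrt3(v_1+v_2-2v_3)^{\mathsf T}\xi\Big).
\]
Therefore $-\int\dot e^{\mathsf T}R\dot e\le-\frac1\tau\xi^{\mathsf T}\Gamma\tilde R\Gamma^{\mathsf T}\xi$. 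Combined with Step 1, this yields $\dot V\le\xi^{\mathsf T}\Theta\xi$. The main care needed here is the constant bookkeeping: checking the factor $\sqrt3$ against the $12$ and the $\frac14$ coming from $\Omega$. Otherwise the step is routine.

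\textbf{Step 3: free weighting.} Along trajectories of \eqref{p6a} we have $\mathcal N^{\mathsf T}\xi(t)=Ne(t)+N_\tau e(t-\tau)-\dot e(t)=\mathbf 0$. Hence, for any $X,Y$,
\[
0=2\xi^{\mathsf T}(v_1X+v_4Y)\mathcal N^{\mathsf T}\xi=\xi^{\mathsf T}\mathrm{sym}\big((v_1X+v_4Y)\mathcal N^{\mathsf T}\big)\xi .
\]
Adding this zero term gives $\dot V\le\xi^{\mathsf T}\big[\Theta+\mathrm{sym}((v_1X+v_4Y)\mathcal N^{\mathsf T})\big]\xi\le-\delta\|\xi\|^2\le-\delta\|e(t)\|^2$ by \eqref{eq:LMI}.

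\textbf{Step 4: conclude.} The standard Lyapunov--Krasovskii stability theorem now yields asymptotic stability of \eqref{p6a}. Alternatively, integrating the bound gives $e\in L_2$ and $\dot e$ bounded, and Barbalat's lemma then gives $e(t)\to\mathbf 0$.
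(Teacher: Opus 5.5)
Your proposal is correct and follows the paper's proof step for step. It uses the same Lyapunov--Krasovskii functional (your double integral is the paper's after the substitution $s=t+u$), the Wirtinger bound producing $-\frac{1}{\tau}\Gamma\tilde R\Gamma^{\mathsf T}$, the free-weighting identity from $\mathcal N^{\mathsf T}\xi(t)=\mathbf 0$, and the Lyapunov--Krasovskii theorem. Your use of the strict LMI margin to get $\dot V\le-\delta\|e(t)\|^2$ is slightly more careful than the paper's final step, which only records $\dot V\le 0$.
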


\textit{Proof:}\ \ Consider the Lyapunov-Krasovskii functional
\begin{align}\label{eq:LKF}
	V(t,e_t)
	=&\,
	\zeta_0^{\mathsf T}(t)P\zeta_0(t)
	+\int_{t-\tau}^te^{\mathsf T}(s)Qe(s)ds
	\nonumber\\
	&+\int_{-\tau}^{0}\int_{\theta}^0
	\dot{e}^{\mathsf T}(t+u)R\dot{e}(t+u)dud\theta.
\end{align}
Since $P > 0$, $Q > 0$, $R > 0$ and $||e(t)||\leq ||\zeta_0(t)||$, we have $0<\sigma_{\min}(P)$ and
\begin{eqnarray}\label{eq:lambda_min}
	\sigma_{\min}(P)||e(t)||^2\leq \sigma_{\min}(P)||\zeta_0(t)||^2\leq V(t,e_t), \ \forall t\geq 0.
\end{eqnarray}
Taking the derivative of $V(.,.)$ along the trajectory of \eqref{p6a}, we  obtain		
\begin{align}
	\dot{V}(t,e_t) = {} & 2\zeta_0^{\mathsf T}(t)P\dot{\zeta}_0(t) + e^{\mathsf T}(t)Qe(t) \nonumber \\
	& - e^{\mathsf T}(t-\tau)Qe(t-\tau) + \tau\dot{e}^{\mathsf T}(t)R\dot{e}(t) \nonumber \\
	& - \int_{t-\tau}^{t}\dot{e}^{\mathsf T}(s)R\dot{e}(s)ds \nonumber \\
	= {} & \xi^{\mathsf T}(t) \Bigl( \mathrm{sym}\big(\Pi_1 P \Pi_2^{\mathsf T}\big) + v_1 Q v_1^{\mathsf T} \nonumber \\
	& - v_2 Q v_2^{\mathsf T} + \tau v_4 R v_4^{\mathsf T} \Bigr) \xi(t) \nonumber \\
	& - \int_{t-\tau}^{t}\dot{e}^{\mathsf T}(s)R\dot{e}(s)ds. \label{eq:derivative-V}
\end{align}
By Lemma \ref{lem_Sueret}, we have
\begin{align}\label{eq:Wirtinger_ineq}
	&-\int_{t-\tau}^{t}\dot{e}^{\mathsf T}(s)R\dot{e}(s)ds \leq -\xi^{\mathsf T}(t)\frac{1}{\tau}\Gamma \tilde{R} \Gamma^{\mathsf T}\xi(t).
\end{align}	
Furthermore, system \eqref{p6a} can be rewritten in the form:
\begin{align}\label{p6a-rewritten}
	0=Ne(t)+N_{\tau}e(t-\tau)-\dot{e}(t)={\mathcal N}^{\mathsf T}\xi(t).
\end{align}	
Multiplying \eqref{p6a-rewritten} on the left by $2\xi^{\mathsf T}(t)(v_1X+v_4Y)$, i.e., using the free-weighting matrix technique, yields
\begin{align}\label{eq:descriptor}
	0=2\xi^{\mathsf T}(t)(v_1X+v_4Y){\mathcal N}^{\mathsf T}\xi(t).
\end{align}	
From \eqref{eq:theta},  \eqref{eq:derivative-V}, \eqref{eq:Wirtinger_ineq} and \eqref{eq:descriptor} combining with \eqref{eq:LMI}, we arrive at
\begin{align}\label{derivative-V<0}
	\dot{V}(t,e_t)\leq \xi^{\mathsf T}(t)\Big(\Theta+ \mathrm{sym}\big((v_1X+v_4Y)\mathcal{N}^{\mathsf T}\Big)\xi(t) \leq 0.
\end{align}
From \eqref{eq:lambda_min}, \eqref{derivative-V<0}	and the Lyapunov-Krasovskii Theorem \cite{Fridman:14}, we conclude that system (\ref{p6a}) is asymptotically stable.  This completes the proof of Lemma \ref{lem:Sta_constant_delay}.	\\

The next lemma presents a stability condition for system \eqref{p6a} under Case 2. Let us denote
\begin{align*}
	&v_i := \big(\mathbf 0_{n\times (i-1)n}\ \ I_n\ \ \mathbf 0_{n\times (8-i)n}\big)^{\mathsf T},\quad i=1,\dots,8,\\
	&\Pi_1(\tau):= \big(v_1\ \ \underline{\tau}v_5\ \ (\tau-\underline{\tau})v_6 \ \ (\overline{\tau}-\tau)v_7\big), \\
	&\Pi_2 := \big(v_8\ \ v_1-v_2\ \ v_2-v_3\ \ v_3-v_4 \big), \\
	&\Gamma_1 := \big(v_1-v_2\quad \sqrt{3}(v_1+v_2-2v_5)\big),\\
	&\Gamma_2:= \big(v_2-v_3\quad \sqrt{3}(v_2+v_3-2v_6)\big),\\
	&\Gamma_3:= \big(v_3-v_4\quad \sqrt{3}(v_3+v_4-2v_7)\big),\\
	&\Gamma_{2,3}:= \big(\Gamma_2\quad \Gamma_3\big),\\
	&\mathcal N^{\mathsf T}:= \big(N\quad \mathbf 0_{n\times n}\quad  N_{\tau}\quad \mathbf 0_{n\times 4n}\quad -I_n\big) \in\mathbb{R}^{n\times 8n},\\
	&\zeta_0^{\mathsf T}(t):= \Bigl(e^{\mathsf T}(t)\quad  \int_{t-\underline{\tau}}^t e^{\mathsf T}(s)\,ds \\
	&\qquad \qquad \int_{t-\tau}^{t-\underline{\tau}} e^{\mathsf T}(s)\,ds \quad  \int_{t-\overline{\tau}}^{t-\tau} e^{\mathsf T}(s)\,ds \Bigr),\\
	&\xi^{\mathsf T}(t) := \Bigl( e^{\mathsf T}(t)\quad e^{\mathsf T}(t-\underline{\tau}) \quad e^{\mathsf T}(t-\tau) \quad e^{\mathsf T}(t-\overline{\tau}) \\
	&\qquad \qquad \frac{1}{\underline{\tau}}\int_{t-\underline{\tau}}^t e^{\mathsf T}(s)\,ds \quad \frac{1}{\tau-\underline{\tau}}\int_{t-\tau}^{t-\underline{\tau}} e^{\mathsf T}(s)\,ds \\
	&\qquad \qquad \frac{1}{\overline{\tau}-\tau}\int_{t-\overline{\tau}}^{t-\tau} e^{\mathsf T}(s)\,ds\quad \dot e^{\mathsf T}(t) \Bigr).
\end{align*}

\begin{lemma}\label{lem:Sta_interval_delay}
	Suppose that there exist a ${4n\times 4n}$ matrix \(P>0\), five ${n\times n}$ matrices $Q_1>0$, $Q_2>0$, $Q_3>0$, $R_1>0$, $R_2>0$, two ${n\times n}$ matrices \(X, Y\), and a ${2n\times 2n}$ matrix \(S\) such that the following LMIs hold:
	\begin{align}\label{eq:LMI1}
		&\Theta_1(\tau)+\mathrm{sym}\big((v_1X+v_8Y)\mathcal{N}^{\mathsf T}\big) < 0,\quad \tau \in \{\underline{\tau}, \overline{\tau}\}, \\
		&\Xi(\tilde{R}_2,S)>0, \label{eq:LMI_Xi1}
	\end{align}
	where

\begin{align}
	\tilde{R}_1&:=\mathrm{diag}(R_1,R_1),\nonumber\\
	\tilde{R}_2&:=\mathrm{diag}(R_2,R_2), \nonumber\\
	\Xi(\tilde{R}_2,S)&:=\begin{pmatrix}
		\tilde{R}_2 & S\\
		\star & \tilde{R}_2
	\end{pmatrix}, \nonumber\\
		\Theta_1(\tau)&:=
		 \mathrm{sym}\big(\Pi_1(\tau)P\Pi_2^{\mathsf T}\big) +v_1Q_1v_1^{\mathsf T}
		-v_2(Q_1-Q_2)v_2^{\mathsf T}\nonumber\\
		&
		-v_3(Q_2-Q_3)v_3^{\mathsf T}-v_4Q_3v_4^{\mathsf T}
		+\underline{\tau}\, v_8 R_1v_8^{\mathsf T}\nonumber\\
		&
		+(\overline{\tau}-\underline{\tau})v_8 R_2v_8^{\mathsf T}-\frac{1}{\underline{\tau}}\Gamma_1 \tilde{R}_1 \Gamma_1^{\mathsf T}\nonumber\\
		&
		-\frac{1}{\overline{\tau}-\underline{\tau}}\Gamma_{2,3}\, \Xi(\tilde{R}_2,S)\, \Gamma_{2,3}^{\mathsf T}.\label{Theta_1}
\end{align}
	Then, the system \eqref{p6a} is asymptotically stable for any delay \(\tau \in [\underline{\tau}, \overline{\tau}]\).
\end{lemma}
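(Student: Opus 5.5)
We prove Lemma \ref{lem:Sta_interval_delay} by following the template of the proof of Lemma \ref{lem:Sta_constant_delay}, with the delay interval $[t-\overline{\tau},t]$ split into three pieces $[t-\underline{\tau},t]$, $[t-\tau,t-\underline{\tau}]$ and $[t-\overline{\tau},t-\tau]$. We use the Lyapunov--Krasovskii functional
\begin{align*}
V(t,e_t)=&\,\zeta_0^{\mathsf T}(t)P\zeta_0(t)+\int_{t-\underline{\tau}}^t e^{\mathsf T}(s)Q_1e(s)ds+\int_{t-\tau}^{t-\underline{\tau}} e^{\mathsf T}(s)Q_2e(s)ds\\
&+\int_{t-\overline{\tau}}^{t-\tau} e^{\mathsf T}(s)Q_3e(s)ds+\int_{-\underline{\tau}}^{0}\int_{\theta}^0\dot e^{\mathsf T}(t+u)R_1\dot e(t+u)dud\theta\\
&+\int_{-\overline{\tau}}^{-\underline{\tau}}\int_{\theta}^0\dot e^{\mathsf T}(t+u)R_2\dot e(t+u)dud\theta .
\end{align*}
As in \eqref{eq:lambda_min}, positivity of $P,Q_i,R_i$ gives $V\ge\sigma_{\min}(P)\|e(t)\|^2$.

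\textbf{Derivative and $\Theta_1(\tau)$.} Differentiating, we get $\dot\zeta_0=\Pi_2^{\mathsf T}\xi$, so $2\zeta_0^{\mathsf T}P\dot\zeta_0=\xi^{\mathsf T}\mathrm{sym}(\Pi_1(\tau)P\Pi_2^{\mathsf T})\xi$. Here $\zeta_0=\Pi_1(\tau)^{\mathsf T}\xi$, which follows from the normalized integrals in $\xi$. The $Q$-terms give
$$v_1Q_1v_1^{\mathsf T}-v_2(Q_1-Q_2)v_2^{\mathsf T}-v_3(Q_2-Q_3)v_3^{\mathsf T}-v_4Q_3v_4^{\mathsf T}.$$
The $R$-terms give
$$\underline{\tau}\,\dot e^{\mathsf T}R_1\dot e+(\overline{\tau}-\underline{\tau})\dot e^{\mathsf T}R_2\dot e-\int_{t-\underline{\tau}}^t\dot e^{\mathsf T}R_1\dot e-\int_{t-\overline{\tau}}^{t-\underline{\tau}}\dot e^{\mathsf T}R_2\dot e.$$
We bound the first integral by Lemma \ref{lem_Sueret}, which yields $-\frac1{\underline{\tau}}\xi^{\mathsf T}\Gamma_1\tilde R_1\Gamma_1^{\mathsf T}\xi$.

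\textbf{Main obstacle: the $R_2$ integral.} We split the $R_2$ integral at $t-\tau$. Applying Lemma \ref{lem_Sueret} to each piece gives
$$-\frac{1}{\tau-\underline{\tau}}\xi^{\mathsf T}\Gamma_2\tilde R_2\Gamma_2^{\mathsf T}\xi-\frac{1}{\overline{\tau}-\tau}\xi^{\mathsf T}\Gamma_3\tilde R_2\Gamma_3^{\mathsf T}\xi .$$
This is a reciprocally convex combination with $\alpha=(\tau-\underline{\tau})/(\overline{\tau}-\underline{\tau})$. By Lemma \ref{lem_Park}, with $W_1=\Gamma_2^{\mathsf T}$, $W_2=\Gamma_3^{\mathsf T}$, the $2n\times2n$ weight $\tilde R_2$, and the LMI \eqref{eq:LMI_Xi1}, it is bounded above by $-\frac1{\overline{\tau}-\underline{\tau}}\xi^{\mathsf T}\Gamma_{2,3}\Xi(\tilde R_2,S)\Gamma_{2,3}^{\mathsf T}\xi$.

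The boundary cases $\tau=\underline{\tau}$ or $\tau=\overline{\tau}$ need care, since one subinterval degenerates. There we interpret the corresponding averaged integral as the point value, or argue by continuity/limit, and the bound still holds since Lemma \ref{lem_Park} gives a uniform bound.

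\textbf{Free weighting and conclusion.} Next we add the zero term $2\xi^{\mathsf T}(v_1X+v_8Y)\mathcal N^{\mathsf T}\xi$, which vanishes because $\mathcal N^{\mathsf T}\xi=Ne(t)+N_\tau e(t-\tau)-\dot e(t)=0$. This gives
$$\dot V\le\xi^{\mathsf T}\big(\Theta_1(\tau)+\mathrm{sym}((v_1X+v_8Y)\mathcal N^{\mathsf T})\big)\xi .$$
The only $\tau$-dependence of this matrix is in $\Pi_1(\tau)$, and it is affine in $\tau$. So it is a convex combination of its values at $\underline{\tau}$ and $\overline{\tau}$, and \eqref{eq:LMI1} at the two vertices implies negativity for all $\tau\in[\underline{\tau},\overline{\tau}]$. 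Hence $\dot V\le -\varepsilon\|e(t)\|^2$, and the Lyapunov--Krasovskii theorem \cite{Fridman:14} yields asymptotic stability for every such $\tau$.
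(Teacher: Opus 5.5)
Your proposal is correct and follows the paper's proof essentially step for step: the same Lyapunov--Krasovskii functional, splitting the $R_2$ integral at $t-\tau$, applying Wirtinger's inequality to each piece and merging them via Lemma \ref{lem_Park} and $\Xi(\tilde{R}_2,S)>0$, adding the free-weighting term, and using the affine dependence on $\tau$ to pass from the two vertex LMIs to the whole interval. Your remark on the degenerate endpoints $\tau\in\{\underline{\tau},\overline{\tau}\}$ covers a detail the paper leaves unaddressed. There one averaged integral in $\xi$ is undefined and Lemma \ref{lem_Park} needs $\alpha\in(0,1)$. Your point-value interpretation works, because it makes $\Gamma_2^{\mathsf T}\xi$ (resp. $\Gamma_3^{\mathsf T}\xi$) vanish, so the bound reduces to the single Wirtinger estimate.
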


\textit{Proof:} Let us consider the Lyapunov-Krasovskii functional
\begin{align}\label{eq:LKF1}
	V(t,e_t)
	=&\,
	\zeta_0^{\mathsf T}(t)P\zeta_0(t)
	+\int_{t-\underline{\tau}}^te^{\mathsf T}(s)Q_1e(s)ds\nonumber\\
	&
	+\int_{t-\tau}^{t-\underline{\tau}}e^{\mathsf T}(s)Q_2e(s)ds+\int_{t-\overline{\tau}}^{t-\tau}e^{\mathsf T}(s)Q_3e(s)ds\nonumber\\
	&
	+\int_{-\underline{\tau}}^{0}\int_{\theta}^0
	\dot{e}^{\mathsf T}(t+u)R_1\dot{e}(t+u)dud\theta\nonumber\\
	&+\int_{-\overline{\tau}}^{-\underline{\tau}}\int_{\theta}^0
	\dot{e}^{\mathsf T}(t+u)R_2\dot{e}(t+u)dud\theta.
\end{align}
Since $P,\ Q_1,\ Q_2,\ Q_3,\  R_1, \ R_2>0$ and $||e(t)||\leq ||\zeta_0(t)||$, we have $0<\sigma_{\min}(P)$ and
\begin{eqnarray}\label{eq:lambda_min1}
	\sigma_{\min}(P)||e(t)||^2\leq \sigma_{\min}(P)||\zeta_0(t)||^2\leq V(t,e_t), \ \forall t\geq 0.
\end{eqnarray}	
Taking the derivative of $V(.,.)$ along the trajectory of \eqref{p6a}, we  obtain
\begin{align}\label{eq:derivative-V1}
	\dot{V}(t,e_t)	=\ & 2\zeta_0^{\mathsf T}(t)P\dot{\zeta}_0(t)+ e^{\mathsf T}(t)Q_1e(t)\nonumber\\
	&-e^{\mathsf T}(t-\underline{\tau})Q_1e(t-\underline{\tau})+ e^{\mathsf T}(t-\underline{\tau})Q_2e(t-\underline{\tau})\nonumber\\
	&-e^{\mathsf T}(t-\tau)Q_2e(t-\tau)+ e^{\mathsf T}(t-\tau)Q_3e(t-\tau)\nonumber\\
	&-e^{\mathsf T}(t-\overline{\tau})Q_3e(t-\overline{\tau})+\underline{\tau}\dot{e}^{\mathsf T}(t)R_1\dot{e}(t)\nonumber\\
	&-\int_{t-\underline{\tau}}^{t}\dot{e}^{\mathsf T}(s)R_1\dot{e}(s)ds
+(\overline{\tau}-\underline{\tau})\dot{e}^{\mathsf T}(t)R_2\dot{e}(t)\nonumber\\
&-\int_{t-\overline{\tau}}^{t-\underline{\tau}}\dot{e}^{\mathsf T}(s)R_2\dot{e}(s)ds
	\nonumber\\
	=\ &\xi^{\mathsf T}(t)\Big(\mathrm{sym}\big(\Pi_1(\tau)P\Pi_2^{\mathsf T}\big)
	+v_1Q_1v_1^{\mathsf T}\nonumber\\
	&
	-v_2(Q_1-Q_2)v_2^{\mathsf T}-v_3(Q_2-Q_3)v_3^{\mathsf T}-v_4Q_3v_4^{\mathsf T}\nonumber\\
	&+\underline{\tau} v_8 R_1v_8^{\mathsf T} +(\overline{\tau}-\underline{\tau})v_8 R_2v_8^{\mathsf T}\Big)\xi(t)\nonumber\\
	&-\int_{t-\underline{\tau}}^{t}\dot{e}^{\mathsf T}(s)R_1\dot{e}(s)ds-\int_{t-\overline{\tau}}^{t-\underline{\tau}}\dot{e}^{\mathsf T}(s)R_2\dot{e}(s)ds\nonumber\\
	=\ &\xi^{\mathsf T}(t)\Big(\mathrm{sym}\big(\Pi_1(\tau)P\Pi_2^{\mathsf T}\big)
	+v_1Q_1v_1^{\mathsf T}\nonumber\\
	&
	-v_2(Q_1-Q_2)v_2^{\mathsf T}-v_3(Q_2-Q_3)v_3^{\mathsf T}\nonumber\\
	&-v_4Q_3v_4^{\mathsf T}+\underline{\tau} v_8R_1v_8^{\mathsf T}
	+(\overline{\tau}-\underline{\tau})v_8 R_2v_8^{\mathsf T}\Big)\xi(t)\nonumber\\
	&-\int_{t-\underline{\tau}}^{t}\dot{e}^{\mathsf T}(s)R_1\dot{e}(s)ds-\int_{t-\tau}^{t-\underline{\tau}}\dot{e}^{\mathsf T}(s)R_2\dot{e}(s)ds\nonumber\\
	&
	-\int_{t-\overline{\tau}}^{t-\tau}\dot{e}^{\mathsf T}(s)R_2\dot{e}(s)ds.
\end{align}	
By Lemma \ref{lem_Sueret}, we have
\begin{align}\label{eq:Wirtinger_ineq1_1}
	&-\int_{t-\underline{\tau}}^{t}\dot{e}^{\mathsf T}(s)R_1\dot{e}(s)ds \leq -\xi^{\mathsf T}(t)\frac{1}{\underline{\tau}}\Gamma_1 \tilde{R}_1 \Gamma_1^{\mathsf T}\xi(t),\\
	&-\int_{t-\tau}^{t-\underline{\tau}}\dot{e}^{\mathsf T}(s)R_2\dot{e}(s)ds \leq -\xi^{\mathsf T}(t)\frac{1}{\tau-\underline{\tau}}\Gamma_2 \tilde{R}_2 \Gamma_2^{\mathsf T}\xi(t),\label{eq:Wirtinger_ineq1_2}\\
	&-\int_{t-\overline{\tau}}^{t-\tau}\dot{e}^{\mathsf T}(s)R_2\dot{e}(s)ds \leq -\xi^{\mathsf T}(t)\frac{1}{\overline{\tau}-\tau}\Gamma_3 \tilde{R}_2 \Gamma_3^{\mathsf T}\xi(t).\label{eq:Wirtinger_ineq1_3}
\end{align}
Using \eqref{eq:Wirtinger_ineq1_2}, \eqref{eq:Wirtinger_ineq1_3}, \eqref{eq:LMI_Xi1} and Lemma \ref{lem_Park}, we deduce that
\begin{align}\label{eq:Wirtinger_ineq1_4}
	&-\int_{t-\tau}^{t-\underline{\tau}}\dot{e}^{\mathsf T}(s)R_2\dot{e}(s)ds -\int_{t-\overline{\tau}}^{t-\tau}\dot{e}^{\mathsf T}(s)R_2\dot{e}(s)ds \nonumber\\
	\leq
	&-\xi^{\mathsf T}(t)\frac{1}{\overline{\tau}-\underline{\tau}}  \Gamma_{2,3}\ \Xi(\tilde{R}_2,S)\ \Gamma_{2,3}^{\mathsf T}\xi(t).
\end{align}
Furthermore, system
\eqref{p6a} can be rewritten in the form:
\begin{align}\label{p6a-rewritten1}
	0=Ne(t)+N_{\tau}e(t-\tau)-\dot{e}(t)={\mathcal N}^{\mathsf T}\xi(t).
\end{align}	
Multiplying \eqref{p6a-rewritten1} on the left by $2\xi^{\mathsf T}(t)(v_1X+v_8Y)$, i.e., using the free-weighting matrix technique, yields
\begin{align}\label{eq:descriptor1}
	0=2\xi^{\mathsf T}(t)(v_1X+v_8Y){\mathcal N}^{\mathsf T}\xi(t).
\end{align}	
Using \eqref{Theta_1},  \eqref{eq:derivative-V1}, \eqref{eq:Wirtinger_ineq1_1}, \eqref{eq:Wirtinger_ineq1_4}, and \eqref{eq:descriptor1}, we arrive at
\begin{align}\label{derivative-V1<0}
	\dot{V}(t,e_t)\leq \xi^{\mathsf T}(t)\Big(\Theta_1(\tau)+\mathrm{sym}\big((v_1X+v_8Y)\mathcal{N}^{\mathsf T}\big)\Big)\xi(t).
\end{align}
On the other hand, since $\Theta_1(\tau)+\mathrm{sym}\big((v_1X+v_8Y)\mathcal{N}^{\mathsf T}\big)$ is affine, and hence convex, with respect to $\tau$,
it follows from condition \eqref{eq:LMI1} that
\begin{align}\label{Theta_1<0}
	\Theta_1(\tau)+\mathrm{sym}\big((v_1X+v_8Y)\mathcal{N}^{\mathsf T}\big) <0, \quad \forall \tau \in [\underline{\tau}, \overline{\tau}].	
\end{align}
By \eqref{eq:lambda_min1}, \eqref{derivative-V1<0}, \eqref{Theta_1<0},	together with the Lyapunov--Krasovskii theorem \cite{Fridman:14}, we conclude that the system (\ref{p6a}) is asymptotically stable for any delay $\tau \in [\underline{\tau}, \overline{\tau}]$.  This completes the proof of Lemma \ref{lem:Sta_interval_delay}.\\

Since $\tau \in [\underline{\tau}, \overline{\tau}]$, we can use convex parameter-dependent matrices to further improve the stability criterion in Lemma \ref{lem:Sta_interval_delay}, i.e., we can obtain a stability condition that yields a larger allowable range of the time delay. The improved stability criterion is presented in the following lemma.
\begin{lemma}\label{lem:impro_sta_interval_delay}		
	Suppose that there exist two $4n\times 4n$ matrices of the form
	$$P_1=\begin{pmatrix}
		P_{11}^1 & P_{12}\\
		\star & P_{22}
	\end{pmatrix}>0, \quad P_2=\begin{pmatrix}
		P_{11}^2 & P_{12}\\
		\star & P_{22}
	\end{pmatrix}>0,$$
	where $P^1_{11}, P^2_{11}, P_{12},P_{22} \in \mathbb{R}^{2n\times 2n}$,
	ten $n\times n$ matrices $Q_{11}>0,$ $Q_{12}>0,$ $Q_{21}>0,$ $Q_{22}>0,$ $Q_{31}>0,$ $Q_{32}>0,$ $R_{11}> 0$, $R_{12}>0$, $R_{21}>0$, $R_{22}>0$,
	two $2n\times 2n$ matrices $S_1, S_2$, and four $n\times n$ matrices $X_1, X_2, Y_1,  Y_2$, such that the following LMIs hold:
	\begin{align}\label{eq:LMI2}
		&\Theta_2(\tau)+\mathrm{sym}\big((v_1X(\tau)+v_8Y(\tau))\mathcal{N}^{\mathsf T}\big) <0, \quad \tau \in \{\underline{\tau}, \overline{\tau}\},\\
		& \Xi(\tilde{R}_2(\tau),S(\tau))>0, \quad \tau \in \{\underline{\tau},\overline{\tau}\},\label{eq:LMI_Xi2}
	\end{align}
	where
	\begin{align}\label{Theta_2}
		\Theta_2(\tau):=
		& \mathrm{sym}\big(\Pi_1(\tau)P(\tau)\Pi_2^{\mathsf T}\big)
		+v_1Q_1(\tau)v_1^{\mathsf T}\nonumber\\&
		-v_2(Q_1(\tau)-Q_2(\tau))v_2^{\mathsf T}-v_3(Q_2(\tau)-Q_3(\tau))v_3^{\mathsf T}\nonumber\\&-v_4Q_3(\tau)v_4^{\mathsf T}
		+\underline{\tau} v_8 R_1(\tau)v_8^{\mathsf T}
		\nonumber\\
		& +(\overline{\tau}-\underline{\tau})v_8 R_2(\tau)v_8^{\mathsf T}
		-\frac{1}{\underline{\tau}}\Gamma_1 \tilde{R}_1(\tau) \Gamma_1^{\mathsf T}\nonumber\\&		 -\frac{1}{\overline{\tau}-\underline{\tau}}\Gamma_{2,3}\ \Xi(\tilde{R}_2(\tau),S(\tau))\ \Gamma_{2,3}^{\mathsf T},
	\end{align}
	\begin{align*}
			P(\tau)&:=(\tau-\underline{\tau})P_1 +(\overline{\tau}-\tau)P_2,\\
			Q_1(\tau)&:=(\tau-\underline{\tau})Q_{11} +(\overline{\tau}-\tau)Q_{12},\\
			Q_2(\tau)&:=(\tau-\underline{\tau})Q_{21} +(\overline{\tau}-\tau)Q_{22},\\
			Q_3(\tau)&:=(\tau-\underline{\tau})Q_{31} +(\overline{\tau}-\tau)Q_{32},\\
			R_1(\tau)&:=(\tau-\underline{\tau})R_{11} +(\overline{\tau}-\tau)R_{12},\\
			R_2(\tau)&:=(\tau-\underline{\tau})R_{21} +(\overline{\tau}-\tau)R_{22},\\
			\tilde{R}_1(\tau)&:=\mathrm{diag}(R_1(\tau),R_1(\tau)),\\
			\tilde{R}_2(\tau)&:=\mathrm{diag}(R_2(\tau),R_2(\tau)),\\
			S(\tau)&:=(\tau-\underline{\tau})S_{1} +(\overline{\tau}-\tau)S_{2},\\
			\Xi(\tilde{R}_2(\tau),S(\tau))&:=\begin{pmatrix}
				\tilde{R}_2(\tau) & S(\tau)\\
				\star &\tilde{R}_2(\tau)
			\end{pmatrix},\\
			X(\tau)&:=(\tau-\underline{\tau})X_1 +(\overline{\tau}-\tau)X_2,\\
			Y(\tau)&:=(\tau-\underline{\tau})Y_1 +(\overline{\tau}-\tau)Y_2.
	\end{align*}	
	\noindent Then, system \eqref{p6a} is asymptotically stable for any delay $\tau \in [\underline{\tau}, \overline{\tau}]$.	
\end{lemma}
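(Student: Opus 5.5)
The approach follows the proof of Lemma \ref{lem:Sta_interval_delay} almost verbatim, except that every decision matrix is now an affine function of $\tau$. Consider the parameter-dependent Lyapunov--Krasovskii functional obtained from \eqref{eq:LKF1} by replacing $P,Q_i,R_i$ with $P(\tau),Q_i(\tau),R_i(\tau)$. For a fixed, unknown delay $\tau\in[\underline{\tau},\overline{\tau}]$ these are constant matrices, so the functional is a legitimate time-invariant LKF for that $\tau$. The matrices are affine in $\tau$ with nonnegative weights $(\tau-\underline{\tau})$ and $(\overline{\tau}-\tau)$, not both zero, so positivity is inherited: $P(\tau)>0$, $Q_i(\tau)>0$ and $R_i(\tau)>0$ for every $\tau$ in the interval. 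Note that the common blocks $P_{12},P_{22}$ are not needed for positivity; they are there to keep the product structure manageable. This gives the lower bound $\sigma_{\min}(P(\tau))\|e(t)\|^2\le V$ as in \eqref{eq:lambda_min1}.

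The derivative computation is identical to \eqref{eq:derivative-V1}. Since $\tau$ is fixed along trajectories, no $\dot P$-type terms appear. Next I would split the $R_2(\tau)$ integral at $t-\tau$ and apply Lemma \ref{lem_Sueret} to each of the three integrals. The two $R_2$ pieces are then combined through Lemma \ref{lem_Park} with $S(\tau)$, which requires $\Xi(\tilde R_2(\tau),S(\tau))>0$. This holds for all $\tau$ because $\Xi$ is affine in $\tau$ and \eqref{eq:LMI_Xi2} holds at both endpoints, so convexity of the PD cone applies. Adding the free-weighting identity $0=2\xi^{\mathsf T}(v_1X(\tau)+v_8Y(\tau))\mathcal N^{\mathsf T}\xi$ then gives
\[
\dot V\le \xi^{\mathsf T}(t)\Big(\Theta_2(\tau)+\mathrm{sym}\big((v_1X(\tau)+v_8Y(\tau))\mathcal N^{\mathsf T}\big)\Big)\xi(t).
\]

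The main obstacle is the step from the two vertex LMIs \eqref{eq:LMI2} to negativity on the whole interval. The matrix above is not affine in $\tau$: the term $\mathrm{sym}(\Pi_1(\tau)P(\tau)\Pi_2^{\mathsf T})$ is quadratic, because $\Pi_1(\tau)$ carries the factors $(\tau-\underline{\tau})$ and $(\overline{\tau}-\tau)$ and $P(\tau)$ is also affine. The remaining terms are affine, which is why $1/(\overline{\tau}-\underline{\tau})$ and $1/\underline{\tau}$ are constants. I would first write the matrix as $\Phi(\tau)=\Phi_0+\tau\Phi_1+\tau^2\Phi_2$ and try to show $\Phi_2\succeq 0$, or handle it directly, so that $\Phi$ is matrix-convex along every direction $\xi$. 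Then $\xi^{\mathsf T}\Phi(\tau)\xi$ is a convex scalar function of $\tau$, hence bounded by its maximum at the endpoints, and it is negative there by \eqref{eq:LMI2}. The shared blocks $P_{12},P_{22}$ should be exactly what tames the quadratic coefficient. Only the first $2n$ columns of $\Pi_1$, namely $v_1$ and $\underline{\tau}v_5$, are $\tau$-independent, so the cross terms between the affine $P^k_{11}$ parts and the $\tau$-dependent columns must be tracked. If convexity fails, the fallback is a standard quadratic-in-$\tau$ argument: add a condition on $\Phi_2$, or use the fact that a quadratic negative at both endpoints with nonnegative leading coefficient is negative in between. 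I would check which of these the common-block structure delivers.

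Once $\Phi(\tau)<0$ holds on $[\underline{\tau},\overline{\tau}]$, $\dot V\le -\epsilon\|e(t)\|^2$ for some $\epsilon>0$, and the Lyapunov--Krasovskii theorem \cite{Fridman:14} gives asymptotic stability of \eqref{p6a} for every delay in the interval.
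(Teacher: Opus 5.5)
Your outline follows the paper's route: the parameter-dependent functional, positivity of $P(\tau),Q_i(\tau),R_i(\tau),\Xi(\tau)$ from the two vertices, and the derivative bound of Lemma \ref{lem:Sta_interval_delay}. The gap is the step you call the main obstacle. You leave it unresolved, and it rests on a false claim. The term $\mathrm{sym}\big(\Pi_1(\tau)P(\tau)\Pi_2^{\mathsf T}\big)$ is not quadratic in $\tau$; it is affine, and that is exactly what the shared blocks $P_{12},P_{22}$ are for. Since $(\tau-\underline{\tau})+(\overline{\tau}-\tau)=\overline{\tau}-\underline{\tau}$,
\[
P(\tau)=\begin{pmatrix} P_{11}(\tau) & (\overline{\tau}-\underline{\tau})P_{12}\\ \star & (\overline{\tau}-\underline{\tau})P_{22}\end{pmatrix},\qquad P_{11}(\tau):=(\tau-\underline{\tau})P^1_{11}+(\overline{\tau}-\tau)P^2_{11},
\]
so only the $(1,1)$ block depends on $\tau$. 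Write $\Pi_1(\tau)=\big(\Pi_{1a}\ \ \Pi_{1b}(\tau)\big)$ with $\Pi_{1a}=(v_1\ \ \underline{\tau}v_5)$ constant and $\Pi_{1b}(\tau)=\big((\tau-\underline{\tau})v_6\ \ (\overline{\tau}-\tau)v_7\big)$. Then
\[
\Pi_1(\tau)P(\tau)=\Big(\Pi_{1a}P_{11}(\tau)+(\overline{\tau}-\underline{\tau})\Pi_{1b}(\tau)P_{12}^{\mathsf T}\quad (\overline{\tau}-\underline{\tau})\big(\Pi_{1a}P_{12}+\Pi_{1b}(\tau)P_{22}\big)\Big),
\]
and every entry is affine. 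The $\tau$-dependent block $P_{11}(\tau)$ meets only the constant columns $v_1,\underline{\tau}v_5$, and the $\tau$-dependent columns meet only the constant blocks. Your remark that cross terms between the $P^k_{11}$ parts and the $\tau$-dependent columns must be tracked has this backwards: there are none. Hence $\Phi_2=0$, $\Phi(\tau)$ is affine, and $\Phi(\tau)=\frac{\tau-\underline{\tau}}{\overline{\tau}-\underline{\tau}}\Phi(\overline{\tau})+\frac{\overline{\tau}-\tau}{\overline{\tau}-\underline{\tau}}\Phi(\underline{\tau})<0$ on the whole interval. This is precisely the ``affine, hence convex'' step in the paper's proof.

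Your fallbacks do not replace this computation. Adding a hypothesis on $\Phi_2$ would prove a different lemma. Trying to show $\Phi_2\succeq 0$ by a sign argument cannot work either. Any $\tau^2$ term must combine a $\tau$-dependent column ($v_6$ or $v_7$) of $\Pi_1(\tau)$, a $\tau$-dependent block of $P(\tau)$, and the constant $\Pi_2$, whose columns involve only $v_1,\dots,v_4,v_8$. So $\Phi_2$ would be a symmetrized cross term with all diagonal blocks zero, which is positive semidefinite only if it vanishes. Without the common-block structure, for instance if $P_{12}$ or $P_{22}$ differed between $P_1$ and $P_2$, this indefinite $\tau^2$ term would genuinely appear and the vertex argument would break down. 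The structural observation above is therefore the missing idea, not optional bookkeeping.
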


\textit{Proof:}  Since $P(\tau)$ is affine, and consequently convex, with respect to $\tau$, it follows, from two conditions $P_1>0$, $P_2>0$, that
$$P(\tau)>0,\quad \forall \tau\in[\underline{\tau},\overline{\tau}].$$
Similarly, we also have for all $\tau\in[\underline{\tau},\overline{\tau}],$
\begin{align*}
	Q_1(\tau)>0,\ Q_2(\tau)>0,\ Q_3(\tau)>0,\\
	 R_1(\tau)>0, \ R_2 (\tau)>0,\ \Xi(\tilde{R}_2(\tau),S(\tau))>0.
\end{align*}
Noting that  $\Theta_2(\tau)+\mathrm{sym}\big((v_1X(\tau)+v_8Y(\tau))\mathcal{N}^{\mathsf T}\big)$ is affine, and hence convex, with respect to $\tau$, it follows from condition \eqref{eq:LMI2}   that $$\Theta_2(\tau)+\mathrm{sym}\big((v_1X(\tau)+v_8Y(\tau))\mathcal{N}^{\mathsf T}\big)< 0,\quad  \forall \tau\in[\underline{\tau},\overline{\tau}].$$
Following the same arguments as in the proof of Lemma \ref{lem:Sta_interval_delay}, we conclude that system (\ref{p6a}) is asymptotically stable for any delay $\tau \in [\underline{\tau}, \overline{\tau}]$. This completes the proof of Lemma \ref{lem:impro_sta_interval_delay}.\\

We now present a numerical example to illustrate the feasibility and effectiveness of the results established in Lemmas \ref{lem:Sta_constant_delay}, \ref{lem:Sta_interval_delay}, and \ref{lem:impro_sta_interval_delay}.

\textit{Example A1:}
Consider system \eqref{p6a} with $N=\begin{pmatrix}
	0 & 1\\
	-2 & 0.1
\end{pmatrix}$ and
$N_{\tau}=\begin{pmatrix}
	0 & 0\\
	1 &0
\end{pmatrix}$.

By Lemma \ref{lem:Sta_constant_delay}, the system \eqref{p6a} is asymptotically stable with a maximum allowable time delay of $\tau = 1.54$. By Lemmas \ref{lem:Sta_interval_delay} and \ref{lem:impro_sta_interval_delay}, the maximum allowable time delays are $\overline{\tau} = 1.66$ and $\overline{\tau} = 1.67$, corresponding to $\tau \in [0.8, 1.66]$ and $\tau \in [0.5, 1.67]$, respectively (see Table \ref{table:ranges}). It can be seen that, compared with Lemma \ref{lem:Sta_constant_delay}, both Lemmas \ref{lem:Sta_interval_delay} and \ref{lem:impro_sta_interval_delay} provide larger maximum allowable time delays. This demonstrates that the stability conditions presented in Lemmas \ref{lem:Sta_interval_delay} and \ref{lem:impro_sta_interval_delay} are less conservative than the condition presented in Lemma \ref{lem:Sta_constant_delay}.

{\large \begin{table}[!h]
		\caption{Ranges of admissible time delays}
		\label{table:ranges}
		\vspace{0.2cm}
		\centering{\begin{tabular}{|l|c|c|}
				\hline
				$\underline{\tau}$\ $\backslash$\ {\rm methods}\qquad       & \qquad {\rm Lemma} \ref{lem:Sta_interval_delay}  \qquad  \qquad             &\qquad  {\rm Lemma} \ref{lem:impro_sta_interval_delay}     \qquad  \qquad              \\
				\hline
				
				\hline
				$\underline{\tau}=0.2$                                    &  [0.2,\ 1.42]     &  [0.2,\ 1.62]       \\
				\hline
				$\underline{\tau}=0.3$                                    &  [0.3,\ 1.55]     &  [0.3,\ 1.65]       \\
				\hline
				$\underline{\tau}=0.5$                                    &  [0.5,\ 1.65]     &  [0.5,\ 1.67]       \\
				\hline
				$\underline{\tau}=0.8$                                    &  [0.8,\ 1.66]     &  [0.8,\ 1.66]       \\
				\hline
				$\underline{\tau}=1.2$                                    &  [1.2,\ 1.64]     &  [1.2,\ 1.64]       \\
				\hline	 		
			\end{tabular}
		}	 	
\end{table}}

Moreover, from Table \ref{table:ranges}, it can be observed that, compared with Lemma \ref{lem:Sta_interval_delay}, Lemma \ref{lem:impro_sta_interval_delay} provides a wider range of admissible time delays, particularly, when $\underline{\tau}$ is small. This indicates that the stability condition in Lemma \ref{lem:impro_sta_interval_delay} is less conservative than that in Lemma \ref{lem:Sta_interval_delay}. However, the computational complexity of the stability condition in Lemma \ref{lem:impro_sta_interval_delay} is significantly higher than that in Lemma \ref{lem:Sta_interval_delay}.

Similar to the above, we also derive stability conditions for systems with multiple time delays. For simplicity, we consider a system with three time delays as follows:
\begin{align}
	\label{p7a}
	&\dot{e}(t)=Ne(t)+N_{1}e(t-\tau_1)+N_{2}e(t-\tau_2)+N_{3}e(t-\tau_3),\\
	\label{p7b}
	&e(t)=\phi(t), \quad \forall t \in [-\tau_3, 0],
\end{align}
where \(e(t)\in \mathbb{R}^n\) is the state vector, and \(\phi(t)\) is the initial function. The scalars \(0<\tau_1<\tau_2<\tau_3\) represent time delays, and \(N, N_{1}, N_2, N_3 \in \mathbb{R}^{n\times n}\) are constant matrices.

Let us denote
\begin{align*}
	v_i &:= \big(\mathbf 0_{n\times (i-1)n}\ \ I_n\ \ \mathbf 0_{n\times (8-i)n}\big)^{\mathsf T},\quad i=1,\dots,8,\\
	\Pi_1 &:= \big(v_1\ \ \tau_1v_5\ \ (\tau_2-\tau_1)v_6 \ \ (\tau_3-\tau_2)v_7\big),\\
	\Pi_2 &:= \big(v_8\ \ v_1-v_2\ \ v_2-v_3\ \ v_3-v_4 \big),\\
	\Gamma_1 &:= \big(v_1-v_2\quad \sqrt{3}(v_1+v_2-2v_5)\big),\\
	\Gamma_2 &:= \big(v_2-v_3\quad \sqrt{3}(v_2+v_3-2v_6)\big),\\
	\Gamma_3 &:= \big(v_3-v_4\quad \sqrt{3}(v_3+v_4-2v_7)\big),\\
	\mathcal{N}^{\mathsf T} &:= \big(N\quad N_1\quad N_2\quad N_3\quad \mathbf 0_{n\times 3n}\quad -I_n\big) \in\mathbb{R}^{n\times 8n},\\
	\zeta_0^{\mathsf T}(t) &:= \Bigl(e^{\mathsf T}(t)\quad  \int_{t-\tau_1}^t e^{\mathsf T}(s)\,ds \\
	&\quad  \int_{t-\tau_2}^{t-\tau_1} e^{\mathsf T}(s)\,ds \quad  \int_{t-\tau_3}^{t-\tau_2} e^{\mathsf T}(s)\,ds \Bigr),\\
	\xi^{\mathsf T}(t) &:= \Bigl( e^{\mathsf T}(t)\quad e^{\mathsf T}(t-\tau_1)\quad e^{\mathsf T}(t-\tau_2)\quad e^{\mathsf T}(t-\tau_3) \\
	&\quad \frac{1}{\tau_1}\int_{t-\tau_1}^t e^{\mathsf T}(s)\,ds \quad \frac{1}{\tau_2-\tau_1}\int_{t-\tau_2}^{t-\tau_1} e^{\mathsf T}(s)\,ds \\
	&\quad \frac{1}{\tau_3-\tau_2}\int_{t-\tau_3}^{t-\tau_2} e^{\mathsf T}(s)\,ds\quad \dot e^{\mathsf T}(t) \Bigr).
\end{align*}

The following lemma presents an LMI-based stability condition for system \eqref{p7a}.
\begin{lemma}\label{lem:Sta_multi}
	Suppose there exist a $4n\times 4n$-matrix $P>0$, six  $n\times n$-matrices
	$Q_1> 0$, $Q_2> 0$, $Q_3> 0$, $R_1> 0$, $R_2> 0$, $R_3> 0$,
	two $n\times n $-matrices $X, Y$, such that the following LMI holds:
	\begin{align}\label{eq:LMI3}			 \Theta_3+\mathrm{sym}\big((v_1X+v_8Y)\mathcal{N}^{\mathsf T}\big)<0,
	\end{align}
	where $
	\tilde{R}_1:=\mathrm{diag}(R_1,R_1),\quad
	\tilde{R}_2:=\mathrm{diag}(R_2,R_2),$ $
	\tilde{R}_3:=\mathrm{diag}(R_3,R_3),$ and	
	\begin{align}
		\Theta_3 := {} & \mathrm{sym}\big(\Pi_1 P \Pi_2^{\mathsf T}\big) + v_1 Q_1 v_1^{\mathsf T} - v_2(Q_1 - Q_2)v_2^{\mathsf T} \nonumber \\
		& - v_3(Q_2 - Q_3)v_3^{\mathsf T} - v_4 Q_3 v_4^{\mathsf T} + \tau_1 v_8 R_1 v_8^{\mathsf T} \nonumber \\
		& + (\tau_2 - \tau_1) v_8 R_2 v_8^{\mathsf T} + (\tau_3 - \tau_2) v_8 R_3 v_8^{\mathsf T} \nonumber \\
		& - \frac{1}{\tau_1}\Gamma_1 \tilde{R}_1 \Gamma_1^{\mathsf T} - \frac{1}{\tau_2 - \tau_1}\Gamma_2 \tilde{R}_2 \Gamma_2^{\mathsf T} \nonumber \\
		& - \frac{1}{\tau_3 - \tau_2}\Gamma_3 \tilde{R}_3 \Gamma_3^{\mathsf T}. \label{Theta3}
	\end{align}
	Then, system (\ref{p7a}) is asymptotically stable.
\end{lemma}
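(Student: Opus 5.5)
The argument will follow the proofs of Lemma \ref{lem:Sta_constant_delay} and Lemma \ref{lem:Sta_interval_delay}, now with the delay interval $[0,\tau_3]$ split at $\tau_1$ and $\tau_2$. I would use the Lyapunov--Krasovskii functional
\begin{align*}
V(t,e_t)=&\,\zeta_0^{\mathsf T}(t)P\zeta_0(t)+\int_{t-\tau_1}^{t}e^{\mathsf T}(s)Q_1e(s)ds+\int_{t-\tau_2}^{t-\tau_1}e^{\mathsf T}(s)Q_2e(s)ds\\
&+\int_{t-\tau_3}^{t-\tau_2}e^{\mathsf T}(s)Q_3e(s)ds+\int_{-\tau_1}^{0}\int_{\theta}^{0}\dot e^{\mathsf T}(t+u)R_1\dot e(t+u)dud\theta\\
&+\int_{-\tau_2}^{-\tau_1}\int_{\theta}^{0}\dot e^{\mathsf T}(t+u)R_2\dot e(t+u)dud\theta+\int_{-\tau_3}^{-\tau_2}\int_{\theta}^{0}\dot e^{\mathsf T}(t+u)R_3\dot e(t+u)dud\theta .
\end{align*}
Since $P,Q_i,R_i>0$ and $\|e(t)\|\le\|\zeta_0(t)\|$, the lower bound $\sigma_{\min}(P)\|e(t)\|^2\le V(t,e_t)$ holds exactly as in \eqref{eq:lambda_min}.

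\textbf{Step 1: differentiate $V$.} The derivative of $\zeta_0(t)$ is
\[
\dot\zeta_0(t)=\big(\dot e(t),\ e(t)-e(t-\tau_1),\ e(t-\tau_1)-e(t-\tau_2),\ e(t-\tau_2)-e(t-\tau_3)\big)=\Pi_2^{\mathsf T}\xi(t).
\]
The averaged integrals in $\xi$ give $\zeta_0(t)=\Pi_1^{\mathsf T}\xi(t)$. So the first term contributes $\xi^{\mathsf T}\mathrm{sym}(\Pi_1P\Pi_2^{\mathsf T})\xi$.

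The $Q$-terms contribute the boundary evaluations. Collected at $t,t-\tau_1,t-\tau_2,t-\tau_3$, these are $v_1Q_1v_1^{\mathsf T}-v_2(Q_1-Q_2)v_2^{\mathsf T}-v_3(Q_2-Q_3)v_3^{\mathsf T}-v_4Q_3v_4^{\mathsf T}$.

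The double integrals contribute $(\tau_1R_1+(\tau_2-\tau_1)R_2+(\tau_3-\tau_2)R_3)$ in the $v_8$ position. They also leave the negative terms $-\int_{t-\tau_1}^{t}\dot e^{\mathsf T}R_1\dot e$, $-\int_{t-\tau_2}^{t-\tau_1}\dot e^{\mathsf T}R_2\dot e$ and $-\int_{t-\tau_3}^{t-\tau_2}\dot e^{\mathsf T}R_3\dot e$.

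\textbf{Step 2: bound the integral terms.} Apply Lemma \ref{lem_Sueret} to each of the three integrals separately, on $[t-\tau_1,t]$, $[t-\tau_2,t-\tau_1]$ and $[t-\tau_3,t-\tau_2]$. Each Wirtinger bound produces $-\frac{1}{\tau_{i}-\tau_{i-1}}\Gamma_i\tilde R_i\Gamma_i^{\mathsf T}$, with $\tau_0=0$. Here $\Omega$ is expressed through $v_1+v_2-2v_5$, and similarly for the other two pieces. No reciprocally convex step is needed, because the delays are fixed and each segment has its own $R_i$.

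\textbf{Step 3: add the zero term and conclude.} Add the free-weighting identity $0=2\xi^{\mathsf T}(v_1X+v_8Y)\mathcal N^{\mathsf T}\xi$, which holds since $\mathcal N^{\mathsf T}\xi(t)=Ne+N_1e(t-\tau_1)+N_2e(t-\tau_2)+N_3e(t-\tau_3)-\dot e=0$ by \eqref{p7a}. This gives
\[
\dot V\le\xi^{\mathsf T}\big(\Theta_3+\mathrm{sym}((v_1X+v_8Y)\mathcal N^{\mathsf T})\big)\xi<0
\]
for $\xi\neq0$. In fact \eqref{eq:LMI3} gives $\dot V\le-\epsilon\|e(t)\|^2$. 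The Lyapunov--Krasovskii theorem \cite{Fridman:14} then yields asymptotic stability.

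The main obstacle is bookkeeping, in particular checking that every block sits in the correct $v_i$ slot. The slots to verify are: $N_i$ placed at $v_2,v_3,v_4$ in $\mathcal N^{\mathsf T}$; the sign convention of $\Omega$ in Lemma \ref{lem_Sueret}, matched to $\sqrt3(v_{i}+v_{i+1}-2v_{i+4})$; and the scaling factors in $\Pi_1$ so that $\zeta_0=\Pi_1^{\mathsf T}\xi$.
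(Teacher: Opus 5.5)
Your proposal is correct and follows essentially the same proof as the paper. It uses the same Lyapunov--Krasovskii functional, the same bookkeeping $\zeta_0=\Pi_1^{\mathsf T}\xi$ and $\dot\zeta_0=\Pi_2^{\mathsf T}\xi$, a separate Wirtinger bound on each of the three subintervals with no reciprocally convex step, and the free-weighting identity $\mathcal N^{\mathsf T}\xi(t)=0$. Your remark that strictness of \eqref{eq:LMI3} gives $\dot V\le-\epsilon\|e(t)\|^2$ is in fact slightly more careful than the paper's final line, which writes $\dot V\le\xi^{\mathsf T}\Theta_3\xi\le 0$, dropping the free-weighting term and stating only non-strict negativity.
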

\textit{Proof:} Consider the Lyapunov--Krasovskii functional
\begin{align}\label{eq:LKF3}
	V(t,e_t) = &\ \zeta_0^{\mathsf T}(t)P\zeta_0(t)
	+\int_{t-\tau_1}^te^{\mathsf T}(s)Q_1e(s)ds
	\nonumber \\
	&+\int_{t-\tau_2}^{t-\tau_1}e^{\mathsf T}(s)Q_2e(s)ds+\int_{t-\tau_3}^{t-\tau_2}e^{\mathsf T}(s)Q_3e(s)ds\nonumber \\
	&+\int_{-\tau_1}^{0}\int_{\theta}^0\dot{e}^{\mathsf T}(t+u)R_1\dot{e}(t+u)dud\theta\nonumber \\ &+\int_{-\tau_2}^{-\tau_1}\int_{\theta}^0\dot{e}^{\mathsf T}(t+u)R_2\dot{e}(t+u)du d\theta \nonumber \\
	&+\int_{-\tau_3}^{-\tau_2}\int_{\theta}^0\dot{e}^{\mathsf T}(t+u)R_3\dot{e}(t+u)du d\theta.
\end{align}
Since $P, Q_1, Q_2, Q_3, R_1, R_2, R_3>0$ and $||e(t)||\leq ||\zeta_0(t)||$, we have $0<\sigma_{\min}(P)$ and
\begin{eqnarray}\label{eq:lambda_min3}
	\sigma_{\min}(P)||e(t)||^2\leq \sigma_{\min}(P)||\zeta_0(t)||^2\leq V(t,e_t), \ \forall t\geq 0.
\end{eqnarray}	
Taking the derivative of $V$ along the trajectory of \eqref{p7a}, we  obtain
\begin{align}
	\dot{V}(t,e_t) = {} & 2\zeta_0^{\mathsf T}(t)P\dot{\zeta}_0(t) + e^{\mathsf T}(t)Q_1e(t) \nonumber \\
	& - e^{\mathsf T}(t-\tau_1)(Q_1-Q_2)e(t-\tau_1) \nonumber \\
	& - e^{\mathsf T}(t-\tau_2)(Q_2-Q_3)e(t-\tau_2) \nonumber \\
	& - e^{\mathsf T}(t-\tau_3)Q_3e(t-\tau_3) + \tau_1\dot{e}^{\mathsf T}(t)R_1\dot{e}(t) \nonumber \\
	& - \int_{t-\tau_1}^{t}\dot{e}^{\mathsf T}(s)R_1\dot{e}(s)ds + (\tau_2-\tau_1)\dot{e}^{\mathsf T}(t)R_2\dot{e}(t) \nonumber \\
	& - \int_{t-\tau_2}^{t-\tau_1}\dot{e}^{\mathsf T}(s)R_2\dot{e}(s)ds + (\tau_3-\tau_2)\dot{e}^{\mathsf T}(t)R_3\dot{e}(t) \nonumber \\
	& - \int_{t-\tau_3}^{t-\tau_2}\dot{e}^{\mathsf T}(s)R_3\dot{e}(s)ds \nonumber \\
	= {} & \xi^{\mathsf T}(t) \Big( \mathrm{sym}\big(\Pi_1 P \Pi_2^{\mathsf T}\big) + v_1 Q_1 v_1^{\mathsf T} \nonumber \\
	& - v_2(Q_1-Q_2)v_2^{\mathsf T} - v_3(Q_2-Q_3)v_3^{\mathsf T} - v_4 Q_3 v_4^{\mathsf T} \nonumber \\
	& + \tau_1 v_8 R_1 v_8^{\mathsf T} + (\tau_2-\tau_1)v_8 R_2 v_8^{\mathsf T} \nonumber \\
	& + (\tau_3-\tau_2)v_8 R_3 v_8^{\mathsf T} \Big) \xi(t) \nonumber \\
	& - \int_{t-\tau_1}^{t}\dot{e}^{\mathsf T}(s)R_1\dot{e}(s)ds - \int_{t-\tau_2}^{t-\tau_1}\dot{e}^{\mathsf T}(s)R_2\dot{e}(s)ds \nonumber \\
	& - \int_{t-\tau_3}^{t-\tau_2}\dot{e}^{\mathsf T}(s)R_3\dot{e}(s)ds. \label{eq:derivative-V3}
\end{align}
By Lemma \ref{lem_Sueret}, we have
\begin{align}\label{eq:Wirtinger_ineq3_1}
	&-\int_{t-\tau_1}^{t}\dot{e}^{\mathsf T}(s)R_1\dot{e}(s)ds \leq -\xi^{\mathsf T}(t)\frac{1}{\tau_1}\Gamma_1 \tilde{R}_1 \Gamma_1^{\mathsf T}\xi(t),\\
	&-\int_{t-\tau_2}^{t-\tau_1}\dot{e}^{\mathsf T}(s)R_2\dot{e}(s)ds \leq -\xi^{\mathsf T}(t)\frac{1}{\tau_2-\tau_1}\Gamma_2 \tilde{R}_2 \Gamma_2^{\mathsf T}\xi(t),\label{eq:Wirtinger_ineq3_2}\\
	&-\int_{t-\tau_3}^{t-\tau_2}\dot{e}^{\mathsf T}(s)R_3\dot{e}(s)ds \leq -\xi^{\mathsf T}(t)\frac{1}{\tau_3-\tau_2}\Gamma_3 \tilde{R}_3 \Gamma_3^{\mathsf T}\xi(t).\label{eq:Wirtinger_ineq3_3}
\end{align}
By the free-weighting matrix technique, we also have
\begin{align}\label{eq:descriptor3}
	0=2\xi^{\mathsf T}(t)(v_1X+v_8Y){\mathcal N}^{\mathsf T}\xi(t).
\end{align}	
From \eqref{Theta3},  \eqref{eq:derivative-V3}-\eqref {eq:descriptor3}, and \eqref{eq:LMI3}, it follows that
\begin{align}\label{derivative-V3<0}
	\dot{V}(t,e_t)\leq \xi^{\mathsf T}(t)\Theta_3 \xi(t) \leq 0.
\end{align}
From \eqref{eq:lambda_min3}, \eqref{derivative-V3<0},	 and the Lyapunov--Krasovskii Theorem \cite{Fridman:14}, we conclude that the system \eqref{p7a} is asymptotically stable.  This completes the proof of Lemma \ref{lem:Sta_multi}.

\textit{Remark A1:}
We now reconsider system \eqref{p6a} in Case 1. By setting
\(\tau_1=\frac{\tau}{3}\), \(\tau_2=\frac{2\tau}{3}\), \(\tau_3=\tau\),
\(N_1=N_2=\mathbf{0}_{n\times n}\), and \(N_3=N_{\tau}\), Lemma \ref{lem:Sta_multi} can also be applied to derive a stability condition for system \eqref{p6a}. Since more information on the state vector is utilized, this derived stability condition is less conservative than that in Lemma \ref{lem:Sta_constant_delay}.		

\textit{Example A2:} Let us reconsider system \eqref{p6a} with matrices
\[
N=\begin{pmatrix}
	0 & 1\\
	-2 & 0.1
\end{pmatrix}, \quad
N_{\tau}=\begin{pmatrix}
	0 & 0\\
	1 & 0
\end{pmatrix}.
\]
By Lemma \ref{lem:Sta_constant_delay}, system \eqref{p6a} is asymptotically stable for a maximum allowable time delay \(\tau = 1.54\). By the proposed method, system \eqref{p6a} remains stable for a larger maximum allowable time delay of \(\tau = 1.69\). This indicates that the time-delay partitioning technique (Remark A1) helps to significantly reduce the conservatism of the derived stability condition.

\textit{Remark A2:}
We now consider a system with two time delays:
\begin{align}
	\label{p8a}
	&\dot{e}(t)=Ne(t)+N_{\tau}e(t-\tau)+N_h e(t-h),\\
	\label{p8b}
	&e(t)=\phi(t), \quad \forall t \in [-h, 0],
\end{align}
where \(e(t)\in \mathbb{R}^n\) is the state vector, and \(\phi(t)\) is the initial function. The scalars \(\tau>0\) and \(h>\tau\) represent time delays, and \(N, N_{\tau}, N_h \in \mathbb{R}^{n\times n}\) are constant matrices.

By setting
\(\tau_1=\frac{\tau}{2}\), \(\tau_2=\tau\), \(\tau_3=h\),
\(N_1=\mathbf{0}_{n\times n}\), \(N_2=N_{\tau}\), and \(N_3=N_h\), Lemma \ref{lem:Sta_multi} can be applied to derive a stability condition for system \eqref{p8a}. In the following, a numerical example is presented to illustrate the feasibility of the derived stability condition.

\textit{Example A3:}
Consider system \eqref{p8a} with matrices
\[
N=\begin{pmatrix}
	0 & 1\\
	-2 & 0.1
\end{pmatrix}, \quad
N_{\tau}=\begin{pmatrix}
	0 & 0\\
	1 & 0
\end{pmatrix}, \quad
N_{h}=\begin{pmatrix}
	0 & 0\\
	0.2 & 0
\end{pmatrix},
\]
and \(\tau=1.2\).

For \(\tau=1.2\), Remark A2 ensures that system  \eqref{p8a} is asymptotically stable for a maximum allowable time delay \(h = 1.68\).

\subsection{Stabilization of time-delay systems}	
In this section, based on the stability conditions derived in Section \ref{sec1.2.1}, we   investigate the stabilization problem for systems with one or multiple time delays. First, we look at time-delay system \eqref{p6a}. We consider the following two stabilization problems:

{\bf Problem 1:} Let $N\in \mathbb{R}^{n\times n}$ be a given matrix, which may be unstable. Our stabilization problem is to determine a matrix $N_{\tau}\in \mathbb{R}^{n\times n}$ such that the closed-loop system (\ref{p6a}) is asymptotically stable under the following cases:
\begin{itemize}
	\item{\it Case 1}: $\tau>0$ is a known constant delay; and		
	\item{\it Case 2}: $\tau>0$ is unknown but belongs to a known interval, i.e,  $\tau \in [\underline{\tau}, \overline{\tau}]$.
\end{itemize}

{\bf Problem 2:} Let \(N := N_{0,1} + ZN_{0,2} \in \mathbb{R}^{n\times n}\) and \(N_{\tau} := N_{\tau,1} + ZN_{\tau,2} \in \mathbb{R}^{n\times n}\), where \(N_{0,1}, N_{\tau,1}, \in \mathbb{R}^{n\times n}\) and \(N_{0,2},  N_{\tau,2} \in \mathbb{R}^{m\times n}\) are given matrices, and \(Z \in \mathbb{R}^{n\times m}\) is the matrix to be determined.

The stabilization problem is to determine a matrix \(Z \in \mathbb{R}^{n\times m}\) such that the closed-loop system \eqref{p6a} is asymptotically stable under the following cases:
\begin{itemize}
	\item{\it Case 1}: \(\tau > 0\) is a known constant delay; and
	\item{\it Case 2}: \(\tau > 0\) is unknown but belongs to a known interval, i.e., \(\tau \in [\underline{\tau}, \overline{\tau}]\).
\end{itemize}

In the following, based on the stability conditions derived in Section \ref{sec1.2.1}, we establish corresponding stabilization conditions for Problems 1 and 2.

For Problem 1 under Case 1, we employ Lemma \ref{lem:Sta_multi} together with Remark A1 to stabilize system \eqref{p6a}. Accordingly, we adopt the same notations as those introduced in Lemma \ref{lem:Sta_multi}. In addition, for the derivation of the stabilization condition of system \eqref{p6a}, we introduce the following two auxiliary notations:
\begin{align}\label{eq:math-N1}
	\mathcal{N}_1^{\mathsf T} :=
	\big(N \;\; \mathbf 0_{n\times n} \;\; \mathbf 0_{n\times n} \;\; \mathbf 0_{n\times n} \;\; \mathbf 0_{n\times 3n}\;\; -I_n\big) \in \mathbb{R}^{n\times 8n},
\end{align}
and
\begin{align}\label{eq:math-N2}
	\mathcal{N}_2^{\mathsf T} :=
	\big(\mathbf 0_{n\times n}\;\; \mathbf 0_{n\times n}\;\; \mathbf 0_{n\times n} \;\; I_n  \;\; \mathbf 0_{n\times 3n}\;\; \mathbf 0_{n\times n}\big) \in \mathbb{R}^{n\times 8n}.
\end{align}

By setting
\(\tau_1:=\frac{\tau}{3}\), \(\tau_2:=\frac{2\tau}{3}\) , \(\tau_3:=\tau\), and following the same arguments as in the proof of Lemma \ref{lem:Sta_multi}, we derive a stabilization condition for system \eqref{p6a}, as stated in the following lemma.
\begin{lemma}\label{lem:stabilization-constant-delay}
	Suppose there exist a $4n\times 4n$-matrix $P>0$, six  $n\times n$-matrices
	$Q_1> 0$, $Q_2> 0$, $Q_3> 0$, $R_1> 0$, $R_2> 0$, $R_3> 0$,
	a $n\times n $-nonsingular matrix $X$, a $n\times n $ matrix $G$, a scalar $\lambda$ such that the following LMI holds:
	\begin{align}\label{eq:LMI4}
		\Theta_3+\mathrm{sym}\big((v_1+\lambda v_8)X\mathcal{N}_1^{\mathsf T}\big)
		+\mathrm{sym}\big((v_1+\lambda v_8)G\mathcal{N}_2^{\mathsf T}\big)<0,
	\end{align}
	where $\Theta_3$ is given in \eqref{Theta3}. Then, the closed-loop system
	(\ref{p6a}) is asymptotically stable with the controller gain given by
	\begin{align}\label{eq:N-tau}
		N_{\tau}=X^{-1}G.
	\end{align}
\end{lemma}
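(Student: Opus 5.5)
We reduce the statement to Lemma \ref{lem:Sta_multi}, applied in the setting of Remark A1. Take $\tau_1=\tau/3$, $\tau_2=2\tau/3$, $\tau_3=\tau$, $N_1=N_2=\mathbf 0$ and $N_3=N_\tau$. System \eqref{p6a} is then the special case of \eqref{p7a} whose delay matrix is $\mathcal N^{\mathsf T}=\big(N\ \ \mathbf 0\ \ \mathbf 0\ \ N_\tau\ \ \mathbf 0_{n\times 3n}\ \ -I_n\big)$. The first step is to record the splitting
\[
\mathcal N^{\mathsf T}=\mathcal N_1^{\mathsf T}+N_\tau\mathcal N_2^{\mathsf T},
\]
which follows directly from the definitions \eqref{eq:math-N1}--\eqref{eq:math-N2}.

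The second step is to choose the free-weighting matrices in Lemma \ref{lem:Sta_multi} in the restricted form $Y=\lambda X$. Then $v_1X+v_8Y=(v_1+\lambda v_8)X$, and the left side of \eqref{eq:LMI3} becomes
\[
\Theta_3+\mathrm{sym}\big((v_1+\lambda v_8)X\mathcal N_1^{\mathsf T}\big)+\mathrm{sym}\big((v_1+\lambda v_8)XN_\tau\mathcal N_2^{\mathsf T}\big).
\]
The product $XN_\tau$ is the only bilinear term. Since $X$ is nonsingular, we can change variables with $G:=XN_\tau$, which is the same as defining $N_\tau:=X^{-1}G$ as in \eqref{eq:N-tau}. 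After this substitution the expression above is exactly the left side of \eqref{eq:LMI4}.

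Hence, if \eqref{eq:LMI4} holds, then \eqref{eq:LMI3} holds for the system with gain $N_\tau=X^{-1}G$, using the free-weighting matrices $X$ and $Y=\lambda X$. Lemma \ref{lem:Sta_multi} then gives asymptotic stability of \eqref{p7a} with these data, and this system is just \eqref{p6a}.

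The main care point is the term $\mathrm{sym}((v_1X+v_8Y)\mathcal N^{\mathsf T})$ in the proof of Lemma \ref{lem:Sta_multi}. It comes from the identity $0=\mathcal N^{\mathsf T}\xi(t)$, which holds along trajectories of \eqref{p6a} for every choice of $X$ and $Y$. So imposing the restriction $Y=\lambda X$ and fixing the scalar $\lambda$ loses nothing in validity; it only adds conservatism. Nonsingularity of $X$ is needed only to recover $N_\tau$; it is assumed in the statement, and it is in fact implied by the $(1,1)$-type negativity of the LMI. All other terms, $\Theta_3$ and the Lyapunov--Krasovskii functional \eqref{eq:LKF3}, are unchanged, so the rest of the argument follows verbatim.
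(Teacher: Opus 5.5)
Your proposal is correct and matches the paper's proof: the paper substitutes $N_\tau=X^{-1}G$, writes $0=(\mathcal N_1^{\mathsf T}+X^{-1}G\mathcal N_2^{\mathsf T})\xi(t)$, multiplies by $2\xi^{\mathsf T}(t)(v_1+\lambda v_8)X$ and reruns the argument of Lemma~\ref{lem:Sta_multi}, which is exactly your instantiation of that lemma with $N_1=N_2=\mathbf 0$, $N_3=X^{-1}G$ and free-weighting matrices $X$, $Y=\lambda X$. One small correction to your side remark: invertibility of $X$ follows from the $(8,8)$ (i.e.\ $\dot e$) diagonal block of \eqref{eq:LMI4}, namely $\tau_1R_1+(\tau_2-\tau_1)R_2+(\tau_3-\tau_2)R_3-\lambda(X+X^{\mathsf T})<0$, not from the $(1,1)$ block, which on its own does not force $X$ to be invertible.
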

\textit{Proof:} Setting
\(\tau_1:=\frac{\tau}{3}\), \(\tau_2:=\frac{2\tau}{3}\) , \(\tau_3:=\tau\),
and substituting \eqref{eq:N-tau} into system \eqref{p6a}, we obtain
\begin{align}\label{p6a_new}
	&\dot{e}(t)=Ne(t)+X^{-1}{G}e(t-\tau_3),
\end{align}
which implies that
\begin{align}\label{p6a_new-rewritten}
	0=({\mathcal N}_1^{\mathsf T}+X^{-1}{G}{\mathcal N}_2^{\mathsf T})\xi(t).
\end{align}
Multiplying \eqref{p6a_new-rewritten} on the left by $2\xi^{\mathsf T}(t)(v_1+\lambda v_8)X$, we obtain
\begin{align}\label{eq:descriptor4}
	0=2\xi^{\mathsf T}(t)\big(v_1+v_8\lambda \big)X\big(\mathcal{N}_1^{\mathsf T}+X^{-1}{G}\mathcal{N}_2^{\mathsf T}\big)\xi(t),
\end{align}	
which further yields
\begin{align}\label{eq:descriptor4-1}
	0=2\xi^{\mathsf T}(t)\Big(\big(v_1+\lambda v_8 \big)X\mathcal{N}_1^{\mathsf T}+\big(v_1+\lambda v_8 \big)G\mathcal{N}_2^{\mathsf T}\Big)\xi(t).
\end{align}	
By using \eqref{eq:descriptor4-1}  and the same arguments as in the proof of Lemma \ref{lem:Sta_multi}, we conclude that the closed-loop system \eqref{p6a} with $N_{\tau}$ in \eqref{eq:N-tau} is asymptotically stable. This completes the proof of Lemma \ref{lem:stabilization-constant-delay}.

Similarly, by employing Lemma \ref{lem:Sta_interval_delay}, we establish a stabilization condition for system \eqref{p6a} in Case 2, i.e. we solve Problem 1 for the case where the time-delay is unknown but belongs to a known interval. To this end, besides the notation used in deriving a stability condition in Lemma \ref{lem:Sta_interval_delay}, we introduce the following two notations:

\begin{align}\label{eq:math-N1-interval}
	\mathcal{N}_1^{\mathsf T} :=
	\big(N\;\; \mathbf 0_{n\times n} \;\; \mathbf 0_{n\times n} \;\; \mathbf 0_{n\times n} \;\; \mathbf 0_{n\times 3n}\;\; -I_n\big) \in \mathbb{R}^{n\times 8n},
\end{align}
and
\begin{align}\label{eq:math-N2-interval}
	\mathcal{N}_2^{\mathsf T} :=
	\big(\mathbf 0_{n\times n}\;\; \mathbf 0_{n\times n} \;\; I_n \;\; \mathbf 0_{n\times n} \;\; \mathbf 0_{n\times 3n}\;\; \mathbf 0_{n\times n}\big) \in \mathbb{R}^{n\times 8n}.
\end{align}

A stabilizability condition for system (\ref{p6a}) under Problem 1, Case 2 is stated in the following lemma.
\begin{lemma}\label{lem:Stabilization_interval_delay}
	Suppose there exist a $4n\times 4n$-matrix $P>0$, five  $n\times n$-matrices
	$Q_1> 0$, $Q_2> 0$, $Q_3> 0$, $R_1> 0$, $R_2> 0$,
	a $n\times n $-nonsingular matrix $X$, a $n\times n $ matrix $G$, a scalar $\lambda$ such that the following LMIs hold:	
	\begin{align}
		&\Theta_1(\tau)+\mathrm{sym}\big((v_1+\lambda v_8)X\mathcal{N}_1^{\mathsf T}\big)
		+\mathrm{sym}\big((v_1+\lambda v_8)G\mathcal{N}_2^{\mathsf T}\big)\nonumber\\
		&\hspace{1cm}<0,\  \tau \in \{\underline{\tau}, \overline{\tau}\}, \label{eq:LMI5}\\
		&\Xi(\tilde{R}_2,S)>0, \label{eq:LMI_Xi5}
	\end{align}
	where $\Theta_1(\tau)$ and $\Xi(\tilde{R}_2,S)$ are defined in \eqref{Theta_1} and \eqref{eq:LMI_Xi1}, respectively.  	
	Then, the closed-loop system (\ref{p6a}) is asymptotically stable for any delay \(\tau \in [\underline{\tau}, \overline{\tau}]\), with the controller gain given by
	\begin{align}\label{eq:N-tau2}
		N_{\tau}=X^{-1}G.
	\end{align}	
\end{lemma}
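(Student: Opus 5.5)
The plan is to mirror the proof of Lemma \ref{lem:stabilization-constant-delay}, now built on the interval-delay stability result, Lemma \ref{lem:Sta_interval_delay}. Fix an arbitrary $\tau\in[\underline{\tau},\overline{\tau}]$ and substitute $N_{\tau}=X^{-1}G$ from \eqref{eq:N-tau2} into \eqref{p6a}. This is legitimate because $X$ is nonsingular. With the eight-block vector $\xi(t)$ of Lemma \ref{lem:Sta_interval_delay}, the third block is $e(t-\tau)$ and the eighth is $\dot e(t)$. The closed-loop equation can therefore be written as
\begin{align*}
0=\big(\mathcal{N}_1^{\mathsf T}+X^{-1}G\,\mathcal{N}_2^{\mathsf T}\big)\xi(t),
\end{align*}
with $\mathcal{N}_1,\mathcal{N}_2$ as in \eqref{eq:math-N1-interval}--\eqref{eq:math-N2-interval}. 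I will check explicitly that the identity block in $\mathcal{N}_2^{\mathsf T}$ sits in the third position, so that it picks out $e(t-\tau)$.

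Next, multiply on the left by $2\xi^{\mathsf T}(t)(v_1+\lambda v_8)X$. This gives the zero-valued free-weighting term
\begin{align*}
0=2\xi^{\mathsf T}(t)\Big((v_1+\lambda v_8)X\mathcal{N}_1^{\mathsf T}+(v_1+\lambda v_8)G\mathcal{N}_2^{\mathsf T}\Big)\xi(t).
\end{align*}
The product $X X^{-1}G=G$ is what removes the bilinearity and makes \eqref{eq:LMI5} linear in the decision variables for fixed $\lambda$. This term plays the role of $\mathrm{sym}\big((v_1X+v_8Y)\mathcal{N}^{\mathsf T}\big)$ in \eqref{eq:descriptor1}, with the specific choice $Y=\lambda X$.

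Then I take the Lyapunov--Krasovskii functional \eqref{eq:LKF1} and repeat the derivative computation \eqref{eq:derivative-V1}. I bound the three integral terms by the Wirtinger inequality (Lemma \ref{lem_Sueret}), as in \eqref{eq:Wirtinger_ineq1_1}--\eqref{eq:Wirtinger_ineq1_3}. The two $R_2$ terms are combined via the reciprocally convex inequality (Lemma \ref{lem_Park}), using \eqref{eq:LMI_Xi5} with $\alpha=(\tau-\underline{\tau})/(\overline{\tau}-\underline{\tau})$. Adding the zero term yields
\begin{align*}
\dot V(t,e_t)\le \xi^{\mathsf T}(t)\Big(\Theta_1(\tau)+\mathrm{sym}\big((v_1+\lambda v_8)X\mathcal{N}_1^{\mathsf T}\big)+\mathrm{sym}\big((v_1+\lambda v_8)G\mathcal{N}_2^{\mathsf T}\big)\Big)\xi(t).
\end{align*}

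The matrix in parentheses depends on $\tau$ only through $\Pi_1(\tau)$, where it enters affinely. Negativity at the two vertices in \eqref{eq:LMI5} therefore extends to every $\tau\in[\underline{\tau},\overline{\tau}]$ by convexity. The lower bound \eqref{eq:lambda_min1} together with the Lyapunov--Krasovskii theorem then gives asymptotic stability.

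The main obstacle is the reciprocally convex step at the endpoints $\tau=\underline{\tau}$ or $\tau=\overline{\tau}$, where one of the subintervals degenerates and $\alpha\in(0,1)$ fails. There the corresponding integral term and normalized-integral component of $\xi$ vanish or become undefined. I would handle this either by proving the bound for interior $\tau$ and treating the endpoints by a direct single-interval Wirtinger bound, or by redefining the degenerate averaged component of $\xi$ as $e(t-\tau)$. In both cases the remaining quadratic form is still dominated by the same negative-definite matrix.
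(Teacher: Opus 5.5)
Your proposal is correct and follows the paper's own proof. You substitute $N_{\tau}=X^{-1}G$ to obtain the zero term \eqref{eq:descriptor4-1} with $\mathcal{N}_1,\mathcal{N}_2$ from \eqref{eq:math-N1-interval}--\eqref{eq:math-N2-interval} (in effect $Y=\lambda X$ in \eqref{eq:descriptor1}), and then rerun the argument of Lemma \ref{lem:Sta_interval_delay}: Wirtinger bounds, the reciprocally convex combination, affine dependence on $\tau$, and the Lyapunov--Krasovskii theorem. Your handling of the degenerate subintervals at $\tau\in\{\underline{\tau},\overline{\tau}\}$ is a detail the paper silently skips, and either of your fixes is valid.
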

\textit{Proof:} Similar to the proof of Lemma \ref{lem:stabilization-constant-delay}, for the closed-loop system
\begin{align}\label{p6a_new2}
	&\dot{e}(t)=Ne(t)+X^{-1}{G}e(t-\tau),
\end{align}
we also obtain the equation \eqref{eq:descriptor4-1} where $\mathcal{N}_1^{\mathsf T},\mathcal{N}_2^{\mathsf T} $ are defined in \eqref{eq:math-N1-interval}, \eqref{eq:math-N2-interval}, respectively. By the same arguments as in the proof of Lemma \ref{lem:Sta_interval_delay}, we conclude that the closed-loop system \eqref{p6a}, with $N_{\tau}$ given in \eqref{eq:N-tau2}, is asymptotically stable  for any delay \(\tau \in [\underline{\tau}, \overline{\tau}]\). This completes the proof of Lemma
\ref{lem:Stabilization_interval_delay}.\\

For Problem 2 under Case 1, we replace the notations in \eqref{eq:math-N1} and \eqref{eq:math-N2} with the following ones:
\begin{align}\label{eq:math-N1-new}
	\mathcal{N}_1^{\mathsf T} :=
	\big(N_{0,1}\;\; \mathbf 0_{n\times n} \;\; \mathbf 0_{n\times n} \;\; N_{\tau,1} \;\; \mathbf 0_{n\times 3n}\;\; -I_n\big) \in \mathbb{R}^{n\times 8n},
\end{align}
and
\begin{align}\label{eq:math-N2-new}
	\mathcal{N}_2^{\mathsf T} :=
	\big(N_{0,2}\;\; \mathbf 0_{m\times n}\;\; \mathbf 0_{m\times n} \;\; N_{\tau,2}  \;\; \mathbf 0_{m\times 3n}\;\; \mathbf 0_{m\times n}\big) \in \mathbb{R}^{m\times 8n}.
\end{align}
Then, by employing the free-weighting matrix technique, it follows that, for any $n\times n $-matrix $X$, and any $n\times m $-matrix $G$,
\begin{align}
	0 = {} & \xi_0^{\mathsf T}(t) \Bigl( \mathrm{sym} \bigl( (v_1 + \lambda v_8) X \mathcal{N}_1^{\mathsf T} \bigr) \nonumber \\
	& + \mathrm{sym} \bigl( (v_1 + \lambda v_8) G \mathcal{N}_2^{\mathsf T} \bigr) \Bigr) \xi_0(t). \label{eq:free-weighting-matrix2}
\end{align}
Following the same line of reasoning as in Lemma \ref{lem:stabilization-constant-delay}, we obtain a stabilizability condition for system \eqref{p6a} in Problem 2, Case 1 and it is stated in the following Lemma.

\begin{lemma}\label{lem:output-stabilization-constant-delay}
	Suppose there exist a $4n\times 4n$-matrix $P>0$, six  $n\times n$-matrices
	$Q_1> 0$, $Q_2> 0$, $Q_3> 0$, $R_1> 0$, $R_2> 0$, $R_3> 0$,
	a $n\times n $-nonsingular matrix $X$, a $n\times m $-matrix $G$, and a scalar $\lambda$ such that the following LMI holds:
	\begin{align}\label{eq:LMI6}
		\Theta_3+\mathrm{sym}\big((v_1+\lambda v_8)X\mathcal{N}_1^{\mathsf T}\big)
		+\mathrm{sym}\big((v_1+\lambda v_8)G\mathcal{N}_2^{\mathsf T}\big)<0,
	\end{align}
	where $\Theta_3$ is defined in \eqref{Theta3}. Then, the closed-loop system (\ref{p6a}) is asymptotically stable with the controller gain given by
	\begin{align}\label{eq:ZZ1}
		Z=X^{-1}G.
	\end{align}		
\end{lemma}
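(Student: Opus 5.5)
The argument follows the proof of Lemma \ref{lem:stabilization-constant-delay}, with the delay-free part also depending on $Z$. Fix $\tau_1:=\frac{\tau}{3}$, $\tau_2:=\frac{2\tau}{3}$, $\tau_3:=\tau$, and view system \eqref{p6a} with $N=N_{0,1}+ZN_{0,2}$ and $N_\tau=N_{\tau,1}+ZN_{\tau,2}$ as a special case of \eqref{p7a} with $N_1=N_2=\mathbf 0$ and $N_3=N_\tau$ (Remark A1). Then take the Lyapunov--Krasovskii functional \eqref{eq:LKF3} of Lemma \ref{lem:Sta_multi}, together with the augmented vector $\xi(t)$ and the vectors $v_i$ already introduced there.

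\textbf{Step 1: algebraic identity.} With $Z=X^{-1}G$ and the notation \eqref{eq:math-N1-new}--\eqref{eq:math-N2-new}, the closed-loop dynamics read
\[
0=N_{0,1}e(t)+N_{\tau,1}e(t-\tau)-\dot e(t)+X^{-1}G\big(N_{0,2}e(t)+N_{\tau,2}e(t-\tau)\big).
\]
Since the fourth block of $\xi(t)$ is $e(t-\tau_3)=e(t-\tau)$, this is exactly $0=(\mathcal N_1^{\mathsf T}+X^{-1}G\mathcal N_2^{\mathsf T})\xi(t)$. Left-multiplying by $2\xi^{\mathsf T}(t)(v_1+\lambda v_8)X$ cancels $X^{-1}$, which is legitimate because $X$ is nonsingular. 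The result is the zero identity \eqref{eq:free-weighting-matrix2} evaluated along trajectories. Its two terms are bilinear-free in the decision variables $X$ and $G$, which is why \eqref{eq:LMI6} is an LMI for fixed $\lambda$.

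\textbf{Step 2: derivative bound.} Differentiate $V$ as in \eqref{eq:derivative-V3}. This step does not use the system matrices, so it is identical to the stability case. Bound the three integral terms with the Wirtinger inequality (Lemma \ref{lem_Sueret}), as in \eqref{eq:Wirtinger_ineq3_1}--\eqref{eq:Wirtinger_ineq3_3}. Adding the zero identity from Step 1 gives
\[
\dot V(t,e_t)\le \xi^{\mathsf T}(t)\Big(\Theta_3+\mathrm{sym}\big((v_1+\lambda v_8)X\mathcal N_1^{\mathsf T}\big)+\mathrm{sym}\big((v_1+\lambda v_8)G\mathcal N_2^{\mathsf T}\big)\Big)\xi(t),
\]
and the right-hand side is negative for $\xi(t)\neq 0$ by \eqref{eq:LMI6}. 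Together with the lower bound \eqref{eq:lambda_min3}, the Lyapunov--Krasovskii theorem then gives asymptotic stability.

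The main obstacle is the bookkeeping in Step 1. One must check that the block positions in $\mathcal N_1^{\mathsf T}$ and $\mathcal N_2^{\mathsf T}$ (first block for $e(t)$, fourth for $e(t-\tau)$, last for $\dot e(t)$) match $\xi(t)$. One must also confirm that nonsingularity of $X$ is what allows recovering $Z$. Beyond that, I would expect a strict, not merely nonpositive, bound to be needed to conclude via the Lyapunov--Krasovskii theorem. This follows from the strict LMI, since $\xi(t)\neq 0$ whenever $e(t)\neq0$, giving $\dot V\le -\epsilon\|e(t)\|^2$ for some $\epsilon>0$.
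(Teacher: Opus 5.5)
Your proposal is correct and follows essentially the same route as the paper, which simply reuses the proof of Lemma \ref{lem:stabilization-constant-delay}. Both set $\tau_1=\tau/3$, $\tau_2=2\tau/3$, $\tau_3=\tau$, use the functional \eqref{eq:LKF3} with the Wirtinger bounds, and add the free-weighting zero identity obtained by left-multiplying $(\mathcal N_1^{\mathsf T}+X^{-1}G\mathcal N_2^{\mathsf T})\xi(t)=0$ by $2\xi^{\mathsf T}(t)(v_1+\lambda v_8)X$. Your version is slightly more careful than the paper's on two points: this identity holds only along closed-loop trajectories with $Z=X^{-1}G$, not for arbitrary $X$ and $G$, and the strict LMI yields $\dot V\le-\epsilon\|e(t)\|^2$ rather than merely $\dot V\le 0$.
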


For Problem 2 under Case 2, we introduce the following notations:
\begin{align}\label{eq:math-N1-interval2}
	\mathcal{N}_1^{\mathsf T} :=
	\big(N_{0,1}\;\; \mathbf 0_{n\times n}  \;\;
	N_{\tau,1}\;\; \mathbf 0_{n\times n} \;\; \mathbf 0_{n\times 3n}\;\; -I_n\big) \in \mathbb{R}^{n\times 8n},
\end{align}
and
\begin{align}\label{eq:math-N2-interval2}
	\mathcal{N}_2^{\mathsf T} :=
	\big(N_{0,2}\;\; \mathbf 0_{m\times n} \;\;
	N_{\tau,2}\;\;  \mathbf 0_{m\times n} \;\; \mathbf 0_{m\times 3n}\;\; \mathbf 0_{m\times n}\big) \in \mathbb{R}^{m\times 8n}.
\end{align}
Similar to the free-weighting-matrix technique used in Lemma \ref{lem:output-stabilization-constant-delay}, we obtain the equation \eqref{eq:free-weighting-matrix2} for $\mathcal{N}_1^{\mathsf T},\mathcal{N}_2^{\mathsf T} $ defined in \eqref{eq:math-N1-interval2}, \eqref{eq:math-N2-interval2}, respectively. Combining this result with  Lemma \ref{lem:Sta_interval_delay}, we derive a stabilizability condition for system \eqref{p6a} in Problem 2, Case 2, as stated in the following lemma.
\begin{lemma}\label{lem:output-Stabilization_interval_delay}
	Suppose that there exist a ${4n\times 4n}$ matrix \(P>0\), five ${n\times n}$ matrices $Q_1>0$, $Q_2>0$, $Q_3>0$, $R_1>0$, $R_2>0$, a nonsingular matrix \(X\in \mathbb{R}^{n\times n}\), a matrix \(G \in \mathbb{R}^{n\times m}\), a ${2n\times 2n}$ matrix \(S\), and a scalar $\lambda$
	such that the following LMIs hold:
	\begin{align}
		&\Theta_1(\tau) + \mathrm{sym}\big((v_1 + \lambda v_8)X\mathcal{N}_1^{\mathsf T}\big) \nonumber \\
		&\quad + \mathrm{sym}\big((v_1 + \lambda v_8)G\mathcal{N}_2^{\mathsf T}\big) < 0, \quad \tau \in \{\underline{\tau}, \overline{\tau}\}, \label{eq:nLMI7} \\
		&\Xi(\tilde{R}_2, S) > 0, \label{eq:LMI_Xi7}
	\end{align}
	where $\Theta_1(\tau)$ and $\Xi(\tilde{R}_2,S)$ are defined in \eqref{Theta_1} and \eqref{eq:LMI_Xi1}, respectively. Then,
	the closed-loop system (\ref{p6a}) is asymptotically stable for any delay \(\tau \in [\underline{\tau}, \overline{\tau}]\), with the controller gain given by
	\begin{align}\label{eq:ZZ2}
		Z=X^{-1}G.
	\end{align}	
\end{lemma}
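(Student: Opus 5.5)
The proof follows the pattern of Lemma \ref{lem:Stabilization_interval_delay}, with the free-weighting identity built from the structured closed-loop matrices of Problem 2. With $Z=X^{-1}G$, the closed-loop system (\ref{p6a}) reads
\[
\dot e(t)=(N_{0,1}+ZN_{0,2})e(t)+(N_{\tau,1}+ZN_{\tau,2})e(t-\tau).
\]
Using the augmented vector $\xi(t)$ of Lemma \ref{lem:Sta_interval_delay}, in which the third block is $e(t-\tau)$ and the eighth block is $\dot e(t)$, together with the notation \eqref{eq:math-N1-interval2}--\eqref{eq:math-N2-interval2}, this dynamics is exactly the identity
\[
0=\big(\mathcal N_1^{\mathsf T}+X^{-1}G\,\mathcal N_2^{\mathsf T}\big)\xi(t).
\]
I will left-multiply this identity by $2\xi^{\mathsf T}(t)(v_1+\lambda v_8)X$. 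Since $X$ is nonsingular, the factor $X$ cancels $X^{-1}$, which gives
\[
0=\xi^{\mathsf T}(t)\Big(\mathrm{sym}\big((v_1+\lambda v_8)X\mathcal N_1^{\mathsf T}\big)+\mathrm{sym}\big((v_1+\lambda v_8)G\mathcal N_2^{\mathsf T}\big)\Big)\xi(t).
\]
This is the interval-delay version of \eqref{eq:free-weighting-matrix2}. It is the only place where the structure of Problem 2 enters, and it replaces the term $\mathrm{sym}((v_1X+v_8Y)\mathcal N^{\mathsf T})$ in Lemma \ref{lem:Sta_interval_delay}. In effect, this amounts to choosing $Y=\lambda X$.

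Next, I take the Lyapunov--Krasovskii functional \eqref{eq:LKF1} and repeat the computation \eqref{eq:derivative-V1}. I bound the $R_1$ integral by the Wirtinger inequality (Lemma \ref{lem_Sueret}). For the two $R_2$ integrals over $[t-\tau,t-\underline\tau]$ and $[t-\overline\tau,t-\tau]$, I apply Lemma \ref{lem_Sueret} and then the reciprocally convex inequality (Lemma \ref{lem_Park}) with $\alpha=(\tau-\underline\tau)/(\overline\tau-\underline\tau)$; this step requires \eqref{eq:LMI_Xi7}. Adding the zero term above gives
\[
\dot V(t,e_t)\le \xi^{\mathsf T}(t)\,\Psi(\tau)\,\xi(t),
\]
where $\Psi(\tau)$ denotes the left-hand side of \eqref{eq:nLMI7}.

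The main care point is the delay dependence. Only $\Pi_1(\tau)$ depends on $\tau$, and it does so affinely. The matrices $\mathcal N_1$ and $\mathcal N_2$ do not depend on $\tau$, because $e(t-\tau)$ always occupies the fixed block $v_3$. Hence $\Psi(\tau)$ is affine in $\tau$, and negativity at the two vertices $\tau\in\{\underline\tau,\overline\tau\}$ gives $\Psi(\tau)<0$ on the whole interval $[\underline\tau,\overline\tau]$. The degenerate endpoints $\tau=\underline\tau$ or $\tau=\overline\tau$, where one subinterval has zero length, are handled as in Lemma \ref{lem:Sta_interval_delay}: the corresponding integral term vanishes, and the bound still holds in the limit.

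Finally, the lower bound \eqref{eq:lambda_min1} and the Lyapunov--Krasovskii theorem \cite{Fridman:14} give asymptotic stability for every $\tau\in[\underline\tau,\overline\tau]$ with $Z=X^{-1}G$. One point to check is that the LMI \eqref{eq:nLMI7} is linear in its unknowns for a fixed $\lambda$. It is, since the unknowns $X$ and $G$ enter linearly.
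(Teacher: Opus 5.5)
Your proposal is correct and takes essentially the paper's route. The paper also obtains the free-weighting identity \eqref{eq:free-weighting-matrix2} for the interval-delay matrices $\mathcal N_1,\mathcal N_2$, via the multiplier $(v_1+\lambda v_8)X$ and the cancellation $XX^{-1}G=G$, and then reruns the Lyapunov--Krasovskii, Wirtinger, reciprocally convex and vertex-convexity argument of Lemma~\ref{lem:Sta_interval_delay}. Your remarks that $\mathcal N_1,\mathcal N_2$ do not depend on $\tau$, and your handling of the endpoints $\tau\in\{\underline{\tau},\overline{\tau}\}$ (where one subinterval has zero length), only spell out details the paper leaves implicit.
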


\textit{Example A4:} Consider Problem 1 for system \eqref{p6a} with the matrix $$N=\begin{pmatrix}
	0.2 & 0  &  0\\
	0.2 &  0.1 & -0.1\\
	0   & 0.2 & 0.15
\end{pmatrix}.$$ Noting that, in this example, the matrix $N$ is unstable.

For Case 1, by applying Lemma \ref{lem:stabilization-constant-delay}, we  obtain a maximum allowable delay $\tau = 4.8$ and the matrix  $$N_{\tau}=\begin{pmatrix}
	-0.2033  &  0.0001  & -0.0004\\
	-0.1619 &  -0.1403  &  0.0839\\
	0.0195 &  -0.1707 &  -0.1751
\end{pmatrix},$$
which guarantees that the closed-loop system  is asymptotically stable.

For Case 2, by applying Lemma \ref{lem:Stabilization_interval_delay}, we obtain an allowable delay interval $\tau\in [2, 4.78] $ and the matrix  $$N_{\tau}=\begin{pmatrix}
	-0.2066  &  0.0004 &   0.0003\\
	-0.1570  & -0.1460  &  0.0833\\
	0.0174  & -0.1751  & -0.1731
\end{pmatrix},$$
which ensures that the closed-loop system  is asymptotically stable for
any time-delay $\tau\in [2, 4.78].$

\textit{Example A5:}  Consider Problem 2 for system \eqref{p6a} with the matrices
\begin{gather*}
	N_{0,1} = \begin{pmatrix}
		0.2 & 0  &  0 \\
		0.2 &  0.1 & -0.1 \\
		0   & 0.2 & 0.15
	\end{pmatrix}, \quad
	N_{0,2} = \begin{pmatrix} 0 & 0 & 0 \end{pmatrix}, \\
	N_{\tau,1} = \mathbf{0}_{3\times 3}, \quad	
	N_{\tau,2} = \begin{pmatrix} 1 & 2 & 3 \end{pmatrix}.
\end{gather*}
We aim to determine a matrix $Z\in \mathbb{R}^{3\times 1}$ such that the closed-loop system
\begin{align}\label{p6a_closed}
	&\dot{e}(t)=(N_{0,1} + ZN_{0,2})e(t)+(N_{\tau,1} + ZN_{\tau,2})e(t-\tau),
\end{align}
is asymptotically stable. \\

For Case 1, by applying Lemma \ref{lem:output-stabilization-constant-delay}, we obtain a maximum allowable delay $\tau = 2.2$ and the matrix  $$Z=\begin{pmatrix}
	-0.0568\\ 	-0.0726\\   	-0.0601
\end{pmatrix},$$
which guarantees that the closed-loop system \eqref{p6a_closed}  is asymptotically stable.

For Case 2, by applying Lemma \ref{lem:output-Stabilization_interval_delay}, we obtain an allowable delay interval $\tau\in [1, 2.1] $ and the matrix
$$Z=\begin{pmatrix}
	-0.0646 \\	-0.0860 \\	-0.0612
\end{pmatrix},$$
which ensures that the closed-loop system \eqref{p6a_closed}  is asymptotically stable for any delay $\tau\in [1, 2.1].$	

It can be observed that, the solution to Problem 1 yields a larger allowable delay bound or a wider range of allowable delays compared with the solution to Problem 2.

Next, we consider the output-feedback stabilization problem for system \eqref{p8a} with two constant time delays $0<\tau<h$. To this end, let
\begin{align*} 
	N &:= N_{0,1} + Z_0N_{0,2}, \quad N_{\tau} := N_{\tau,1} + Z_{\tau}N_{\tau,2}, \\
	N_{h} &:= N_{h,1} + Z_{h}N_{h,2}.
\end{align*}
where \(N_{0,1}, N_{\tau,1}, N_{h,1},  \in \mathbb{R}^{n\times n}\), \(N_{0,2}, N_{\tau,2}, N_{h,2}  \in \mathbb{R}^{m\times n}\) are given matrices, and \(Z_0, Z_{\tau}$, $Z_{h} \in \mathbb{R}^{n\times m}\) are matrices to be designed.

{\bf Problem 3}:  Determine three matrices \(Z_0, Z_{\tau}, Z_{h} \in \mathbb{R}^{n\times m}\) such that the closed-loop system \eqref{p8a} is asymptotically stable.

We introduce the following notations:
\begin{align}
	\mathcal{N}_1^{\mathsf T} &:= \big(N_{0,1} \, \mathbf{0}_{n\times n} \, N_{\tau,1} \, N_{h,1} \, \mathbf{0}_{n\times 3n} \, -I_n\big) \in \mathbb{R}^{n\times 8n}, \label{eq:math-N1-two-delay} \\
	\mathcal{N}_{2,0}^{\mathsf T} &:= \big(N_{0,2} \, \mathbf{0}_{m\times n} \, \mathbf{0}_{m\times n} \, \mathbf{0}_{m\times n} \, \mathbf{0}_{m\times 3n} \, \mathbf{0}_{m\times n}\big) \nonumber \\
	&\quad \in \mathbb{R}^{m\times 8n}, \label{eq:math-N2-0} \\
	\mathcal{N}_{2,\tau}^{\mathsf T} &:= \big(\mathbf{0}_{m\times n} \, \mathbf{0}_{m\times n} \, N_{\tau,2} \, \mathbf{0}_{m\times n} \, \mathbf{0}_{m\times n} \, \mathbf{0}_{m\times 3n} \, \mathbf{0}_{m\times n}\big) \nonumber \\
	&\quad \in \mathbb{R}^{m\times 8n}, \label{eq:math-N2-tau} \\
	\mathcal{N}_{2,h}^{\mathsf T} &:= \big(\mathbf{0}_{m\times n} \, \mathbf{0}_{m\times n} \, \mathbf{0}_{m\times n} \, N_{h,2} \, \mathbf{0}_{m\times 3n} \, \mathbf{0}_{m\times n}\big) \nonumber \\
	&\quad \in \mathbb{R}^{m\times 8n}. \label{eq:math-N2-h}
\end{align}
By setting \(\tau_1=\frac{\tau}{2}\), \(\tau_2=\tau\), \(\tau_3=h\)  and employing Lemma \ref{lem:Sta_multi}, Remark A2, and the free-weighting matrix technique, we derive an output-feedback stabilization condition for system \eqref{p8a}, as stated in the following lemma.
\begin{lemma}\label{lem:output-Stabilization-two-delay}
	Suppose there exist a $4n\times 4n$-matrix $P>0$, six  $n\times n$-matrices
	$Q_1> 0$, $Q_2> 0$, $Q_3> 0$, $R_1> 0$, $R_2> 0$, $R_3> 0$,
	a nonsingular matrix \(X\in \mathbb{R}^{n\times n}\), three matrices \(G_0, G_{\tau}, G_h \in \mathbb{R}^{n\times m}\), and a scalar $\lambda$ such that the following LMI holds:
	\begin{align}\label{eq:LMI7}
		&\Theta_3+\mathrm{sym}\big((v_1+\lambda v_8)X\mathcal{N}_1^{\mathsf T}\big)
		+\mathrm{sym}\big((v_1+\lambda v_8)G_0\mathcal{N}_{2,0}^{\mathsf T}\big)\nonumber\\
		&	+\mathrm{sym}\big((v_1+\lambda v_8)G_{\tau}\mathcal{N}_{2,\tau}^{\mathsf T}
		+\mathrm{sym}\big((v_1+\lambda v_8)G_{h}\mathcal{N}_{2,h}^{\mathsf T}\nonumber\\
		&<0,
	\end{align}
	where $\Theta_3$ is defined in \eqref{Theta3}. Then, the closed-loop system \eqref{p8a} is asymptotically stable with the controller gains given by
	\begin{align}\label{eq:Z-1}
		Z_0=X^{-1}G_0, \ Z_{\tau}=X^{-1}G_{\tau},\ Z_h=X^{-1}G_h.
	\end{align}	
\end{lemma}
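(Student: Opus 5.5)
The proof will follow the pattern of Lemma \ref{lem:stabilization-constant-delay}, now specialized to the two-delay system \eqref{p8a} embedded in the three-delay framework of Lemma \ref{lem:Sta_multi} via Remark A2. Set $\tau_1=\frac{\tau}{2}$, $\tau_2=\tau$, $\tau_3=h$, and take the Lyapunov--Krasovskii functional \eqref{eq:LKF3} with these delays. The vector $\xi(t)$ is built as in Lemma \ref{lem:Sta_multi}, so $v_1,v_2,v_3,v_4$ select $e(t),e(t-\tau/2),e(t-\tau),e(t-h)$ and $v_8$ selects $\dot e(t)$. (The block lengths in \eqref{eq:math-N2-tau}--\eqref{eq:math-N2-h} should be read so that $N_{\tau,2}$ and $N_{h,2}$ occupy the third and fourth block columns, each block row having total width $8n$.)

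First substitute the gains \eqref{eq:Z-1} into \eqref{p8a}. The closed-loop equation then reads
\[
0=\big(\mathcal N_1^{\mathsf T}+X^{-1}G_0\mathcal N_{2,0}^{\mathsf T}+X^{-1}G_\tau\mathcal N_{2,\tau}^{\mathsf T}+X^{-1}G_h\mathcal N_{2,h}^{\mathsf T}\big)\xi(t).
\]
This holds because $N=N_{0,1}+Z_0N_{0,2}$ multiplies $v_1^{\mathsf T}\xi$, $N_\tau$ multiplies $v_3^{\mathsf T}\xi=e(t-\tau)$, $N_h$ multiplies $v_4^{\mathsf T}\xi=e(t-h)$, and $-I$ multiplies $\dot e(t)$. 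Next, left-multiply by $2\xi^{\mathsf T}(t)(v_1+\lambda v_8)X$. Since $X$ is nonsingular, the $X^{-1}$ factors cancel, and we get the zero identity
\[
0=\xi^{\mathsf T}(t)\Big(\mathrm{sym}\big((v_1+\lambda v_8)X\mathcal N_1^{\mathsf T}\big)+\sum_{\kappa\in\{0,\tau,h\}}\mathrm{sym}\big((v_1+\lambda v_8)G_\kappa\mathcal N_{2,\kappa}^{\mathsf T}\big)\Big)\xi(t).
\]
This is the analogue of \eqref{eq:descriptor4-1}.

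Then repeat the proof of Lemma \ref{lem:Sta_multi} word for word. The derivative \eqref{eq:derivative-V3} of $V$ along trajectories is a quadratic form in $\xi$ plus three negative integral terms. Bound those integrals with the Wirtinger inequality (Lemma \ref{lem_Sueret}) on $[t-\tau_1,t]$, $[t-\tau_2,t-\tau_1]$ and $[t-\tau_3,t-\tau_2]$, which gives $\dot V\le \xi^{\mathsf T}\Theta_3\xi$. Adding the zero identity above yields $\dot V\le\xi^{\mathsf T}(t)\,\Psi\,\xi(t)$, where $\Psi$ is the left side of \eqref{eq:LMI7}. Since $\Psi<0$, we obtain $\dot V\le -\epsilon\|\xi(t)\|^2\le-\epsilon\|e(t)\|^2$ for some $\epsilon>0$. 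Combining this with the lower bound \eqref{eq:lambda_min3} and the Lyapunov--Krasovskii theorem gives asymptotic stability of \eqref{p8a}.

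The main obstacle is conceptual rather than computational: the design matrices must enter the LMI linearly. The essential point is to use a single common slack matrix $X$ with the fixed direction $(v_1+\lambda v_8)$, so that the products $XZ_\kappa$ can be renamed as $G_\kappa$. The price is that $\lambda$ is a fixed scalar, found by line search, and $X$ must be invertible. Invertibility follows from the LMI itself: the $(8,8)$ block of \eqref{eq:LMI7} is $\lambda(-X-X^{\mathsf T})$ plus positive semidefinite terms $(\tau_1R_1+\cdots)$, so negativity forces $\lambda\neq0$ and $\lambda\,\mathrm{sym}(X)>0$, hence $X$ is nonsingular. I would record this remark to justify the hypothesis. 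The remaining care is bookkeeping: each $\mathcal N_{2,\kappa}$ must place its block in the column matching the correct delayed state, and then the rest is routine.
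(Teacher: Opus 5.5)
Your proposal is correct and is the argument the paper has in mind. The paper gives no written proof: it only says to set $\tau_1=\tau/2$, $\tau_2=\tau$, $\tau_3=h$ and combine Lemma~\ref{lem:Sta_multi} with the free-weighting step of Lemma~\ref{lem:stabilization-constant-delay}. You carry this out: substitute $Z_\kappa=X^{-1}G_\kappa$, left-multiply the closed-loop identity by $2\xi^{\mathsf T}(t)(v_1+\lambda v_8)X$, and add the resulting zero term to the Wirtinger bound $\dot V\le\xi^{\mathsf T}\Theta_3\xi$.

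Your two side remarks are also sound. The $9n$ block width of $\mathcal N_{2,\tau}^{\mathsf T}$ is a genuine typo in the paper. The $(8,8)$ block $-\lambda(X+X^{\mathsf T})+\tau_1R_1+(\tau_2-\tau_1)R_2+(\tau_3-\tau_2)R_3<0$ does force $\mathrm{sym}(\lambda X)>0$, so the nonsingularity of $X$ follows from the LMI and need not be assumed.
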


\textit{Example A6:}
Consider Problem 3 for system \eqref{p8a} with the matrices
\[
N_{0,1}=2,\quad N_{0,2}=0,\quad
N_{\tau,1}=0,\quad N_{\tau,2}=1,\quad
N_{h,1}=0,\]
$N_{h,2}=1$
and \(0<\tau<h\).

We aim to determine \(Z_0, Z_{\tau}, Z_{h} \in \mathbb{R}^{1\times 1}\), the largest possible time delay \(\tau\), and a time delay \(h>\tau\) such that the closed-loop system \eqref{p8a} is asymptotically stable.

By Lemma \ref{lem:output-Stabilization-two-delay}, we obtain \(\tau=0.595\), \(h=0.8\), \(Z_0=0\), \(Z_{\tau}=-3.1605\), and \(Z_h=1.1556\), which ensure that the closed-loop system \eqref{p8a} is asymptotically stable.

Notably, in the case where \(N_{h,2}=0\), i.e., when stabilizing system \eqref{p8a} using only a delayed feedback term, Lemma \ref{lem:output-stabilization-constant-delay} yields a smaller allowable time delay \(\tau=0.495\). This demonstrates that employing two delayed feedback terms can increase the admissible time delay for system stabilization.

\textit{Example A7:}
Consider Problem 3 for system \eqref{p8a} with the matrices
\[
N_{0,1}=\begin{pmatrix}
	0.2 & 0   & 0\\
	0.2 & 0.1 & -0.1\\
	0   & 0.2 & 0.15
\end{pmatrix}, \quad
N_{0,2}=\begin{pmatrix}
	0 & 0 & 0
\end{pmatrix},
\]
\[
N_{\tau,1}=N_{h,1}=\mathbf{0}_{3\times 3}, \quad
N_{\tau,2}=N_{h,2}=\begin{pmatrix}1 & 2 & 3\end{pmatrix},
\]
and \(0<\tau<h\). We aim to determine matrices \(Z_0, Z_{\tau}, Z_{h} \in \mathbb{R}^{3\times 1}\), the largest possible time delay \(\tau\), and a time delay \(h>\tau\) such that the closed-loop system \eqref{p8a} is asymptotically stable.

By Lemma \ref{lem:output-Stabilization-two-delay}, we obtain \(\tau=2.43\), \(h=2.7\), and
\[
Z_0=\begin{pmatrix}
	0\\
	0\\
	0
\end{pmatrix}, \quad
Z_{\tau}=\begin{pmatrix}
	-0.2184\\
	-0.2402\\
	-0.1099
\end{pmatrix}, \quad
Z_h=\begin{pmatrix}
	0.1554\\
	0.1633\\
	0.0524
\end{pmatrix},
\]
which ensure that the closed-loop system \eqref{p8a} is asymptotically stable.

Notably, in the case where \(N_{h,2}=\mathbf{0}_{1\times 3}\), i.e., when stabilizing system \eqref{p8a} using only a delayed feedback term, Lemma \ref{lem:output-stabilization-constant-delay} yields a smaller allowable time delay \(\tau=2.2\) (see Example A5). Again, this demonstrates that employing two delayed feedback terms can increase the admissible time delay for system stabilization.

Lastly, we consider the output-feedback stabilization problem for system \eqref{p7a} with three constant time delays satisfying \(0<\tau_1<\tau_2<\tau_3\). To this end, let
\[
N := N_{0,1} + Z_0N_{0,2}, \quad
N_1 := N_{1,1} + Z_1N_{1,2},
\]
\[
N_2 := N_{2,1} + Z_2N_{2,2}, \quad
N_3 := N_{3,1} + Z_3N_{3,2},
\]
where

\(N_{0,1}, N_{1,1}, N_{2,1}, N_{3,1} \in \mathbb{R}^{n\times n}\),
\(N_{0,2}, N_{1,2}, N_{2,2}, N_{3,2} \in \mathbb{R}^{m\times n}\) are given matrices,
and \(Z_0, Z_1, Z_2, Z_3 \in \mathbb{R}^{n\times m}\) are matrices to be designed.\\

\noindent\textbf{Problem 4:} Determine matrices \(Z_0, Z_1, Z_2, Z_3 \in \mathbb{R}^{n\times m}\) such that the closed-loop system \eqref{p7a} is asymptotically stable.

We introduce the following notations:
\begin{align}
	\mathcal{N}_1^{\mathsf T} &:= \big(N_{0,1} \, N_{1,1} \, N_{2,1} \, N_{3,1} \, \mathbf{0}_{n\times 3n} \, -I_n\big) \in \mathbb{R}^{n\times 8n}, \label{eq:math-N1-three-delay} \\
	\mathcal{N}_{2,0}^{\mathsf T} &:= \big(N_{0,2} \, \mathbf{0}_{m\times n} \, \mathbf{0}_{m\times n} \, \mathbf{0}_{m\times n} \, \mathbf{0}_{m\times 3n} \, \mathbf{0}_{m\times n}\big) \nonumber \\
	&\quad \in \mathbb{R}^{m\times 8n}, \label{eq:math-N2-0} \\
	\mathcal{N}_{2,1}^{\mathsf T} &:= \big(\mathbf{0}_{m\times n} \, N_{1,2} \, \mathbf{0}_{m\times n} \, \mathbf{0}_{m\times n} \, \mathbf{0}_{m\times 3n} \, \mathbf{0}_{m\times n}\big) \nonumber \\
	&\quad \in \mathbb{R}^{m\times 8n}, \label{eq:math-N2-1} \\
	\mathcal{N}_{2,2}^{\mathsf T} &:= \big(\mathbf{0}_{m\times n} \, \mathbf{0}_{m\times n} \, N_{2,2} \, \mathbf{0}_{m\times n} \, \mathbf{0}_{m\times 3n} \, \mathbf{0}_{m\times n}\big) \nonumber \\
	&\quad \in \mathbb{R}^{m\times 8n}, \label{eq:math-N2-2} \\
	\mathcal{N}_{2,3}^{\mathsf T} &:= \big(\mathbf{0}_{m\times n} \, \mathbf{0}_{m\times n} \, \mathbf{0}_{m\times n} \, N_{3,2} \, \mathbf{0}_{m\times 3n} \, \mathbf{0}_{m\times n}\big) \nonumber \\
	&\quad \in \mathbb{R}^{m\times 8n}. \label{eq:math-N2-3}
\end{align}

By employing Lemma \ref{lem:Sta_multi} together with the free-weighting matrix technique, we derive an output-feedback stabilization condition for system \eqref{p7a}, as stated in the following lemma.

\begin{lemma}\label{lem:output-Stabilization-three-delay}
	Suppose there exist a \(4n\times 4n\) matrix \(P>0\), six \(n\times n\) matrices
	\(Q_1> 0\), \(Q_2> 0\), \(Q_3> 0\), \(R_1> 0\), \(R_2> 0\), \(R_3> 0\),
	a matrix \(X\in \mathbb{R}^{n\times n}\), and four matrices \(G_0, G_{1}, G_2, G_3 \in \mathbb{R}^{n\times m}\),
	and a scalar $\lambda$ such that the following LMI holds:
	\begin{align}\label{eq:LMI8}
		\Theta_3
		&+\mathrm{sym}\big((v_1+\lambda v_8)X\mathcal{N}_1^{\mathsf T}\big)
		+\mathrm{sym}\big((v_1+\lambda v_8)G_0\mathcal{N}_{2,0}^{\mathsf T}\big)\nonumber\\
		&+\mathrm{sym}\big((v_1+\lambda v_8)G_{1}\mathcal{N}_{2,1}^{\mathsf T}\big)
		+\mathrm{sym}\big((v_1+\lambda v_8)G_{2}\mathcal{N}_{2,2}^{\mathsf T}\big)\nonumber\\
		&+\mathrm{sym}\big((v_1+\lambda v_8)G_{3}\mathcal{N}_{2,3}^{\mathsf T}\big)
		< 0,
	\end{align}
	where \(\Theta_3\) is defined in \eqref{Theta3}. Then, the closed-loop system \eqref{p7a} is asymptotically stable with the controller gains given by	
	\begin{align}
		Z_0 &= X^{-1}G_0, \quad Z_1 = X^{-1}G_1, \nonumber \\
		Z_2 &= X^{-1}G_2, \quad Z_3 = X^{-1}G_3. \label{eq:Z-1}
	\end{align}
\end{lemma}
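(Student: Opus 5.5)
The plan follows the proof of Lemma \ref{lem:stabilization-constant-delay}, with the single-gain free-weighting identity replaced by a four-gain version. Throughout, the stability argument of Lemma \ref{lem:Sta_multi} is used for system \eqref{p7a}, keeping its notation ($v_i$, $\xi(t)$, $\Theta_3$) and the Lyapunov--Krasovskii functional \eqref{eq:LKF3}.

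\emph{Step 1: nonsingularity of $X$.} Unlike the earlier stabilization lemmas, the statement does not assume $X$ nonsingular, so this must be shown first. Look at the $(8,8)$ block of \eqref{eq:LMI8}. The contributions are:
\begin{itemize}
\item from $\Theta_3$: $\tau_3 R_3$-type terms, namely $\tau_1R_1+(\tau_2-\tau_1)R_2+(\tau_3-\tau_2)R_3$;
\item from $\mathrm{sym}\bigl((v_1+\lambda v_8)X\mathcal{N}_1^{\mathsf T}\bigr)$: the term $-\lambda(X+X^{\mathsf T})$;
\item from the $G_i\mathcal N_{2,i}^{\mathsf T}$ terms: nothing, since their last block column is zero.
\end{itemize}
Hence negativity of the $(8,8)$ block gives $\lambda(X+X^{\mathsf T})>\sum (\tau_i-\tau_{i-1})R_i>0$. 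Therefore $\lambda\neq 0$ and $\lambda X+\lambda X^{\mathsf T}>0$, which implies $X$ is nonsingular. (If $Xv=0$, then $v^{\mathsf T}\lambda(X+X^{\mathsf T})v=0$, a contradiction.) So the gains $Z_i=X^{-1}G_i$ are well defined.

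\emph{Step 2: closed-loop identity.} With $N=N_{0,1}+Z_0N_{0,2}$ and $N_i=N_{i,1}+Z_iN_{i,2}$, system \eqref{p7a} reads
\[
0=\Bigl(\mathcal N_1^{\mathsf T}+\sum_{j\in\{0,1,2,3\}}X^{-1}G_j\mathcal N_{2,j}^{\mathsf T}\Bigr)\xi(t).
\]
This holds because $\mathcal N_1^{\mathsf T}\xi$ collects $N_{0,1}e(t)+\sum_i N_{i,1}e(t-\tau_i)-\dot e(t)$, and each $\mathcal N_{2,j}^{\mathsf T}\xi$ picks out $N_{j,2}$ times the matching delayed state. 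I need to check that the block positions in \eqref{eq:math-N1-three-delay}--\eqref{eq:math-N2-3} line up with the ordering of $\xi(t)$ (entries $e(t),e(t-\tau_1),e(t-\tau_2),e(t-\tau_3)$ first, $\dot e$ last). Left-multiplying by $2\xi^{\mathsf T}(t)(v_1+\lambda v_8)X$ cancels the $X^{-1}$ and yields the zero identity
\[
0=\xi^{\mathsf T}\Bigl[\mathrm{sym}\bigl((v_1+\lambda v_8)X\mathcal N_1^{\mathsf T}\bigr)+\sum_j\mathrm{sym}\bigl((v_1+\lambda v_8)G_j\mathcal N_{2,j}^{\mathsf T}\bigr)\Bigr]\xi .
\]

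\emph{Step 3: derivative bound.} As in the proof of Lemma \ref{lem:Sta_multi}, differentiate $V$, bound the three integral terms via Lemma \ref{lem_Sueret} as in \eqref{eq:Wirtinger_ineq3_1}--\eqref{eq:Wirtinger_ineq3_3}, and add the zero identity from Step 2. This gives $\dot V\le \xi^{\mathsf T}\Psi\xi$, where $\Psi$ is the left side of \eqref{eq:LMI8}. Since $\Psi<0$, and using the lower bound \eqref{eq:lambda_min3}, the Lyapunov--Krasovskii theorem gives asymptotic stability.

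The main obstacle is Step 1, since every other step is bookkeeping copied from the earlier proofs. If an index misalignment shows up in Step 2 (for instance in $\mathcal N_{2,0}$), I would redefine the selectors consistently with $\xi(t)$, since the argument itself does not depend on it.
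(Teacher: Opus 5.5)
Your proposal is correct and is essentially the paper's argument. The paper gives no separate proof of this lemma; it defers to the proof of Lemma \ref{lem:stabilization-constant-delay} (substitute $Z_j=X^{-1}G_j$, then left-multiply the closed-loop identity by $2\xi^{\mathsf T}(t)(v_1+\lambda v_8)X$ to obtain the zero free-weighting term), combined with the Lyapunov--Krasovskii and Wirtinger bounds of Lemma \ref{lem:Sta_multi}, exactly as in your Steps 2--3. The block positions of $\mathcal N_1^{\mathsf T}$ and $\mathcal N_{2,j}^{\mathsf T}$ do match $\xi(t)$ here, so no redefinition is needed.

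Your Step 1 is a genuine improvement. Unlike the earlier stabilization lemmas, this statement does not assume $X$ nonsingular, yet the paper tacitly uses $X^{-1}$. Your $(8,8)$-block bound $\lambda(X+X^{\mathsf T})>\tau_1R_1+(\tau_2-\tau_1)R_2+(\tau_3-\tau_2)R_3>0$ correctly shows that $X^{-1}$ exists.
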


\textit{Example A8:} 
Consider Problem 4 for system \eqref{p7a} where
$
N := N_{0,1} + Z_0N_{0,2}, $
$	N_1 := N_{1,1} + Z_1N_{1,2},$ \
$N_2 := N_{2,1} + Z_2N_{2,2},$ \
$	N_3 := N_{3,1} + Z_3N_{3,2},
$ with the matrices

\begin{align*}
	N_{0,1} &= \begin{pmatrix}
		0.2 & 0   & 0 \\
		0.2 & 0.1 & -0.1 \\
		0   & 0.2 & 0.15
	\end{pmatrix}, \, N_{1,1} = N_{2,1} = N_{3,1} = \mathbf{0}_{3\times 3}, \\
	N_{0,2} &= \begin{pmatrix} 0 & 0 & 0 \end{pmatrix}, \, N_{1,2} = N_{2,2} = \begin{pmatrix} 1 & 2 & 3 \end{pmatrix}, \\
	N_{3,2} &= \begin{pmatrix} 1 & 0 & 0 \end{pmatrix}.
\end{align*}
and \(\tau_1=3.65,\ \tau_2=3.7,\ \tau_3=3.75\).

We aim to determine matrices \(Z_0, Z_{1}, Z_{2}, Z_3 \in \mathbb{R}^{3\times 1}\), such that the closed-loop system \eqref{p7a} is asymptotically stable.

By Lemma \ref{lem:output-Stabilization-three-delay}, we obtain
\begin{align*}
	Z_0 &= \begin{pmatrix} 0 \\ 0 \\ 0 \end{pmatrix}, \quad
	Z_1 = \begin{pmatrix} -0.1159 \\ -0.3993 \\ -0.9907 \end{pmatrix}, \\
	Z_2 &= \begin{pmatrix} 0.1158 \\ 0.3869 \\ 0.9271 \end{pmatrix}, \quad
	Z_3 = \begin{pmatrix} -0.2249 \\ -0.1297 \\ 0.0445 \end{pmatrix}.
\end{align*}
which ensure that the closed-loop system \eqref{p7a} is asymptotically stable.

Compared with Example A7, where system \eqref{p7a} is stabilized with $\tau = 2.43$, the present example shows that system \eqref{p7a} remains stable for a larger delay of $\tau_1 = 3.65$. This indicates that, compared to using only two delayed feedback terms, employing three delayed feedback terms can enlarge the admissible delay bound for system stabilization.

\end{document}